\renewcommand{\phi}{\varphi}
\renewcommand{\epsilon}{\varepsilon}
\DeclareMathSymbol{\shortminus}{\mathbin}{AMSa}{"39}
\definecolor{colA}{RGB}{0,80,250}
\definecolor{colB}{RGB}{60,200,230}
\definecolor{colC}{RGB}{60,180,75}
\definecolor{colD}{RGB}{230,25,75}
\definecolor{colE}{RGB}{245,130,48}
\definecolor{colF}{RGB}{145,30,180}
\definecolor{colG}{RGB}{240,50,230}
\definecolor{colH}{RGB}{170,110,40}
\newcommand{\StructA}{\mathbf{A}}
\newcommand{\StructB}{\mathbf{B}}
\newcommand{\StructC}{\mathbf{C}}
\newcommand{\Ee}{\mathcal{E}}
\newcommand{\CSP}[1]{\mathrm{CSP}(#1)}
\newcommand{\restrict}[2]{#1|_{#2}}
\newcommand{\dom}[1]{\mathrm{dom}(#1)}
\newcommand{\bbN}{\mathbb{N}}
\newcommand{\bbZ}{\mathbb{Z}}
\newcommand{\Sym}[1]{S_{#1}}
\newcommand{\Qq}{\mathcal{Q}}
\newcommand{\Nn}{\mathcal{N}}
\newcommand{\Ll}{\mathcal{L}}
\newcommand{\Gg}{\mathcal{G}}
\newcommand{\Kk}{\mathcal{K}}
\newcommand{\Mm}{\mathcal{M}}
\newcommand{\Pp}{\mathcal{P}}
\newcommand{\FO}{\mathrm{FO}}
\newcommand{\BP}{\mathrm{BP}}
\renewcommand{\Sym}{\mathbf{Sym}}
\newcommand{\CSPtext}{\text{\upshape CSP}}
\newcommand{\NCSP}{\Nn_{\CSPtext}}
\DeclareMathOperator{\id}{id}
\newcommand{\Uu}{\mathcal{U}}
\DeclareMathOperator{\ar}{ar}
\title{ Arity hierarchies for quantifiers closed under partial polymorphisms} 
\author{Anuj Dawar}{Department of Computer Science and Technology, University of Cambridge, UK}{anuj.dawar@cl.cam.ac.uk}{}{}
\author{Lauri Hella}{Faculty of Information Technology and Communication Sciences, Tampere University, Finland}{lauri.hella@tuni.fi}{}{}
\author{Benedikt Pago}{Department of Computer Science and Technology, University of Cambridge, UK}{benedikt.pago@cl.cam.ac.uk}{}{}
\keywords{finite model theory, constraint satisfaction problems, generalized quantifiers} 
\authorrunning{A. Dawar, L. Hella, B. Pago} 
\begin{document}
	
	\maketitle
\begin{abstract}
We investigate the expressive power of generalized quantifiers closed under partial polymorphism conditions motivated by the study of constraint satisfaction problems.  We answer a number of questions 
arising from the work of Dawar and Hella (CSL 2024) where such quantifiers were introduced. 
For quantifiers closed under partial near-unanimity polymorphisms, we establish hierarchy results clarifying the interplay between the arity of the polymorphisms and of the quantifiers:
The expressive power of $(\ell+1)$-ary quantifiers closed under $\ell$-ary partial near-unanimity polymorphisms is strictly between the class of all quantifiers of arity $\ell-1$ and $\ell$.  We also establish an infinite hierarchy based on the arity of quantifiers with a fixed arity of partial near-unanimity polymorphisms.  Finally, we prove inexpressiveness results for quantifiers with a partial Maltsev polymorphism.  The separation results are proved using novel algebraic constructions in the style of Cai-F\"urer-Immerman and the quantifier pebble games of Dawar and Hella (2024).
\end{abstract}

\section{Introduction}
The study of generalized quantifiers, also known as Lindstr\"{o}m quantifiers \cite{ebbinghaus}, has played a major role in the development of finite model theory.  This theory is concerned with the expressive power of logics over finite structures.  When any particular class $\Kk$ of structures is not definable in a logic $L$, we can adjoin to $L$ a Lindstr\"{o}m quantifier $\Qq_\Kk$, which serves as an oracle for querying membership in $\Kk$. Extending $L$ with this quantifier yields, in a precise sense, the minimal extension of $L$ that can also express $\Kk$.  Thus, the investigation of the structure and interdefinability of quantifiers is the core of finite model theory.

Results about inexpressibility, commonly referred to as \emph{lower bound} results for logic, are often established by means of exhibiting winning strategies for model-comparison games.  Some of the most striking such results in finite model theory have come from model-comparison games developed for Lindstr\"{o}m quantifiers.  In particular, Hella's bijection game~\cite{Hella96} was introduced to establish the arity hierarchy for generalized quantifiers.  This game has since been widely used to establish many inexpressibility results in descriptive complexity and beyond. 
Hella's proof that there is a strict hierarchy of expressive power obtained by classifying quantifiers according to their arity has as a consequence that meaningful logics in descriptive complexity (such as those closed under first-order reductions) should include quantifiers of arbitrary arities. Including all quantifiers clearly leads to logics that are far too rich.  This leads us to study extensions of first-order logic with restricted families of quantifiers, but of unbounded arity.

A natural restriction is to require \emph{closure conditions} on the classes of structures $\Kk$ whose Lindström quantifiers $\Qq_\Kk$ we include in the logic. By default, such classes $\Kk$ are closed under isomorphism because logics never distinguish between isomorphic structures. Interestingly, by imposing stricter closure conditions, one can often obtain nicely behaved logics that admit a useful game characterization while still being rich in expressive power (i.e.\ stronger than first-order logic with counting quantifiers). One such example is first-order logic augmented by all linear-algebraic quantifiers~\cite{DGP19}. Its expressive power is characterized by the invertible map game, which was used by Lichter~\cite{Lichter23} in a highly sophisticated lower bound proof. The classes $\Kk$ defining linear-algebraic quantifiers are closed under an equivalence relation coarser than isomorphism, which is based on similarity of adjacency matrices over all finite fields \cite{makowskyFestschrift}.
More generally we can ask: \emph{For which closure conditions on classes $\Kk$ do the Lindström quantifiers $\Qq_\Kk$ give rise to logics that are both powerful and analyzable via a natural model-comparison game?}

The present paper should be seen as a contribution to this research programme. 
The closure condition we are concerned with here is closure under \emph{partial polymorphisms}, which was first studied by Dawar and Hella in 2024~\cite{dawarHella2024}. An $\ell$-ary polymorphism from a structure $\StructA$ to $\StructB$ is a homomorphism from the $\ell$-th power of $\StructA$ to $\StructB$. The interest in the notion comes from the theory of constraint satisfaction problems (CSP).
For a fixed finite structure $\StructB$, called the template, $\CSPtext(\StructB)$ denotes the class of structures $\StructA$ that map homomorphically to $\StructB$. The algebraic approach to CSP relates the computational complexity of $\CSPtext(\StructB)$ to the algebra of polymorphisms of $\StructB$ (that is, polymorphisms from $\StructB$ to itself) and more particularly to the equational theory of that algebra (see e.g.~\cite{barto}). The success of this programme culminated in the celebrated dichotomy theorem for finite-domain CSPs of Bulatov and Zhuk~\cite{Bulatov,Zhuk}.
Natural examples of quantifiers $\Qq_\Kk$ closed under partial polymorphisms are thus CSP quantifiers, that is, $\Kk = \CSPtext(\StructB)$ for a template structure $\StructB$ that also has the respective (partial) polymorphism. Such CSP quantifiers were first introduced by Hella~\cite{Hella23}, where a strict hierarchy of expressive power in terms of the size of $\StructB$ was demonstrated. This size hierarchy is quite different from the arity hierarchy that we establish here. Moreover, it should be stressed that families of quantifiers closed under partial polymorphisms are a more general concept and contain more than just CSP quantifiers.

Our main results are about the interaction of partial polymorphism conditions with restrictions on the arity of quantifiers. The conditions require that the class of structures defining a quantifier is closed under partial functions satisfying certain equational conditions. The conditions we are particularly interested in (already introduced in prior work~\cite{dawarHella2024}) are \emph{partial near-unanimity} polymorphisms and \emph{partial Maltsev} polymorphisms. Their total counterparts play a key role in the classification of tractable CSP, since templates with these polymorphisms give rise to CSP with relatively simple polynomial time algorithms \cite{barto2009constraint, BulatovDalmau2006}. 
The corresponding \emph{partial} polymorphisms, which we use here, are significant for another reason: For NP-complete CSP, they essentially govern the fine-grained exponential complexity under the strong exponential-time hypothesis, and they have applications in parameterized complexity as well (see e.g.\ \cite{JonssonLNZ17, LagerkvistW20, LagerkvistW22}).

Previous work~\cite{dawarHella2024} developed games for logics with quantifiers closed under partial polymorphisms and used these to establish that the problem of solving systems of equations over the Boolean field is not definable in the extension of infinitary logic with \emph{all} quantifiers closed under partial near-unanimity polymorphisms.

In the present paper we answer a number of natural questions arising from this previous work concerning arity hierarchies within the class of quantifiers closed under partial near-unanimity polymorphisms, and also establish the first expressivity bounds for quantifiers closed under partial Maltsev polymorphisms. To be specific, we establish in Section~\ref{sec:collapses} an arity reduction showing that any $r$-ary quantifier $Q$ which is closed under a partial near-unanimity polymorphism of arity $\ell \leq r$ can be defined using quantifiers of arity at most $\lceil \frac{\ell-1}{\ell} \cdot r  \rceil$. 
On the other hand, our main results in Section~\ref{sec:ArityHierarchy} show that the arity of such quantifiers does not collapse, and they exhibit in fact a strict infinite hierarchy of expressive power. First, we prove that quantifiers of arity $\ell+1$ with a partial near-unanimity polymorphism of arity $\ell$ are, in terms of expressive power, strictly between \emph{all} quantifiers of arity $\ell-1$ and all quantifiers of arity $\ell$ (Corollary \ref{cor:nearUnanimityStrictlySandwiched}). Secondly, we establish in Theorem~\ref{thm:hierarchyForFixedL} that for every fixed arity of polymorphisms, we can always gain in expressive power by increasing the arity of quantifiers. The results can be seen as a refinement of the arity hierarchy of Hella~\cite{Hella96}, and the separation results require novel constructions in the style of Cai-Fürer-Immerman \cite{caifurimm92} (called \emph{CFI-like} henceforth). In Section~\ref{sec:Maltsev} we move on to study a more powerful type of quantifiers, which cannot be fooled by CFI-like constructions, namely those with a partial Maltsev polymorphism.
Also for these, we obtain a strict arity hierarchy (Theorem~\ref{thm:maltsevHierarchy}). Moreover, we establish a first lower bound: When both the arity of the quantifiers as well as the number of variables are limited to $k$, Maltsev-closed quantifiers are strictly weaker than general ones
(Theorem~\ref{thm:MaltsevSeparation}). This proof requires playing the Maltsev-closed quantifier pebble game on another original algebraic construction.
It remains an intriguing open problem to prove lower bounds for first-order logic with Maltsev-closed quantifiers of arity $k$, but an unbounded number of variables. We believe that finding a suitable hard example would also inspire progress on other open problems, as it would have to be radically different from the CFI-like constructions typically used in many known lower bound results.

\section{Preliminaries}

We denote structures by boldface letters $\StructA,\StructB,\ldots$, and their universes by the corresponding Roman letters $A,B,\ldots$. All structures we consider in the paper are finite and relational. Thus, we assume without further mention that the universe $A$ of each structure $\StructA$ is finite and the vocabulary of $\StructA$ is a finite set of relation symbols.
Let $\StructA$ and $\StructB$ be $\tau$-structures for a vocabulary $\tau$. A function $f\colon C\to D$ is a \emph{partial isomorphism} from $\StructA$ to $\StructB$ if $C\subseteq A$, $D\subseteq B$ and $f$ is an isomorphism from the substructure of $\StructA$ induced by $C$ to the substructure of $\StructB$ induced by $D$.

Let $\StructA$ and $\StructB$ be $\tau$-structures. We write $\StructA\le\StructB$ if $A=B$ and $R^\StructA\subseteq R^\StructB$ for all $R\in\tau$. The \emph{union} $\StructA\cup\StructB$ of $\StructA$ and $\StructB$ is the $\tau$-structure with universe $A\cup B$ and relations $R^{\StructA\cup\StructB}\coloneqq R^\StructA\cup R^\StructB$ for each $R\in\tau$.

We denote first-order logic by $\FO$, and the infinitary logic with $k$ variables by $\Ll^k$. Furthermore, we write $\Ll\coloneqq \bigcup_{k\in\bbN}\Ll^k$.  We refer to \cite{EF} and \cite{Libkin} for precise definitions of these logics. Given a logic $L$ and two $\tau$-structures $\StructA$ and $\StructB$, we write $\StructA\equiv_L\StructB$ if $\StructA$ and $\StructB$ satisfy exactly the same $L$-sentences of vocabulary $\tau$.
For any positive integer $n$, we denote the set $\{1,\ldots,n\}$ by $[n]$.
\vspace{-1em}
\paragraph*{Generalized quantifiers}

Let $\tau=\{R_1,\ldots,R_m\}$ be a vocabulary and let $\Kk$ be a class of $\tau$-structures that is closed under isomorphisms. The extension $L(Q_\Kk)$ of a logic $L$ by the \emph{generalized quantifier} $Q_\Kk$ (also called \emph{Lindström quantifier}) is obtained by adding the following formula formation rule to the syntax of $L$:
\begin{itemize}
\item[] For any vocabulary $\sigma$, if $\Psi=(\psi_1,\ldots,\psi_m)$ is a tuple of $\sigma$-formulas, and $\bar{y}_1,\ldots,\bar{y}_m$ are tuples of variables such that $|\bar{y}_i|=\ar(R_i)$ for $i\in [m]$, then $Q_\Kk\bar{y}_1, \ldots,\bar{y}_m\Psi$ is a $\sigma$-formula.
A variable is free in this formula if, for some $i\in [m]$, it is free in $\psi_i$, but not in $\bar{y}_i$.
\end{itemize}
The sequence $\Psi=(\psi_1,\ldots,\psi_m)$ of $\sigma$-formulas defines an \emph{interpretation} of a $\tau$-structure in any $\sigma$-structure $\StructA$ with an assignment $\alpha$ that interprets all free variables of $Q_\Kk\bar{y}_1, \ldots,\bar{y}_m\Psi$: we define $\Psi(\StructA,\alpha)\coloneqq (A,\psi_1^{\StructA,\alpha},\ldots,\psi_m^{\StructA,\alpha})$, where $\psi_i^{\StructA,\alpha}\coloneqq \{\bar{b}\in A^{\ar(R_i)}\mid (\StructA,\alpha[\bar{b}/\bar{y}_i])\models\psi_i\}$ for $i\in [m]$.
The semantics of the formula $Q_\Kk\bar{y}_1, \ldots,\bar{y}_m\Psi$ can now be defined as follows:
\begin{itemize}
\item[] Let $\StructA$ be a $\sigma$-structure and $\alpha$ an assignment on $\StructA$ that interprets all free variables of $Q_\Kk\bar{y}_1, \ldots,\bar{y}_m\Psi$. Then $(\StructA,\alpha)\models Q_\Kk\bar{y}_1, \ldots,\bar{y}_m\Psi$ if, and only if, $\Psi(\StructA,\alpha)\in\Kk$.
\end{itemize}
The extension $L(\Qq)$ of a logic $L$ by a collection $\Qq$ of quantifiers is defined in the natural way by adding the corresponding formula formation and semantic rules for each quantifier $Q_\Kk$ in $\Qq$.
The \emph{arity} of a quantifier $Q_\Kk$ is the maximum arity of relations in the vocabulary of the class $\Kk$. For each $r\ge 1$, we denote the collection of all quantifiers of arity at most $r$ by $\Qq_r$.

\paragraph*{Bijective pebble games}

We apply the bijective pebble game of Hella~\cite{Hella96} in proving our arity hierarchy results in Section~\ref{sec:ArityHierarchy}.
Let $r\le k$ be positive integers, and  $\StructA$ and $\StructB$ be structures.

\begin{definition}
The \emph{$k,r$-bijection game} $\BP^k_r(\StructA,\StructB)$ is played between two players, \emph{Duplicator} and \emph{Spoiler}. Positions of the game are pairs $(\alpha,\beta)$, where $\alpha$ and $\beta$ are assignments on $\StructA$ and $\StructB$, respectively, such that $\dom\alpha=\dom\beta\subseteq X\coloneqq\{x_1,\ldots,x_k\}$. The initial position is $(\emptyset,\emptyset)$. 
Let $(\alpha,\beta)$ be the position after some round of the game. The next round consists of the following steps:
\begin{itemize}
\item[(1)] Duplicator chooses a bijection $f\colon A\to B$. If $|A|\not=|B|$, then Spoiler wins the play.
\item[(2)] Spoiler chooses an $r$-tuple $\bar{y}$ of distinct variables in $X$ and an $r$-tuple $\bar a\in A^r$.
\item[(3)] The next position is $(\alpha',\beta')$, where $\alpha'\coloneqq\alpha[\bar{a}/\bar{y}]$ and $\beta'\coloneqq\beta[f(\bar{a})/\bar{y}]$. If $\alpha'\mapsto\beta'$ is not a partial isomorphism $\StructA\to\StructB$, then Spoiler wins the play.
\end{itemize}
Duplicator wins the play if Spoiler does not win it in a finite number of rounds.
\end{definition}

The $k,r$-bijection game characterizes equivalence with respect to the infinitary $k$-variable logic with all $r$-ary quantifiers.

\begin{proposition}[\cite{Hella96}]
 Let $k \ge r$ be positive integers. Then Duplicator has a winning strategy in $\BP^k_r(\StructA,\StructB)$ if, and only if, $\StructA\equiv_{\Ll^k(\Qq_r)}\StructB$.
\end{proposition}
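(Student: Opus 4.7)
The plan is the standard game-logic equivalence: structural induction on formulas for one direction and a back-and-forth argument for the other.

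\emph{Forward direction.} Assume Duplicator has a winning strategy $\sigma$ in $\BP^k_r(\StructA,\StructB)$. By induction on $\phi \in \Ll^k(\Qq_r)$, I show that for every position $(\alpha,\beta)$ reachable while Duplicator plays $\sigma$, with $\dom\alpha = \dom\beta$ containing the free variables of $\phi$, one has $(\StructA,\alpha) \models \phi \iff (\StructB,\beta) \models \phi$. The atomic case follows from the game's requirement that $\alpha \mapsto \beta$ be a partial isomorphism; Boolean and infinitary connectives are routine. For $\phi = Q_\Kk\bar{y}_1,\ldots,\bar{y}_m\Psi$ with $\ar(Q_\Kk) \le r$, let $f\colon A \to B$ be the bijection prescribed by $\sigma$ at $(\alpha,\beta)$. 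For each $i \in [m]$ and each $\bar{a} \in A^{\ar(R_i)}$, extend $\bar{y}_i$ to an $r$-tuple of distinct variables by adjoining dummy variables in $X$ not free in $\psi_i$ (and $\bar{a}$ correspondingly); the resulting position remains reachable under $\sigma$, so the inductive hypothesis on $\psi_i$ yields $\bar{a} \in \psi_i^{\StructA,\alpha} \iff f(\bar{a}) \in \psi_i^{\StructB,\beta}$. Thus $f$ is an isomorphism $\Psi(\StructA,\alpha) \to \Psi(\StructB,\beta)$, and closure of $\Kk$ under isomorphism completes the case.

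\emph{Backward direction.} I will show that $\equiv_{\Ll^k(\Qq_r)}$ forms a back-and-forth system for the game. Suppose $(\StructA,\alpha) \equiv_{\Ll^k(\Qq_r)} (\StructB,\beta)$ and seek a bijection $f\colon A \to B$ such that for every $r$-tuple $\bar{y}$ of distinct variables and every $\bar{a} \in A^r$, $(\StructA,\alpha[\bar{a}/\bar{y}]) \equiv_{\Ll^k(\Qq_r)} (\StructB,\beta[f(\bar{a})/\bar{y}])$. Because $\Ll^k(\Qq_r)$ is infinitary, each realized $\Ll^k(\Qq_r)$-type of an $r$-tuple over $(\StructA,\alpha)$ is captured by a single $\Ll^k(\Qq_r)$-formula $\psi_t(\bar{y})$ (the conjunction over that type); since $A$ is finite only finitely many such types $t_1,\ldots,t_m$ are realized, and the invariant implies the same types appear over $(\StructB,\beta)$. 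Setting $\Psi = (\psi_{t_1},\ldots,\psi_{t_m})$ yields an interpretation into a vocabulary of $m$ relation symbols of arity $r$ whose induced relations partition $A^r$ (resp.\ $B^r$) into type classes. A bijection of the required kind exists precisely when $\Psi(\StructA,\alpha) \cong \Psi(\StructB,\beta)$; otherwise, taking $\Kk$ to be the isomorphism class of $\Psi(\StructA,\alpha)$, the formula $Q_\Kk\bar{y}_1,\ldots,\bar{y}_m\Psi \in \Ll^k(\Qq_r)$ separates $(\StructA,\alpha)$ from $(\StructB,\beta)$, contradicting the invariant. Hence a suitable $f$ exists, so Duplicator can preserve the invariant and never loses.

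\emph{Main obstacle.} The delicate point is the backward direction's packaging of $\Ll^k(\Qq_r)$-types as a finite interpretation of arity $r$: we rely both on the finiteness of $\StructA$ (bounding the number of realized types) and on infinitary conjunction in $\Ll^k$ (collapsing each type into a single formula). One must also verify that the whole construction stays within the $k$-variable budget of $\Ll^k$, which is automatic since $r \le k$ and the free variables from $\dom\alpha$ together with $\bar{y}$ lie in $X$; and that the arity of $Q_\Kk$ is truly at most $r$, which holds because each $\psi_{t_i}$ has at most $r$ quantified variables $\bar{y}_i$. These checks ensure the distinguishing quantifier genuinely lies in $\Qq_r$.
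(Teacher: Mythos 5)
The paper does not prove this proposition; it is cited from Hella (1996). So there is no proof of record here to compare against, and the question is simply whether your sketch is sound. Your overall plan --- structural induction for the forward direction, a back-and-forth/type argument for the backward direction --- is exactly the standard route, but two steps as written have gaps.

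\emph{Forward direction, dummy variables.} You claim one can always adjoin $r-\ar(R_i)$ dummy variables ``in $X$ not free in $\psi_i$.'' This can fail: $\psi_i$ may have free variables outside $\bar y_i$ (those are free in $\phi$ and hence in $\dom\alpha$), and there is nothing preventing $|{\rm free}(\psi_i)| > k - (r-\ar(R_i))$. The standard repair is to allow dummy variables from $\dom\alpha$ and set their new values so the assignment is unchanged; but then, on the $\StructB$ side, you need $f(\alpha(z))=\beta(z)$ for such a $z$, i.e.\ that Duplicator's bijection $f$ extends the partial map $\alpha\mapsto\beta$. This is provable when $r<k$ (if $f$ disagreed on some $z\in\dom\alpha$, Spoiler reassigns a fresh variable $w$ to $\alpha(z)$ while leaving $z$ fixed, breaking preservation of equality), and when $r=k$ the situation degenerates: a bijection surviving a single $r=k$-round must already be an isomorphism on all $k$-tuples, so the whole implication is immediate. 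Without one of these observations the inductive step is not justified.

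\emph{Backward direction, dependence on $\bar y$.} You begin by asking for an $f$ that works \emph{for every} $r$-tuple $\bar y$, which is the right target, but the interpretation $\Psi=(\psi_{t_1},\dots,\psi_{t_m})$ you then build records only the types for a single choice of $\bar y$. The type of $(\StructA,\alpha[\bar a/\bar y])$ genuinely depends on which variables of $\dom\alpha$ get overwritten by $\bar y$. You need the interpretation to include, for each choice of $\bar y$ (finitely many, since $|X|=k$) and each realized type under that $\bar y$, a separate $r$-ary formula (the syntax of $Q_\Kk\bar y_1,\dots,\bar y_m\Psi$ permits each $\psi_i$ to carry its own tuple $\bar y_i$, which is what makes this possible). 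Only then does an isomorphism $\Psi(\StructA,\alpha)\cong\Psi(\StructB,\beta)$ certify that one bijection simultaneously works for all of Spoiler's possible variable choices, and only then does the contrapositive produce a distinguishing sentence in $\Ll^k(\Qq_r)$.

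Both issues are fixable and do not change the architecture of the argument; they are the kind of bookkeeping that the cited source handles explicitly, and any referee would ask you to spell them out.
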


We say that a sentence $\psi$ of some logic separates two disjoint classes $\Kk$ and $\Kk'$ of structures if every structure in $\Kk$ satisfies $\psi$, but no structure in $\Kk'$ satisfies $\psi$.

\begin{corollary}
\label{cor:DuplicatorStrategyImpliesInseparatbilityOfClasses}
Fix a positive integer $r$.
Let $\Kk$ and $\Kk'$ be two disjoint classes of structures.
If for every $k \geq r$, there exist $\StructA \in \Kk$ and $\StructB \in \Kk'$ such that Duplicator has a winning strategy in $\BP^k_r(\StructA,\StructB)$, then no sentence in $\Ll(\Qq_r)$ separates $\Kk$ from $\Kk'$.
\end{corollary}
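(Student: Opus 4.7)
The plan is to reduce the corollary to the proposition by exploiting the fact that any formula in the unbounded-variable logic $\Ll(\Qq_r) = \bigcup_{k \in \bbN} \Ll^k(\Qq_r)$ only uses finitely many variables, so separation in $\Ll(\Qq_r)$ already implies separation in some finite-variable fragment $\Ll^k(\Qq_r)$.

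Concretely, I would argue by contraposition. Suppose $\psi \in \Ll(\Qq_r)$ is a sentence that separates $\Kk$ from $\Kk'$, i.e.\ every structure in $\Kk$ satisfies $\psi$ while no structure in $\Kk'$ does. By definition of $\Ll(\Qq_r)$ as $\bigcup_{k \in \bbN} \Ll^k(\Qq_r)$, there is some $k$ with $\psi \in \Ll^k(\Qq_r)$; by enlarging $k$ if necessary (a sentence in $\Ll^k$ is also in $\Ll^{k'}$ for any $k' \geq k$), I may assume $k \geq r$. By the hypothesis of the corollary, there exist $\StructA \in \Kk$ and $\StructB \in \Kk'$ such that Duplicator wins the game $\BP^k_r(\StructA,\StructB)$. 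Invoking the quoted proposition of Hella, this means $\StructA \equiv_{\Ll^k(\Qq_r)} \StructB$, and in particular $\StructA \models \psi$ if and only if $\StructB \models \psi$. But $\StructA \in \Kk$ forces $\StructA \models \psi$ while $\StructB \in \Kk'$ forces $\StructB \not\models \psi$, a contradiction.

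There is essentially no obstacle here; the corollary is a direct packaging of the game-theoretic characterization together with the observation that each sentence lives at some finite variable count. The only minor care needed is to ensure the $k$ witnessing $\psi \in \Ll^k(\Qq_r)$ can be taken $\geq r$ (otherwise the game $\BP^k_r$ is not defined as stated in the preliminaries), which is trivial by monotonicity in $k$.
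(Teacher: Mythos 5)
Your proof is correct and is exactly the standard argument the paper implicitly relies on: the corollary is left unproved in the paper precisely because it follows immediately from Hella's game characterization together with the observation that $\Ll(\Qq_r)=\bigcup_{k}\Ll^k(\Qq_r)$, and your contrapositive reasoning (including the minor care that $k$ can be taken $\geq r$) is the right way to package this.
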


\paragraph*{Partial polymorphisms}
We go briefly through the basic notions related to families of partial polymorphisms. For a more detailed treatment and motivation, we refer to \cite{dawarHella2024}, where closure of generalized quantifiers with respect to such families was first considered. 

Let $p\colon A^\ell\to A$ be a partial function. Then for any $r\ge 1$, $p$ induces a partial function $\hat{p}\colon (A^r)^\ell\to A^r$ defined by applying $p$ ``column wise'': if $\bar{a}_i=(a_{i1},\ldots,a_{ir})\in A^r$ for $i\in [\ell]$, then $\hat{p}(\bar{a}_1,\ldots,\bar{a}_\ell)\coloneqq (p(a_{11},\ldots,a_{\ell 1}),\ldots,p(a_{1r},\ldots,a_{\ell r}))$. Naturally $\hat{p}(\bar{a}_1,\ldots,\bar{a}_\ell)$ is undefined if, and only if, $p(a_{1j},\ldots,a_{\ell j})$ is undefined for at least one $j\in [r]$. Sometimes we also view the tuples $\bar{a}_1, \dots, \bar{a}_\ell$ as the rows of an $\ell \times r$-matrix $M$ and write $\hat{p}(M)$ for $\hat{p}(\bar{a}_1,\ldots,\bar{a}_\ell)$.
Given a relation $R\subseteq A^r$, we write $p(R):=\{\hat{p}(\bar{a}_1,\ldots,\bar{a}_\ell)\mid \bar{a}_1,\ldots,\bar{a}_\ell\in R\}$. Furthermore, if $\StructA$ is a $\tau$-structure we denote by $p(\StructA)$ the $\tau$-structure with the same universe and relations $p(R^{\StructA})$, $R\in\tau$. 

We say that a partial function $p\colon A^\ell\to A$ is a \emph{partial polymorphism} of $\StructA$ if $p(\StructA)\le\StructA$.
A \emph{family $\Pp$ of $\ell$-ary partial functions} consists of a partial function $p_A\colon A^\ell\to A$ for every finite set $A$. The family \emph{respects bijections} if $f(p_A(a_1,\ldots,a_\ell))\simeq p_B(f(a_1),\ldots,f(a_\ell))$ whenever $f\colon A\to B$ is a bijection. Here $s\simeq t$ means that either $s=t$, or both $s$ and $t$ are undefined.

Let $\Pp$ be a family of partial functions that respects bijections, and let $Q_\Kk$ be a quantifier for a class $\Kk$ of $\tau$-structures. We say that $Q_\Kk$ is \emph{$\Pp$-closed} if for every $\StructB\in\Kk$ and every $\StructA\le\StructB\cup p_B(\StructB)$, we also have $\StructA\in\Kk$.
We denote the collection of all $\Pp$-closed quantifiers by $\Qq^\Pp$. 
It follows immediately from the definition that all $\Pp$-closed quantifiers $Q_\Kk$ are \emph{downwards monotone}: If $\StructB\in\Kk$ and $\StructA\le\StructB$, then $\StructA\in\Kk$.

The most interesting families of partial functions are based on equations on the components of the input tuple. Two examples of such families were studied in \cite{dawarHella2024}:

\begin{example}
~\\
\vspace{-1em}
\begin{enumerate}[label=(\alph*)]
\item The \emph{Maltsev} family $\Mm$ consists of functions $m_A\colon A^3\to A$ such that $m_A(a,a,b)=m_A(b,a,a)=b$ and $m_A(a,b,c)$ is undefined if $a\not=b$ and $b\not=c$.

\item The family $\Nn^\ell$ of $\ell$-ary \emph{partial near-unanimity functions} consists of functions $n^\ell_A\colon A^\ell\to A$ defined as follows: $n^\ell_A(a_1,\ldots,a_\ell)= a$ if at least $\ell-1$ of the inputs $a_i$ are equal to $a$; else, $n^\ell_A(a_1,\ldots,a_\ell)$ is undefined.
\end{enumerate}
Note that the definition of $n^\ell_A(a_1,\dots,a_\ell)$ is actually independent of the set $A$. Thus, we omit the subscript $A$ and write simply $n^\ell$ instead of $n^\ell_A$. 
\end{example}
It is straightforward to show that $n^\ell$ is a partial polymorphism of any structure with at most $\ell-1$-ary relations:

\begin{lemma}[{\cite[Lemma 9]{dawarHella2024}}]
\label{lem:TrivialPartPoly}
Let $\StructA$ be a $\tau$-structure such that $\ar(R)<\ell$ for all $R\in\tau$. Then $n^\ell$ is a partial polymorphism of $\StructA$.
\end{lemma}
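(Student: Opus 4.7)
The plan is to show directly that for every relation symbol $R \in \tau$, one has $n^\ell(R^{\StructA}) \subseteq R^{\StructA}$, by arguing that whenever $\hat{n}^\ell$ applied to $\ell$ tuples from $R^{\StructA}$ is defined, its value actually coincides with one of those input tuples. The arity bound $\ar(R) < \ell$ will enter through a pigeonhole argument on columns.

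Fix $R \in \tau$ of arity $r < \ell$, and let $\bar{a}_1,\dots,\bar{a}_\ell \in R^{\StructA}$ with $\bar{a}_i = (a_{i1},\dots,a_{ir})$. Arrange them as the rows of an $\ell \times r$ matrix $M$, and suppose $\hat{n}^\ell(M)$ is defined. Then in every column $j \in [r]$, at least $\ell - 1$ of the entries $a_{1j},\dots,a_{\ell j}$ agree; call their common value $c_j$, so that $\hat{n}^\ell(M) = (c_1,\dots,c_r)$. For each column $j$ let
\[
S_j \coloneqq \{\, i \in [\ell] \mid a_{ij} \neq c_j \,\},
\]
which has size at most $1$ by definition of $n^\ell$.

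Now the key step: since $r < \ell$, by pigeonhole there exists an index $i^\star \in [\ell] \setminus \bigcup_{j=1}^r S_j$. For this $i^\star$ we have $a_{i^\star, j} = c_j$ for every $j \in [r]$, and therefore $\hat{n}^\ell(\bar{a}_1,\dots,\bar{a}_\ell) = \bar{a}_{i^\star} \in R^{\StructA}$. Hence $n^\ell(R^{\StructA}) \subseteq R^{\StructA}$, which is exactly the condition $n^\ell(\StructA) \le \StructA$.

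There is no serious obstacle here; the only thing to be careful about is the bookkeeping in the pigeonhole step, namely that each $|S_j| \le 1$ so that $|\bigcup_j S_j| \le r < \ell$ leaves at least one ``unanimous'' row index $i^\star$ available. This is what forces the conclusion that the output of $\hat{n}^\ell$ is not some new tuple but literally one of the given tuples of $R^{\StructA}$.
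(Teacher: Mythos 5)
Your proof is correct. The paper itself does not reprove this lemma but cites it from Dawar and Hella (CSL 2024), and the pigeonhole argument you give—each column has at most one ``minority'' row, so with fewer than $\ell$ columns some row $i^\star$ escapes all the sets $S_j$ and hence $\hat{n}^\ell$ reproduces $\bar{a}_{i^\star}$—is exactly the standard argument for this fact, matching the definition $n^\ell(\StructA)\le\StructA$ used in the paper.
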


\paragraph*{CSP quantifiers}

Let $\StructA$ and $\StructB$ be $\tau$-structures. A function $h\colon A\to B$ is a \emph{homomorphism} from $\StructA$ to $\StructB$ if for all $R\in\tau$ and $\bar{a}\in A^{\ar(R)}$, $\bar{a}\in R^\StructA$ implies that $h(\bar{a})\in R^\StructB$. We write $\StructA\to \StructB$ if there exists a homomorphism from $\StructA$ to $\StructB$.
For any fixed $\tau$-structure $\StructB$ we denote the class $\{\StructA\mid \StructA\to \StructB\}$ by $\CSP{\StructB}$. The class $\CSP{\StructB}$ is usually regarded as the (non-uniform) \emph{constraint satisfaction problem} (CSP): Given an input structure $\StructA$, decide whether there exists a homomorphism from $\StructA$ to $\StructB$. Here $\StructB$ is the \emph{template} of the CSP.
It is a consequence of the results of Bulatov and Zhuk~\cite{Bulatov,Zhuk} that, for any finite template $\StructB$, the corresponding CSP is either decidable in polynomial time or NP-complete, and which of these is the case depends only on the polymorphisms of $\StructB$.
A generalized quantifier $Q_\Kk$ is a \emph{CSP quantifier} if the defining class $\Kk$ is of the form $\CSP{\StructB}$ for some template $\StructB$. 
For CSP quantifiers, being closed under partial polymorphisms is a natural notion:
If $\StructB$ is a structure which has a (total) polymorphism that satisfies an identity, this yields a natural \emph{partial} polymorphism of the class of structures $\CSP\StructB$. Thus, quantifiers closed under this partial polymorphism are a generalization of CSP quantifiers based on a structure with the corresponding total polymorphism.
The following collections of CSP quantifiers are of particular interest to us.  
\begin{itemize}
\item For $r\ge 1$, $\Qq^{\text{CSP}}_r \subseteq \Qq_r$ is the collection of all CSP quantifiers of arity at most $r$. 
\item For $\ell\ge 3$, $\Qq^{\NCSP^\ell}\subseteq\Qq^{\Nn^\ell}$ is the collection of all $\Nn^\ell$-closed CSP quantifiers.
\item For $\ell\ge 3$ and $r\ge 1$, $\Qq^{\NCSP^\ell}_r\coloneqq\Qq^{\NCSP^\ell}\cap\Qq_r$.
\item $\Qq^{\Mm}$ is the collection of all quantifiers closed under the partial Maltsev polymorphism, and $\Qq^{\Mm}_r \coloneqq \Qq^{\Mm} \cap \Qq_r$, for every $r \geq 1$.
\end{itemize}	
Dawar and Hella \cite[Proposition 11]{dawarHella2024} proved that if $p_B$ is a partial polymorphism of $\StructB$, then  $Q_{\CSP{\StructB}}$ is $\Pp$-closed, assuming that $\Pp$ is projective\footnote{See \cite[Definition~6]{dawarHella2024} for the definition of projective.}. In particular, this holds for the family $\Nn_\ell$ for any $\ell\ge 3$. We conclude this section by proving a converse of this result. A relational structure is a \emph{core} if every homomorphism from it to itself is an automorphism. If a structure is not a core itself, it has a unique (up to isomorphism) substructure that is (see~\cite[Proposition~3.3]{Fagin2005}).
	
	\begin{lemma}
		\label{lem:closureOfClassImpliesTemplateClosure}
		Let $\Pp$ be a family of partial functions. Assume that $\StructB$ is a core such that the corresponding CSP quantifier $Q_{\CSP{\StructB}}$ is $\Pp$-closed. Then $p_B$ is a partial polymorphism of $\StructB$. 
	\end{lemma}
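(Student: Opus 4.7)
The plan is to feed the whole superstructure $\mathbf{B}\cup p_B(\mathbf{B})$ through the hypothesis, obtain a homomorphism into $\mathbf{B}$, and then use the core property to force this homomorphism to be an automorphism, which pulls the ``new'' tuples back into the old relations.

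\emph{Step 1: membership of the combined structure in the class.} Observe that $\mathbf{B}\in\CSP{\mathbf{B}}$, witnessed by the identity homomorphism. Since $Q_{\CSP{\mathbf{B}}}$ is $\mathcal{P}$-closed, the definition of $\mathcal{P}$-closure (applied with $\mathbf{B}$ itself playing the role of the large structure) yields $\mathbf{A}\in\CSP{\mathbf{B}}$ for every $\mathbf{A}\le\mathbf{B}\cup p_B(\mathbf{B})$. In particular the whole structure $\mathbf{B}\cup p_B(\mathbf{B})$ lies in $\CSP{\mathbf{B}}$, so there exists a homomorphism $h\colon\mathbf{B}\cup p_B(\mathbf{B})\to\mathbf{B}$.

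\emph{Step 2: exploit the core assumption.} Because $\mathbf{B}\le\mathbf{B}\cup p_B(\mathbf{B})$ (same universe, fewer relations), the same function $h$ is also a homomorphism $\mathbf{B}\to\mathbf{B}$. By the core hypothesis on $\mathbf{B}$, every such self-homomorphism is an automorphism; hence $h$ is a bijection on $B$ that preserves and reflects every relation of $\mathbf{B}$.

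\emph{Step 3: pull the extra tuples back.} Let $R\in\tau$ and $\bar{c}\in p_B(R^{\mathbf{B}})$. Then $\bar{c}\in R^{\mathbf{B}\cup p_B(\mathbf{B})}$, so the homomorphism $h$ sends $\bar{c}$ to a tuple $h(\bar{c})\in R^{\mathbf{B}}$. But $h$ is an automorphism of $\mathbf{B}$, hence also reflects $R^{\mathbf{B}}$, so $\bar{c}\in R^{\mathbf{B}}$ as well. Since $R$ was arbitrary, this shows $p_B(R^{\mathbf{B}})\subseteq R^{\mathbf{B}}$ for all $R\in\tau$, i.e., $p_B(\mathbf{B})\le\mathbf{B}$, which is exactly the statement that $p_B$ is a partial polymorphism of $\mathbf{B}$.

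The only real content of the argument is the first step — recognising that $\mathcal{P}$-closure, applied to $\mathbf{B}\in\CSP{\mathbf{B}}$ itself, automatically places the enriched structure $\mathbf{B}\cup p_B(\mathbf{B})$ inside the class; everything after that is a standard use of the core property, so no separate obstacle is expected.
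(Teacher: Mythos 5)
Your proof is correct and takes essentially the same route as the paper's: apply $\Pp$-closure to place $\StructB\cup p_B(\StructB)$ in $\CSP{\StructB}$, obtain a homomorphism $h$ into $\StructB$, invoke the core property to force $h$ to be an automorphism, and conclude $p_B(\StructB)\le\StructB$. The only cosmetic difference is in the final step, where the paper phrases the contradiction via the finite bijection $h$ induces on the tuples of $R^{\StructB}$, while you invoke directly that automorphisms reflect relations; these are the same fact.
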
	
	\begin{proof}
	We trivially have $\StructB \in \CSP{\StructB}$, and since $Q_{\CSP{\StructB}}$ is $\Pp$-closed, also $\StructB\cup p_B(\StructB) \in \CSP{\StructB}$. So there is a homomorphism $h\colon \StructB\cup p_B(\StructB) \to \StructB$. Now $h$ maps $R^\StructB$ to itself for every relation $R$ in the vocabulary $\tau$ of $\StructB$, and because $\StructB$ is a core, $h$ must be a permutation of $B$. But then we must have 
$p_B(\StructB) \le \StructB$ because otherwise, for some $R\in\tau$, $h$ would map a tuple in $R^{p_B(\StructB)} \setminus R^{\StructB}$ to a tuple in $R^{\StructB}$, which is not possible because $h$ induces a bijection between the tuples in $R^{\StructB}$. 
	\end{proof}

\begin{corollary}
\label{cor:quantifierClosedIffTemplateClosed}
Let $\StructB$ be a structure. The CSP quantifier $Q_{\CSP{\StructB}}$ is in $\Qq^{\NCSP^{\ell}}$ if, and only if, $n^\ell$ is a partial polymorphism of the core of $\StructB$.
\end{corollary}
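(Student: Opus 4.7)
The plan is to reduce to the case where $\StructB$ itself is a core, by exploiting the well-known fact that a structure and its core are homomorphically equivalent, so they define the same CSP class. Concretely, let $\StructB^*$ denote the core of $\StructB$. Since every structure $\StructA$ maps homomorphically to $\StructB$ if and only if it maps to $\StructB^*$, we have $\CSP{\StructB}=\CSP{\StructB^*}$, and consequently $Q_{\CSP{\StructB}}=Q_{\CSP{\StructB^*}}$. Hence the statement reduces to showing that $Q_{\CSP{\StructB^*}}$ is $\Nn^\ell$-closed if and only if $n^\ell$ is a partial polymorphism of $\StructB^*$.

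For the ``if'' direction, I would invoke the result of Dawar and Hella cited just before the lemma (\cite[Proposition~11]{dawarHella2024}): if $n^\ell$ is a partial polymorphism of the template $\StructB^*$, then the resulting CSP quantifier $Q_{\CSP{\StructB^*}}$ is $\Nn^\ell$-closed, using that $\Nn^\ell$ is projective. Combined with $Q_{\CSP{\StructB}}=Q_{\CSP{\StructB^*}}$, this yields that $Q_{\CSP{\StructB}}\in\Qq^{\NCSP^\ell}$.

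For the ``only if'' direction, suppose $Q_{\CSP{\StructB}}\in\Qq^{\NCSP^\ell}$. Then $Q_{\CSP{\StructB^*}}$ is also $\Nn^\ell$-closed. Since $\StructB^*$ is a core, Lemma~\ref{lem:closureOfClassImpliesTemplateClosure} applied to $\StructB^*$ and $\Pp=\Nn^\ell$ directly gives that $n^\ell$ is a partial polymorphism of $\StructB^*$, as desired. The main (mild) obstacle is really just making explicit that taking the core does not change the CSP quantifier, so that the preceding lemma, which assumes a core template, can be applied; no further technical work is needed beyond this observation.
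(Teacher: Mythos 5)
Your proposal is correct and follows essentially the same route as the paper's own proof: reduce to the core via $\CSP{\StructB}=\CSP{\StructB^*}$, then use Lemma~\ref{lem:closureOfClassImpliesTemplateClosure} for the forward direction and \cite[Proposition~11]{dawarHella2024} for the converse.
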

\begin{proof}
	Note that $\CSP{\StructB}=\CSP{\StructB'}$ for the core $\StructB'$ of $\StructB$. Thus, the implication from left to right follows from Lemma~\ref{lem:closureOfClassImpliesTemplateClosure}, and the implication from right to left follows from \cite[Proposition 11]{dawarHella2024}.
\end{proof}

\section{Arity reductions}\label{sec:collapses}	

Our aim is to prove arity hierarchies for $\Nn^\ell$-closed quantifiers. Dawar and Hella \cite{dawarHella2024} introduced a model-comparison game for $\Ll^k(\Qq^{\Nn^\ell})$, and used it to prove that for any $\ell\ge 3$ there is an $\ell$-ary CSP quantifier in $\Qq^{\Nn^{\ell+1}}$ that is not definable in $\Ll(\Qq^{\Nn^\ell})$.
However, they left open the question whether for every fixed $\ell$, there is an arity hierarchy within the class of $\Nn^\ell$-closed quantifiers.
This question turns out to be nontrivial: In this section, we show that any $\Nn^\ell$-closed quantifier of arity at least $\ell$ can be defined by a quantifier with strictly smaller arity.  Moreover, the arity of some natural $\Nn^\ell$-closed quantifiers can even be reduced all the way down to $2$ in this way.

\paragraph*{Example: Hypergraph colouring}
For $m\ge r\ge 2$ let $H^r_m$ be the \emph{$r$-uniform hypergraph of size $m$}, i.e., $H^r_m:=([m],R^r_m)$, where $R^r_m$ is the set of all tuples $(a_1,\ldots,a_r)\in [m]^r$ such that $a_i\not= a_j$ for all $i\not=j$. Note that $H^2_m$ is just the complete graph $K_m$ on $m$ vertices. In \cite{dawarHella2024} the authors proved that the CSP quantifier corresponding to the class $\CSP{H^r_m}$ is $\Nn^\ell$-closed for any $\ell\ge 3$. 

Thus, one might think that the quantifiers $Q_{\CSP{H^r_m}}$, for $m\ge r\ge 2$, would form a natural arity hierarchy within $\Qq^{\Nn^\ell}$. However, this is not the case: Each $\CSP{H^r_m}$ can be defined already by the binary quantifier $Q_{\CSP{K_m}}$:
	
	\begin{restatable}{proposition}{hypergraphColouring}
		There is an $\FO$-definable reduction from $\CSP{H^r_m}$ to $\CSP{K_m}$.
	\end{restatable}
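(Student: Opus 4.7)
The plan is to exhibit an explicit first-order interpretation that turns any relational structure $\StructA=(A,S)$ of vocabulary $\{S\}$ with $\ar(S)=r$ into a graph $\StructA^\star$ on the same universe, where two distinct elements are adjacent precisely when they appear together in some tuple of $S$. The intuition is that a homomorphism to $H^r_m$ only requires that within each tuple of $S$, all $r$ components receive pairwise distinct labels in $[m]$. That constraint can equivalently be phrased as a proper $m$-colouring of the auxiliary graph that records which pairs of elements are forced to be distinctly labelled.

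Concretely, I would take the interpretation to be the identity on the universe together with the edge formula
\[
  \phi_E(x,y) \;\coloneqq\; x\neq y \;\wedge\; \exists z_1\cdots\exists z_r\Bigl(S(z_1,\ldots,z_r)\wedge\bigvee_{1\le i<j\le r}\bigl(z_i=x\wedge z_j=y \;\vee\; z_i=y\wedge z_j=x\bigr)\Bigr),
\]
and define $\StructA^\star\coloneqq(A,\{(a,b)\in A^2\mid \StructA\models\phi_E(a,b)\})$. This is clearly first-order, of fixed size independent of $\StructA$.

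The core claim to verify is then $\StructA\to H^r_m \iff \StructA^\star\to K_m$. For the forward direction, any homomorphism $h\colon\StructA\to H^r_m$ sends each tuple in $S^\StructA$ to a tuple of pairwise distinct elements of $[m]$; so for every edge $(a,b)\in E^{\StructA^\star}$, witnessed by some tuple of $S^\StructA$ containing both $a$ and $b$ at distinct positions, we have $h(a)\neq h(b)$, i.e.\ $h$ is a homomorphism $\StructA^\star\to K_m$. Conversely, if $h\colon A\to [m]$ is a proper $m$-colouring of $\StructA^\star$, then for any $(a_1,\ldots,a_r)\in S^\StructA$ the formula $\phi_E$ certifies $(a_i,a_j)\in E^{\StructA^\star}$ for all $i\neq j$, so the values $h(a_1),\ldots,h(a_r)$ are pairwise distinct, i.e.\ $(h(a_1),\ldots,h(a_r))\in R^r_m$, showing $h\colon\StructA\to H^r_m$.

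There is essentially no technical obstacle here; the argument is a routine syntactic construction followed by a direct semantic check. The only conceptual point worth highlighting is that no information is lost by passing from $S^\StructA$ to its ``co-occurrence graph'', because $R^r_m$ in $H^r_m$ is itself the relation of pairwise-distinctness, which is already a purely binary constraint. If the vocabulary of the source structure contains several relation symbols or symbols of varying arity, the same idea applies with $\phi_E$ replaced by a finite disjunction ranging over all symbols and all pairs of coordinate positions, which remains first-order.
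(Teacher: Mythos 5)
Your construction is essentially the paper's, with one crucial difference that breaks the converse direction: you inserted the conjunct $x \neq y$ into $\phi_E$, so the co-occurrence graph $\StructA^\star$ has no self-loops. The paper instead defines $\Gg(\StructA)$ to include a self-loop at $a$ whenever some tuple in $S^\StructA$ has $a$ in two \emph{distinct positions}. This is not cosmetic: since $R^r_m$ in $H^r_m$ consists of tuples with pairwise distinct entries, a structure $\StructA$ containing a tuple with a repeated entry maps to $H^r_m$ for no $m$. That obstruction must be visible in the reduced instance, and a self-loop is exactly what encodes it, because $K_m$ is loop-free and therefore has no homomorphic image from a graph with a self-loop.

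With your $x\neq y$ conjunct, the information is lost. Concretely, take $\StructA=(\{a\}, S)$ with $S=\{(a,a,\ldots,a)\}$. Then $\StructA\not\to H^r_m$ for any $m$, but your $\StructA^\star$ has no edges at all (since $\phi_E(a,a)$ is false), so $\StructA^\star\to K_m$ trivially, and the claimed equivalence $\StructA\to H^r_m \iff \StructA^\star\to K_m$ fails. The step in your converse where you assert that $\phi_E$ ``certifies $(a_i,a_j)\in E^{\StructA^\star}$ for all $i\neq j$'' quietly conflates distinctness of positions $i\neq j$ with distinctness of elements $a_i\neq a_j$; when $a_i=a_j$, no edge is created and the argument does not go through. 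The fix is simply to drop $x\neq y$ and let the interpreted graph have self-loops, exactly as in the paper.
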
	
    \noindent
    \textit{Proof sketch.} Given $\StructA$ in the vocabulary $\{R\}$ with $\ar(R)=r$, let $\Gg(\StructA)$ be the graph (possibly with self-loops) on vertex set $A$ with edges $\{(a_i,a_j)\mid (a_1,\ldots,a_r)\in R^\StructA, i\not=j\}$. Clearly $\Gg(\StructA)$ is $\FO$-definable in $\StructA$. It can be verified that $\StructA \to H^r_m$ if, and only if, $\Gg(\StructA)\to K_m$. \hfill \qedsymbol\\

\vspace{-2em}
	\paragraph*{Arity reductions for $\Nn^\ell$-closed quantifiers}	
    We now show a general arity reduction result for $N^\ell$-closed quantifiers. We start with the easier case of CSP quantifiers. 
	Let $r\ge 1$ and let $R \subseteq B^r$ be an $r$-ary relation. For $i,j \in [r]$, $i\le j$, let $p_{\setminus [i,j]}$ be the projection function $B^r\to B^{r-(j-i+1)}$ that maps $(b_1,\ldots,b_r)$ to $(b_1,\ldots,b_{i-1},b_{j+1},\ldots,b_r)$. Furthermore, let $R_{\setminus [i,j]}$ be the $r-(j-i+1)$-ary relation that is the projection of $R$ to $[r] \setminus \{i,i+1,...,j\}$, i.e., $R_{\setminus [i,j]} = \{ p_{\setminus [i,j]}(\bar{b}) \mid \bar{b}\in R  \}$.

	\begin{lemma}
	\label{lem:arityReductionTrick}
	Let $\ell \geq 3$ and $r \geq \ell$, and let $(i_1,j_1), (i_2,j_2),..., (i_{\ell},j_\ell)$ be a division of $[r]$ into $\ell$ intervals. Let $B$ be a finite set and $R\subseteq B^r$. Assume that $n^\ell$ is a partial polymorphism of $(B,R)$.
	Then $\bar{b} \in R$ if, and only if, $p_{\setminus [i_n,j_n]}(\bar{b})\in R_{\setminus [i_n,j_n]}$ for every $n \in [\ell]$.
\end{lemma}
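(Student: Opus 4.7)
The forward direction is trivial: if $\bar b\in R$, then $p_{\setminus[i_n,j_n]}(\bar b)\in R_{\setminus[i_n,j_n]}$ holds for every $n\in[\ell]$ directly from the definition of the projected relation. The interesting direction is the converse. Assume that for each $n\in[\ell]$ we have $p_{\setminus[i_n,j_n]}(\bar b)\in R_{\setminus[i_n,j_n]}$. Unfolding the definition, this gives us, for each $n\in[\ell]$, a witness tuple $\bar b_n\in R$ that agrees with $\bar b$ on every coordinate outside the interval $[i_n,j_n]$ (and can be arbitrary inside $[i_n,j_n]$).

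The key idea is to arrange these witnesses as the rows of an $\ell\times r$ matrix $M$ whose $n$-th row is $\bar b_n$, and to apply $\widehat{n^\ell}$ column-wise. Fix any column $k\in[r]$. Since $(i_1,j_1),\ldots,(i_\ell,j_\ell)$ is a partition of $[r]$ into intervals, there is a unique $m\in[\ell]$ with $k\in[i_m,j_m]$. For every $n\neq m$, the coordinate $k$ lies outside $[i_n,j_n]$, so the $(n,k)$-entry of $M$ equals $\bar b[k]$. Thus in column $k$ at least $\ell-1$ of the $\ell$ entries coincide with $\bar b[k]$, which is precisely the situation where $n^\ell$ is defined and takes value $\bar b[k]$.

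It follows that $\widehat{n^\ell}(M)=\bar b$. Since every row $\bar b_n$ belongs to $R$ and, by hypothesis, $n^\ell$ is a partial polymorphism of $(B,R)$, we have $\widehat{n^\ell}(\bar b_1,\ldots,\bar b_\ell)\in R$, and therefore $\bar b\in R$, as required.

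The main (very mild) obstacle is purely a bookkeeping one: one must verify carefully that the intervals partition $[r]$ so that each column belongs to exactly one $[i_m,j_m]$, guaranteeing exactly $\ell-1$ agreeing entries in every column. Once this combinatorial observation is made, the closure of $R$ under the partial near-unanimity polymorphism $n^\ell$ finishes the argument immediately.
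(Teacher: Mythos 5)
Your proof is correct and follows essentially the same route as the paper's: choose a witness tuple in $R$ for each interval agreeing with $\bar b$ outside that interval, observe that applying $\widehat{n^\ell}$ to them column-wise returns $\bar b$ because each column has at most one disagreeing entry, and conclude by closure under the partial polymorphism. You merely spell out the column-wise verification that the paper compresses into ``Clearly then $\widehat{n^\ell}(\bar c_1,\ldots,\bar c_\ell)=\bar b$.''
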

\begin{proof} 
	Let $\bar{b}\in B^r$. If $\bar{b} \in R$, then $p_{\setminus [i_n,j_n]}(\bar{b})\in R_{\setminus [i_n,j_n]}$ by definition.
	For the other direction, assume that $p_{\setminus [i_n,j_n]}(\bar{b})\in R_{\setminus [i_n,j_n]}$ for every $n \in [\ell]$.
	This means that there exist tuples $\bar{c}_1,...,\bar{c}_\ell \in R$ such that $p_{\setminus [i_n,j_n]}(\bar{c}_n)=p_{\setminus [i_n,j_n]}(\bar{b})$ for each $n\in [\ell]$.
	Clearly then $\widehat{n^\ell}(\bar{c}_1,\ldots,\bar{c}_\ell)=\bar{b}$, and since $n^\ell$ is a partial polymorphism of $(B,R)$, it follows that $\bar{b}\in R$. 
\end{proof}

We use the following notation in the rest of this section. Let $\StructC$ be a $\tau$-structure, where $\ar(R)\le r$ for all $R\in\tau$, and let $(i_1,j_1), (i_2,j_2),..., (i_{\ell},j_\ell)$ be a division of $[r]$ into intervals. For each $R\in\tau$ and $n\in [\ell]$, let 
 \[
 	R^\StructC_n\coloneqq\begin{cases}R^\StructC_{\setminus [i_n,j_n]} & \text{if }j_n\le\ar(R), \\
 	R^\StructC_{\setminus [i_n,\ar(R)]} & \text{if } i_n\le\ar(R)<j_n, \\
	\emptyset & \text{if } \ar(R)<i_n.
	 \end{cases}
 \]
Then we denote by $\StructC^*$ the structure $(C,(R^\StructC_n)_{R\in\tau,n\in [\ell]})$. Note that $\StructC^*$ is $\FO$-definable in $\StructC$.

	\begin{lemma}
		\label{lem:CSPreductionByOne}
		Let $\ell,r$ and $(i_1,j_1), (i_2,j_2),..., (i_{\ell},j_\ell)$ be as in Lemma~\ref{lem:arityReductionTrick}, and let $\StructA$ and $\StructB$ be $\tau$-structures, where $\ar(R)\le r$ for all $R\in\tau$. Assume that $n^\ell$ is a partial polymorphism of $\StructB$. Then $\StructA \to \StructB$ if, and only if, $\StructA^* \to \StructB^*$.
	\end{lemma}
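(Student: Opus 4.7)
I plan to prove the two directions of the biconditional separately, with the nontrivial direction reducing directly to Lemma~\ref{lem:arityReductionTrick}.

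The forward direction is essentially formal. Suppose $h \colon \StructA \to \StructB$ is a homomorphism. Each relation $R^{\StructA}_n$ in $\StructA^*$ is, by definition, either a coordinate projection of $R^{\StructA}$ or empty. Since $h$ preserves $R^{\StructA}$ and coordinate projection commutes with the pointwise application of $h$, we get $h(R^{\StructA}_n) \subseteq R^{\StructB}_n$ in all cases (vacuously in the empty case). Hence $h$ is also a homomorphism $\StructA^* \to \StructB^*$.

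For the backward direction, take any $h \colon \StructA^* \to \StructB^*$; I aim to show that $h$ is also a homomorphism $\StructA \to \StructB$. Fix $R \in \tau$ with $s \coloneqq \ar(R) \le r$ and $\bar{a} \in R^{\StructA}$, and set $\bar{b} \coloneqq h(\bar{a})$. For every $n \in [\ell]$ with $i_n \le s$, write $I_n \coloneqq [i_n, \min(j_n, s)]$. The definition of $R^{\StructA}_n$ gives $p_{\setminus I_n}(\bar{a}) \in R^{\StructA}_n$, and the homomorphism property of $h$ on $\StructA^*$ yields $p_{\setminus I_n}(\bar{b}) \in R^{\StructB}_n = R^{\StructB}_{\setminus I_n}$. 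The intervals $\{I_n \mid i_n \le s\}$ partition $[s]$. Invoking Lemma~\ref{lem:arityReductionTrick} on $R^{\StructB}$, the induced division of $[s]$, and the tuple $\bar{b}$---together with the hypothesis that $n^\ell$ is a partial polymorphism of $R^{\StructB}$---yields $\bar{b} \in R^{\StructB}$, as required.

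When $s = r$, the intervals $(i_n, j_n)_{n \in [\ell]}$ themselves divide $[r]$ into exactly $\ell$ parts and Lemma~\ref{lem:arityReductionTrick} applies verbatim. The main obstacle I anticipate is the case $s < r$, where the induced partition of $[s]$ has fewer than $\ell$ parts (and the last one may be a truncation of $(i_{n^*}, j_{n^*})$). I plan to handle this by revisiting the proof of Lemma~\ref{lem:arityReductionTrick}: from the projection conditions obtain witnesses $\bar{c}_1, \ldots, \bar{c}_{n^*} \in R^{\StructB}$ that agree with $\bar{b}$ outside their respective $I_n$, then pad the sequence to length $\ell$ by repeating suitable $\bar{c}_n$'s so that column-wise applications of $\widehat{n^\ell}$ still recover $\bar{b}$; alternatively, very low-arity relations can be split off and treated via Lemma~\ref{lem:TrivialPartPoly}. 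Arranging this padding so that the $(\ell-1)$-fold agreement condition holds at every column is where the genuine combinatorial work lies.
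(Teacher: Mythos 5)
Your forward direction and your argument for the case where all $\ell$ intervals meet $[\ar(R)]$ (in particular $\ar(R) = r$) coincide with the paper's proof: the forward direction is the formal commutation of projection with $h$, and the backward direction is a direct invocation of Lemma~\ref{lem:arityReductionTrick}. Your flag of the case $\ar(R) < r$ is, in fact, more careful than the paper's own proof, which applies Lemma~\ref{lem:arityReductionTrick} without remarking that its hypotheses (a relation $R \subseteq B^r$ and an $\ell$-part division of $[r]$) are no longer met once $\ar(R) < i_\ell$, so that fewer than $\ell$ intervals survive in $[\ar(R)]$.

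Your proposed padding remedy does not close this gap, however. If you repeat $\bar{c}_n$ in the input to $\widehat{n^\ell}$, then in any column $j \in I_n$ with $\bar{c}_{n,j} \neq \bar{b}_j$ the minority value occurs at least twice, so at most $\ell-2$ entries agree on $\bar{b}_j$: either $n^\ell$ is undefined there, or (when $n^* \leq 2$) it outputs $\bar{c}_{n,j}$ rather than $\bar{b}_j$. No choice of which existing witness to repeat avoids this, and padding with $\bar{b}$ itself is circular, since $\bar{b} \in R^{\StructB}$ is what you want to prove. Appealing to Lemma~\ref{lem:TrivialPartPoly} for low-arity relations also does not deliver membership of $h(\bar{a})$ in $R^\StructB$: that lemma only gives closure of the relation under $n^\ell$, not the required conclusion. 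In fact the backward implication can genuinely fail for low-arity relations: with $\ell = r = 3$, intervals $\{1\},\{2\},\{3\}$, $\tau = \{P,Q\}$ both unary, $\StructA = (\{a\}, \{a\}, \{a\})$, and $\StructB = (\{b,c\}, \{b\}, \{c\})$, the function $n^3$ is a partial polymorphism of $\StructB$ (by Lemma~\ref{lem:TrivialPartPoly}), and $\StructA^* \to \StructB^*$ via any map $\{a\} \to \{b,c\}$, since the only nontrivial $\sigma$-relations $P_1, Q_1$ are $0$-ary and satisfied in both structures, yet $\StructA \not\to \StructB$. So relations with $\ar(R) < i_\ell$ must be excluded or treated separately (e.g.\ by carrying such $R$ unchanged into $\sigma$, which preserves the arity bound used in Corollary~\ref{cor:arityReductionQuantifiers} and restores both implications); neither your padding nor the paper's one-line appeal to Lemma~\ref{lem:arityReductionTrick} handles this case.
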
	
	
\begin{proof}
		Let $h: \StructA \to \StructB$ be a homomorphism, and let $\bar{a}\in R^\StructA_n$ for some $R\in\tau$ and $n\in [\ell]$. Then there is a tuple $\bar{c} \in R^{\StructA}$ such that $\bar{a}=p_{\setminus [i_n,j_n]}(\bar{c})$. Since $h(\bar{c}) \in R^{\StructB}$, also $h(\bar{a})=p_{\setminus [i_n,j_n]}(h(\bar c)) \in R^{\StructB}_n$. Hence, $h$ is a homomorphism from $\StructA^*$ to $\StructB^*$. 
		
		Conversely, assume that $h$ is a homomorphism from $\StructA^*$ to $\StructB^*$. 
		Let $\bar{a} \in R^{\StructA}$ for some $R\in\tau$. Then for every $n\in [\ell]$, $p_{\setminus [i_n,j_n]}(h(\bar{a}))=h(p_{\setminus [i_n,j_n]}(\bar{a})) \in R^{\StructB}_n$. By Lemma \ref{lem:arityReductionTrick}, we have $h(\bar{a}) \in R^{\StructB}$.
	\end{proof}

	\begin{corollary}
		\label{cor:arityReductionQuantifiers}
		Let $r \ge \ell \ge 3$.
		Every quantifier in $\Qq_r^{\NCSP^{\ell}}$ is $\FO$-definable by a quantifier in $\Qq^{\CSPtext}_{\lceil \frac{\ell-1}{\ell} \cdot r \rceil}$.
	\end{corollary}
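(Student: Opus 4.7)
The plan is to deduce the statement from Lemma~\ref{lem:CSPreductionByOne} by applying it with a maximally balanced partition of $[r]$ into $\ell$ intervals, and then to lift the resulting structural reduction to an FO-interpretation at the level of the quantifier.

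First I fix $Q_\Kk \in \Qq_r^{\NCSP^\ell}$ and write $\Kk = \CSP{\StructB}$ for a template $\StructB$ of arity at most $r$. Replacing $\StructB$ by its core preserves both $\CSP{\StructB}$ and the arity bound (the core is a substructure), so Corollary~\ref{cor:quantifierClosedIffTemplateClosed} lets me assume without loss of generality that $n^\ell$ is a partial polymorphism of $\StructB$ itself. I then partition $[r]$ into $\ell$ intervals $(i_1,j_1),\ldots,(i_\ell,j_\ell)$ of sizes as balanced as possible, i.e.\ all of size $\lfloor r/\ell \rfloor$ or $\lceil r/\ell \rceil$, so that every interval has size at least $\lfloor r/\ell \rfloor$.

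For this choice, every relation $R_n^\StructC$ appearing in the derived structure $\StructC^*$ has arity at most $r - \lfloor r/\ell \rfloor = \lceil (\ell-1)r/\ell \rceil$. In particular, $Q_{\CSP{\StructB^*}} \in \Qq^\CSPtext_{\lceil (\ell-1)r/\ell \rceil}$. Lemma~\ref{lem:CSPreductionByOne} now yields that $\StructA \to \StructB$ if and only if $\StructA^* \to \StructB^*$ for every $\tau$-structure $\StructA$, which is exactly the structural equivalence I want to promote to an $\FO$-definition.

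The remaining task is to witness this equivalence by a single $\FO(Q_{\CSP{\StructB^*}})$-formula. Each relation symbol $R_n$ in the vocabulary of $\StructB^*$ admits an explicit FO-definition $\chi_{R,n}(\bar z) \coloneqq \exists \bar u\, R(\bar z \oplus_n \bar u)$ over the vocabulary of $\StructB$, where $\bar z \oplus_n \bar u$ denotes the tuple obtained by inserting $\bar u$ at the positions $[i_n, j_n]$. Composing these formulas with any interpretation tuple $\Psi$ appearing under an occurrence of $Q_\Kk \bar y_1 \ldots \bar y_m \Psi$ produces an interpretation of $\StructA^*$ whose vocabulary has maximum arity $\lceil (\ell-1)r/\ell \rceil$, and wrapping this new tuple in $Q_{\CSP{\StructB^*}}$ gives an equivalent formula. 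This last step is essentially bookkeeping; the only numerical point to verify is the identity $r - \lfloor r/\ell \rfloor = \lceil (\ell-1)r/\ell \rceil$ that justifies the stated arity bound, so I do not anticipate any real obstacle.
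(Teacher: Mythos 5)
Your proof is correct and follows essentially the same route as the paper's: reduce to the core, invoke Corollary~\ref{cor:quantifierClosedIffTemplateClosed} to get $n^\ell$ as a partial polymorphism of $\StructB$, take a balanced partition of $[r]$ into $\ell$ intervals, bound the arity of $\StructB^*$ by $r - \lfloor r/\ell\rfloor = \lceil\frac{\ell-1}{\ell}r\rceil$, and apply Lemma~\ref{lem:CSPreductionByOne}. Your final paragraph merely spells out the FO-interpretation that the paper treats as immediate (having already noted $\StructC^*$ is FO-definable in $\StructC$), so there is no substantive difference.
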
	
	\begin{proof}
	Let $\StructB$ be  a $\tau$-structure such that $Q_{\CSP{\StructB}}$ is a quantifier in $\Qq_r^{\NCSP^{\ell}}$.  W.l.o.g. we can assume that $\StructB$ is a core, and hence  by Corollary \ref{cor:quantifierClosedIffTemplateClosed}, $n^\ell$ is a partial polymorphism of $\StructB$. 	
	Consider now a division $(i_1,j_1), (i_2,j_2),..., (i_{\ell},j_\ell)$ of $[r]$ into intervals such that the size of of each $[i_n,j_n]$ is at least $\lfloor \frac{r}{\ell}\rfloor$. Then the arity of each relation $R^\StructB_n$ of $\StructB^*$ is at most $r-\lfloor \frac{r}{\ell} \rfloor = \lceil \frac{\ell-1}{\ell} \cdot r \rceil$. Thus, $Q_{\CSP{\StructB^*}}\in\Qq^{\CSPtext}_{\lceil \frac{\ell-1}{\ell} \cdot r \rceil}$.
	
	By Lemma \ref{lem:CSPreductionByOne}, for any $\tau$-structure $\StructA$ we have $\StructA\in\CSP{\StructB}$ if, and only if, $\StructA^*\in\CSP{\StructB^*}$. Thus, the mapping $\StructA\mapsto\StructA^*$ is an $\FO$-definable reduction of $\CSP{\StructB}$ to $\CSP{\StructB^*}$.
\end{proof}	
	
	It should be noted that $n^\ell$ is usually not a partial polymorphism of the structure $\StructB^*$ in Lemma \ref{lem:CSPreductionByOne}. 
Otherwise we could iterate the arity reduction trick to obtain even lower arity quantifiers that define $\CSP{\StructB}$.

If we allow ourselves to use the infinitary logic $\Ll$ rather than just FO in the reductions, then the arity reduction works for \emph{any} $\Nn^\ell$-closed quantifier, not just CSPs. The general reduction consists in first closing the given structure $\StructA$ under $n^\ell$ (which is an $\Ll$-definable fixed-point induction), and then applying to it the FO-definable operator $^*$ defined before Lemma \ref{lem:CSPreductionByOne}. We prove in Appendix \ref{sec:appendixCollapseSection} that this yields the following arity reduction:
\begin{restatable}{lemma}{arityReductionNonCSP}
\label{lem:arityReductionNonCSP}
Let $r\ge\ell\ge 3$. Let $\Kk$ be an $\Nn^\ell$-closed class of $\tau$-structures, where $\ar(R)\le r$ for all $R\in\tau$. Then there exists a vocabulary $\sigma$ with at most $\lceil \frac{\ell-1}{\ell} \cdot r \rceil$-ary relations, a class $\Kk'$ of $\sigma$-structures, and an $\Ll$-definable reduction $f$ that maps $\tau$-structures to $\sigma$-structures in such a way that $\StructA \in \Kk$ if, and only if, $f(\StructA) \in \Kk'$. 
\end{restatable}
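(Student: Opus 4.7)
The plan is to combine an $\Ll$-definable closure operation with the FO-definable $^*$-operator from Lemma \ref{lem:CSPreductionByOne}. First, for a $\tau$-structure $\StructA$ I would define its \emph{closure} $\overline{\StructA}$ as the least $\tau$-structure on universe $A$ with $\StructA \leq \overline{\StructA}$ such that $n^\ell$ is a partial polymorphism of $\overline{\StructA}$. Concretely, $\overline{\StructA}$ is the stable value of the monotone chain $\StructA_0 \coloneqq \StructA$, $\StructA_{i+1} \coloneqq \StructA_i \cup n^\ell(\StructA_i)$, which terminates in polynomially many steps since the relations live in a finite lattice. Because each stage is FO-definable in the previous one, $\StructA \mapsto \overline{\StructA}$ is definable in $\Ll$ via a least fixed-point, which in infinitary logic unfolds to an unbounded disjunction over finite stages.

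Next I would establish the invariance $\StructA \in \Kk \iff \overline{\StructA} \in \Kk$. The backward direction is immediate from the downward monotonicity of $\Nn^\ell$-closed classes. For the forward direction, by induction on $i$, each $\StructA_i$ is in $\Kk$: given $\StructA_i \in \Kk$, applying the $\Nn^\ell$-closure of $\Kk$ with $\StructB \coloneqq \StructA_i$ yields $\StructA_{i+1} = \StructB \cup n^\ell(\StructB) \in \Kk$, because $\StructA_{i+1}$ is a substructure of itself.

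I then apply the $^*$-operator to $\overline{\StructA}$, using any division $(i_1, j_1), \ldots, (i_\ell, j_\ell)$ of $[r]$ into intervals each of size at least $\lfloor r/\ell \rfloor$. This produces a $\sigma$-structure $\overline{\StructA}^*$ with relations of arity at most $\lceil \frac{\ell-1}{\ell} \cdot r \rceil$, so the composition $f(\StructA) \coloneqq \overline{\StructA}^*$ is $\Ll$-definable. Define $\Kk' \coloneqq \{\StructC : \StructC^\sharp \in \Kk\}$, where $\StructC^\sharp$ is the $\tau$-structure reconstructed from $\StructC$ via the criterion of Lemma \ref{lem:arityReductionTrick}: $\bar{b} \in R^{\StructC^\sharp}$ iff $p_{\setminus [i_n, j_n]}(\bar{b}) \in R^\StructC_n$ for every relevant $n \in [\ell]$ (with the obvious modifications when $[i_n,j_n]$ overreaches $[\ar(R)]$). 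Since $n^\ell$ is a partial polymorphism of $\overline{\StructA}$, Lemma \ref{lem:arityReductionTrick} applied relation by relation gives $(\overline{\StructA}^*)^\sharp = \overline{\StructA}$. Hence $f(\StructA) \in \Kk' \iff \overline{\StructA} \in \Kk \iff \StructA \in \Kk$.

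The main technical issue I anticipate is the bookkeeping around relations of arity strictly less than $r$, for which some intervals $(i_n, j_n)$ partially overlap or miss $[\ar(R)]$; the $^*$-operator already handles these cases via truncation or trivialisation, and the inverse $\sharp$ must be tailored correspondingly so that Lemma \ref{lem:arityReductionTrick} still recovers the original tuples. Relations of arity strictly less than $\ell$ are effectively untouched by the closure thanks to Lemma \ref{lem:TrivialPartPoly}, so their reconstruction presents no further difficulty. Apart from this, the argument is direct: the closure step is the only conceptual addition over the CSP case, and its $\Ll$-definability is standard.
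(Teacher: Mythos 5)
Your proof is correct and takes essentially the same route as the paper's: both first close $\StructA$ under $n^\ell$ via an $\Ll$-definable fixed-point and then apply the FO-definable $^*$-operator. The one point of departure is the definition of $\Kk'$: the paper sets $\Kk' \coloneqq f(\Kk)$ and then argues (via Lemma~\ref{lem:arityReductionTrick}) that $^*$ is injective on $n^\ell$-closed structures, whereas you build an explicit right-inverse $\sharp$ and take $\Kk' \coloneqq \{\StructC : \StructC^\sharp \in \Kk\}$; this makes the required equivalence $\StructA\in\Kk \iff f(\StructA)\in\Kk'$ a mechanical composition rather than an injectivity argument, at the cost of having to define $\sharp$ on all $\sigma$-structures, which your formulation does correctly.
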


	\begin{corollary}
		\label{cor:arityReductionAllQuantifiers}
		Let $r\ge \ell \ge 3$.
		Every quantifier in $\Qq_r^{\Nn^{\ell}}$ is definable in $\Ll(\Qq_{\lceil \frac{\ell-1}{\ell} \cdot r \rceil})$.
	\end{corollary}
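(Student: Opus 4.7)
The plan is to deduce the corollary almost immediately from Lemma \ref{lem:arityReductionNonCSP} together with a standard composition argument for interpretations in infinitary logic. Fix a quantifier $Q_\Kk \in \Qq_r^{\Nn^\ell}$, so $\Kk$ is an $\Nn^\ell$-closed class of $\tau$-structures with $\ar(R)\le r$ for every $R\in\tau$. First, I would apply Lemma \ref{lem:arityReductionNonCSP} to obtain a vocabulary $\sigma$ in which every relation has arity at most $s \coloneqq \lceil \frac{\ell-1}{\ell}\cdot r \rceil$, a class $\Kk'$ of $\sigma$-structures, and an $\Ll$-definable reduction $f$ such that $\StructA \in \Kk$ if and only if $f(\StructA) \in \Kk'$. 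Since the arity of $Q_{\Kk'}$ is the maximum arity of a relation in $\sigma$, we have $Q_{\Kk'} \in \Qq_s$.

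Next, I would unfold the syntactic form of an application of $Q_\Kk$. An arbitrary formula built with $Q_\Kk$ has the shape $\varphi = Q_\Kk\bar y_1,\ldots,\bar y_m\, \Psi$ with $\Psi = (\psi_1,\ldots,\psi_m)$ an $\Ll(\Qq_s)$-interpretation of a $\tau$-structure in some ambient vocabulary (by induction, we may assume the subformulas $\psi_i$ are already translated to $\Ll(\Qq_s)$). The semantics gives $\varphi$-satisfaction in $(\StructM,\alpha)$ iff $\Psi(\StructM,\alpha)\in\Kk$, and by Lemma \ref{lem:arityReductionNonCSP} this is equivalent to $f(\Psi(\StructM,\alpha))\in\Kk'$. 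The key observation is that since $f$ is $\Ll$-definable, for each relation symbol $R'\in\sigma$ of arity $s'\le s$ there is an $\Ll$-formula $\chi_{R'}(\bar z)$ of vocabulary $\tau$ defining $R'^{f(\StructC)}$ in any $\tau$-structure $\StructC$. Substituting the formulas $\psi_i$ of $\Psi$ into the $\chi_{R'}$ (relativized to the universe defined by $\Psi$) yields, for each $R'\in\sigma$, an $\Ll(\Qq_s)$-formula $\chi'_{R'}(\bar z)$ over the ambient vocabulary that defines $R'^{f(\Psi(\StructM,\alpha))}$. Collecting these into a tuple $\Psi' = (\chi'_{R'})_{R'\in\sigma}$ gives an $\Ll(\Qq_s)$-interpretation, and
$$
  \varphi \;\equiv\; Q_{\Kk'} \bar z_1,\ldots,\bar z_{|\sigma|}\, \Psi',
$$
which lies in $\Ll(\Qq_s)$ because $Q_{\Kk'}\in\Qq_s$ and the $\chi'_{R'}$ are in $\Ll(\Qq_s)$.

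The only delicate point is the composition step turning the $\Ll$-definition of $f$ and the $\Ll(\Qq_s)$-interpretation $\Psi$ into a single $\Ll(\Qq_s)$-interpretation $\Psi'$ without exceeding the quantifier-arity bound $s$. This is standard: substitution of formulas for relation symbols preserves the logic, and nothing in this step requires quantifiers of arity greater than those already present in either $\Psi$ or the defining formulas of $f$, both of which are bounded by $s$ (the quantifiers used in $\Ll$ itself are of arity at most $1$, and the quantifiers in $\Psi$ come from the inductive hypothesis). An outer induction on formula complexity, treating the Boolean connectives and the first-order quantifiers trivially and applying the above translation at each occurrence of a quantifier from $\Qq_r^{\Nn^\ell}$, then yields a full translation of $\Ll(\Qq_r^{\Nn^\ell})$ into $\Ll(\Qq_s)$, proving the corollary.
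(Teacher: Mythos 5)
Your proof is correct and follows the route the paper implicitly intends: Corollary~\ref{cor:arityReductionAllQuantifiers} is stated without an explicit proof immediately after Lemma~\ref{lem:arityReductionNonCSP}, and your argument — apply the lemma to obtain the $\Ll$-definable reduction $f$ and the class $\Kk'$ over a vocabulary of arity at most $s$, observe $Q_{\Kk'}\in\Qq_s$, and eliminate $Q_\Kk$ by substituting the interpretation $\Psi$ into the defining formulas of $f$ and wrapping with $Q_{\Kk'}$ — is exactly the composition argument it tacitly relies on. The only cosmetic point is that in this paper's notion of interpretation the universe is always the full domain, so no relativization formula is needed; otherwise the arity bookkeeping and the outer induction are exactly what is required.
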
	
Here and in Theorem \ref{thm:mainArityComparison} below, the expression $\lceil \frac{\ell-1}{\ell} \cdot r \rceil$ may equivalently be read as $r - i$, for the $i$ such that $i \cdot \ell \leq r < (i+1) \cdot \ell$.
Note that the arity reduction result in Corollary \ref{cor:arityReductionQuantifiers} for CSP quantifiers is not a special case of Corollary \ref{cor:arityReductionAllQuantifiers}, as the latter does not provide an $\FO$-definable reduction. Moreover, in Corollary \ref{cor:arityReductionQuantifiers}, the defining quantifier is again a CSP quantifier.

Summarizing our arity collapse results, we obtain the following picture.
First, we note that if the arity $\ell$ of the partial near-unanimity function $n^\ell$ is greater than the arity of the relations, then being $\Nn^\ell$-closed is not a restriction:
	
	\begin{theorem}\label{thm:ArityComparison}
	Let $\ell>r\ge 1$, $\ell\ge 3$. Then $\Ll(\Qq^{\Nn^\ell}_r) \equiv \Ll(\Qq_r)$ and $\Ll(\Qq^{\NCSP^\ell}_r) \equiv \Ll(\Qq^{\text{CSP}}_r)$.
	\end{theorem}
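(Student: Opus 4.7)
The plan is to reduce both statements to a single structural observation: by Lemma~\ref{lem:TrivialPartPoly}, whenever all relation symbols in the vocabulary have arity strictly less than $\ell$, $n^\ell$ is a partial polymorphism of every structure in that vocabulary. Applied to a class $\Kk$ whose quantifier has arity $r<\ell$, this gives $\StructB\cup n^\ell(\StructB)=\StructB$ for every $\StructB$ in the vocabulary of $\Kk$, so being $\Nn^\ell$-closed collapses to plain downward monotonicity: $\StructA\le\StructB\in\Kk$ implies $\StructA\in\Kk$.

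The CSP equivalence is immediate from this. Every template $\StructB$ of arity at most $r<\ell$ has $n^\ell$ as a partial polymorphism, and the same holds for its core, so by Corollary~\ref{cor:quantifierClosedIffTemplateClosed} the quantifier $Q_{\CSP{\StructB}}$ lies in $\Qq^{\NCSP^\ell}$. Hence $\Qq^{\NCSP^\ell}_r=\Qq^{\CSPtext}_r$ as collections of quantifiers, which trivially implies the equivalence of their infinitary closures.

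For the general equivalence, one inclusion is immediate since $\Qq^{\Nn^\ell}_r\subseteq\Qq_r$. For the other, I would show by induction on formula structure that every subformula $Q_\Kk\bar{y}_1\ldots\bar{y}_m\Psi$ with $Q_\Kk\in\Qq_r$ has an equivalent rewrite in $\Ll(\Qq^{\Nn^\ell}_r)$. Decompose $\Kk$ into isomorphism classes $\Kk=\bigsqcup_{T\in\Kk/\cong}T$, a countable disjoint union since $\tau$ is finite and structures are finite, and for each $T$ let $T^\downarrow$ denote the isomorphism-closure of $\{\StructA : \StructA\le\StructB\text{ for some }\StructB\in T\}$. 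By the key observation, $T^\downarrow$ is $\Nn^\ell$-closed, and it inherits the arity bound, so $Q_{T^\downarrow}\in\Qq^{\Nn^\ell}_r$. Crucially, a $\tau$-structure $\StructA$ lies in $T$ iff $\StructA\in T^\downarrow$ and its cardinalities $|A|$ and $|R^\StructA|$, $R\in\tau$, agree with those of any representative of $T$, because a substructure with matching relation cardinalities must equal its superstructure.

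One subtle point deserves attention: expressing these cardinality conditions naively with first-order existential counting blows up the variable count with $T$, potentially violating $\Ll=\bigcup_k\Ll^k$. To circumvent this, I would use the fact that classes such as $\{\StructA : |R^\StructA|\le c\}$ and $\{\StructA : |A|\le n\}$ are themselves downward monotone, hence $\Nn^\ell$-closed of arity at most $r$, so the corresponding cardinality-bound quantifiers lie in $\Qq^{\Nn^\ell}_r$; exact cardinalities then arise as boolean combinations. Putting everything together yields
\[
Q_\Kk\bar{y}_1\ldots\bar{y}_m\Psi \;\equiv\; \bigvee_{T\in\Kk/\cong}\bigl(\phi_T(\Psi)\wedge Q_{T^\downarrow}\bar{y}_1\ldots\bar{y}_m\Psi\bigr),
\]
where $\phi_T$ is a boolean combination of cardinality-bound quantifiers applied to the components of $\Psi$. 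All subformulas reuse the variables $\bar{y}_1,\ldots,\bar{y}_m$ already present in $Q_\Kk$, so the rewrite stays within a fixed $\Ll^{k'}$; this variable-management via $\Nn^\ell$-closed cardinality quantifiers, rather than raw first-order counting, is the main technical point of the proof.
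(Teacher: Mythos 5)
Your proof is correct and takes essentially the same approach as the paper, which establishes both claims by citing Lemma~\ref{lem:TrivialPartPoly} together with, for the CSP claim, Corollary~\ref{cor:quantifierClosedIffTemplateClosed}, and, for the general claim, the argument for Corollary~10 of Dawar and Hella (2024). Your reconstruction of that cited argument --- decomposing $\Kk$ into isomorphism types, taking downward closures $T^\downarrow$ to obtain $\Nn^\ell$-closed quantifiers, and recovering the exact type by cardinality-bound quantifiers (themselves downward monotone, hence $\Nn^\ell$-closed) so that the variable count stays fixed across the infinite disjunction --- spells out precisely the details the paper delegates to its citation.
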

\begin{proof}	
The first claim follows from \cite[Lemma 9]{dawarHella2024} and the argument for \cite[Corollary 10]{dawarHella2024}.
The second claim follows from Corollary \ref{cor:quantifierClosedIffTemplateClosed} and Lemma~\ref{lem:TrivialPartPoly}.
\end{proof}
	
	Hence, only for $r \geq \ell$, the situation is interesting. In this case, we have:
	\begin{theorem}
	\label{thm:mainArityComparison}	
	Let $r \ge \ell \ge 3$. Then we have $\Ll(\Qq_{\ell-1}^{\CSPtext}) \leq \Ll(\Qq_r^{\NCSP^{\ell}}) \leq \Ll(\Qq^{\CSPtext}_{\lceil \frac{\ell-1}{\ell} \cdot r \rceil})$ and $\Ll(\Qq_{\ell-1}) \leq \Ll(\Qq_r^{\Nn^{\ell}}) \leq \Ll(\Qq_{\lceil \frac{\ell-1}{\ell} \cdot r \rceil})$.
	
	In particular, in the case $r = \ell$ we have $\Ll(\Qq_r^{\NCSP^{r}}) \equiv \Ll(\Qq^{\CSPtext}_{r-1})$ and $\Ll(\Qq_r^{\Nn^{r}}) \equiv \Ll(\Qq_{r-1})$.
	\end{theorem}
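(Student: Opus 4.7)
The theorem decomposes into four inclusions plus the $r=\ell$ collapse. The two upper bounds $\Ll(\Qq_r^{\NCSP^{\ell}}) \le \Ll(\Qq^{\CSPtext}_{\lceil \frac{\ell-1}{\ell} \cdot r \rceil})$ and $\Ll(\Qq_r^{\Nn^{\ell}}) \le \Ll(\Qq_{\lceil \frac{\ell-1}{\ell} \cdot r \rceil})$ are immediate from Corollaries~\ref{cor:arityReductionQuantifiers} and~\ref{cor:arityReductionAllQuantifiers} respectively. The CSP lower bound $\Ll(\Qq_{\ell-1}^{\CSPtext}) \le \Ll(\Qq_r^{\NCSP^{\ell}})$ I would derive from the stronger containment $\Qq_{\ell-1}^{\CSPtext} \subseteq \Qq_r^{\NCSP^{\ell}}$: for any CSP template $\StructB$ whose relations have arity less than $\ell$, Lemma~\ref{lem:TrivialPartPoly} makes $n^\ell$ a partial polymorphism of $\StructB$, so by \cite[Proposition 11]{dawarHella2024} the quantifier $Q_{\CSP{\StructB}}$ is $\Nn^\ell$-closed and hence lies in $\Qq_{\ell-1}^{\NCSP^{\ell}} \subseteq \Qq_r^{\NCSP^{\ell}}$.

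The main delicate step is the remaining inclusion $\Ll(\Qq_{\ell-1}) \le \Ll(\Qq_r^{\Nn^{\ell}})$, since an arbitrary arity-$(\ell-1)$ quantifier need not be $\Nn^\ell$-closed. The key observation is that on vocabularies of arity less than $\ell$, Lemma~\ref{lem:TrivialPartPoly} makes $\Nn^\ell$-closure coincide with plain downward monotonicity: if $\ar(R)<\ell$ for every $R\in\tau$, then $p_B(\StructB) \le \StructB$, so $\StructB \cup p_B(\StructB) = \StructB$ and the closure condition reduces to closure under the substructure order $\le$. Thus the problem reduces to expressing any isomorphism-closed class $\Kk$ of such $\tau$-structures as a countable Boolean combination of downward-monotone classes. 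For each $\StructC\in\Kk$ I would set
\[
\Kk^{\le}_\StructC := \{\StructB \mid \exists\,\StructC'\cong \StructC,\ \StructB \le \StructC'\}, \qquad \Kk^{\ge}_\StructC := \{\StructB \mid \exists\,\StructC'\cong \StructC,\ \StructC' \le \StructB\};
\]
a short cardinality check, using that isomorphic $\tau$-structures realise relations of equal size, shows that $\Kk^{\le}_\StructC \cap \Kk^{\ge}_\StructC$ is exactly the isomorphism class of $\StructC$, hence $\Kk = \bigcup_{\StructC \in \Kk/\cong} (\Kk^{\le}_\StructC \cap \Kk^{\ge}_\StructC)$. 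Both $\Kk^{\le}_\StructC$ and the complement $\overline{\Kk^{\ge}_\StructC}$ are downward monotone and closed under isomorphism, so they determine quantifiers in $\Qq_{\ell-1}^{\Nn^{\ell}} \subseteq \Qq_r^{\Nn^{\ell}}$. For any interpretation $\Psi = (\psi_1,\ldots,\psi_m)$ we then have
\[
Q_\Kk \bar y_1, \ldots, \bar y_m \, \Psi \ \equiv\ \bigvee_{\StructC \in \Kk/\cong} \Bigl( Q_{\Kk^{\le}_\StructC} \bar y_1, \ldots, \bar y_m\, \Psi \ \wedge\ \neg\, Q_{\overline{\Kk^{\ge}_\StructC}} \bar y_1, \ldots, \bar y_m\, \Psi \Bigr),
\]
which is a valid sentence of $\Ll(\Qq_r^{\Nn^{\ell}})$ because the disjunction ranges over countably many isomorphism types. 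The equivalences at $r=\ell$ then follow at once: $\lceil \frac{\ell-1}{\ell} \cdot \ell \rceil = \ell-1$, so the sandwiching bounds coincide.

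The chief obstacle is precisely the decomposition argument in the second lower bound: turning an arbitrary isomorphism-closed class into a countable Boolean combination of downward-monotone classes while keeping the arity at most $\ell-1$. Once the reduction of $\Nn^\ell$-closure to downward monotonicity below arity $\ell$ is in place, this decomposition is essentially the only ingredient not already supplied by the arity-reduction corollaries and by Lemma~\ref{lem:TrivialPartPoly}, and the rest of the theorem assembles routinely.
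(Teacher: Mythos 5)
Your proposal is correct and proves all four inclusions plus the $r=\ell$ collapse. The upper bounds and the CSP lower bound match the paper's argument directly. The main difference lies in how you handle $\Ll(\Qq_{\ell-1}) \le \Ll(\Qq_r^{\Nn^\ell})$: the paper simply invokes its Theorem~\ref{thm:ArityComparison} (which in turn defers to \cite[Lemma 9 and Corollary 10]{dawarHella2024}) to obtain $\Ll(\Qq_{\ell-1}) \equiv \Ll(\Qq_{\ell-1}^{\Nn^\ell})$, whereas you give a self-contained decomposition. Your key insight---that on vocabularies with relations of arity less than $\ell$, Lemma~\ref{lem:TrivialPartPoly} collapses $\Nn^\ell$-closure to downward monotonicity under $\le$, and that any isomorphism-closed class can be expressed as $\bigcup_{\StructC \in \Kk/\cong}\bigl(\Kk^{\le}_\StructC \cap \Kk^{\ge}_\StructC\bigr)$ where $\Kk^{\le}_\StructC$ and $\overline{\Kk^{\ge}_\StructC}$ are downward-monotone and isomorphism-closed, hence $\Nn^\ell$-closed---is a clean and complete argument. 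The cardinality check (isomorphic structures realise relations of equal size, forcing $R^{\StructC''}=R^\StructB=R^{\StructC'}$) is correct, the countable disjunction is licensed by $\Ll$, and no new variables are introduced so the translation respects each $\Ll^k$ fragment. What your approach buys is independence from the prior paper: the reader can see directly why $\Nn^\ell$-closure is vacuous at low arity. What the paper's approach buys is brevity, since the needed equivalence is already on record. Both routes are valid and lead to the same chain $\Ll(\Qq_{\ell-1}) \le \Ll(\Qq_{\ell-1}^{\Nn^\ell}) \le \Ll(\Qq_r^{\Nn^\ell})$.
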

	\begin{proof}
	The inclusions $\Ll(\Qq_r^{\NCSP^{\ell}}) \leq \Ll(\Qq^{\CSPtext}_{\lceil \frac{\ell-1}{\ell} \cdot r \rceil})$ and $\Ll(\Qq_r^{\Nn^{\ell}}) \leq \Ll(\Qq_{\lceil \frac{\ell-1}{\ell} \cdot r \rceil})$ follow from Corollaries \ref{cor:arityReductionQuantifiers} and \ref{cor:arityReductionAllQuantifiers}. The inclusions $\Ll(\Qq_{\ell-1}^{\CSPtext}) \leq \Ll(\Qq_r^{\NCSP^{\ell}})$ and $\Ll(\Qq_{\ell-1}) \leq \Ll(\Qq_r^{\Nn^{\ell}})$ follow from Theorem \ref{thm:ArityComparison}, since clearly $\Ll(\Qq_{\ell-1}^{\NCSP^{\ell}}) \leq \Ll(\Qq_r^{\NCSP^{\ell}})$ and $\Ll(\Qq_{\ell-1}^{\Nn^{\ell}}) \leq \Ll(\Qq_r^{\Nn^{\ell}})$.
	\end{proof}

	\section{Arity hierarchies for partial near-unanimity quantifiers}\label{sec:ArityHierarchy}	
 Theorem \ref{thm:mainArityComparison} shows that restricting quantifiers of arity $r$ to those that are also $\Nn^r$-closed has the same effect on expressive power as reducing the arity to $r-1$, and does not really give rise to a new interesting logic.
    Thus, the natural question arises whether there are \emph{any} combinations of polymorphism arities $\ell$ and quantifier arities $r$ for which the $\Nn^\ell$-closed quantifiers yield a logic that is strictly different from any $\Ll(\Qq_{r'})$. In other words, we would like to know if the inclusions in Theorem \ref{thm:mainArityComparison} are strict if $r \neq \ell$. The following theorem establishes this already for the case $r = \ell+1$: 
	\begin{theorem}
		\label{thm:mainAritySeparation}	
		Fix $\ell \geq 3$. 
        Then $\Ll(\Qq_{\ell+1}^{\NCSP^{\ell}}) \not\leq \Ll(\Qq_{\ell-1})$. 
	\end{theorem}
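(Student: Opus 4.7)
The plan is to construct, for each $k$, a pair of $\tau$-structures $\StructA_k, \StructB_k$ and a fixed $(\ell+1)$-ary CSP template $\StructB$ for which (i) $n^\ell$ is a partial polymorphism of the core of $\StructB$, so that Corollary~\ref{cor:quantifierClosedIffTemplateClosed} places $Q_{\CSP{\StructB}}$ in $\Qq_{\ell+1}^{\NCSP^\ell}$; (ii) $\StructA_k \in \CSP{\StructB}$ while $\StructB_k \notin \CSP{\StructB}$; and (iii) Duplicator wins the bijection game $\BP^k_{\ell-1}(\StructA_k, \StructB_k)$. Corollary~\ref{cor:DuplicatorStrategyImpliesInseparatbilityOfClasses} together with (i)--(iii) then yields the claimed non-inclusion $\Ll(\Qq_{\ell+1}^{\NCSP^\ell}) \not\leq \Ll(\Qq_{\ell-1})$.

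I would implement the construction in the Cai-F\"urer-Immerman style. The template $\StructB$ carries a single $(\ell+1)$-ary relation $R$, chosen to encode a group-labelled local consistency condition: $\ell$ of the coordinates participate in a parity-style constraint, while the remaining ``witness'' coordinate records an auxiliary value that prevents $R$ from further decomposing via the projection identity in Lemma~\ref{lem:arityReductionTrick}. The witness must be designed so that $n^\ell$ still preserves $R$, which rules out the purely Maltsev (abelian-group-equation) relations typically used in CFI-type arguments. The pair $(\StructA_k, \StructB_k)$ is then built over a base graph $G_k$ of large treewidth (for concreteness, a $k \times k$ grid), by substituting each vertex with a gadget encoding a group element and each $(\ell+1)$-hyperedge with a local copy of $R$; the twisted structure $\StructB_k$ differs from $\StructA_k$ only by flipping the group element at one distinguished edge, which by a standard CFI/parity argument annihilates all global homomorphisms to $\StructB$.

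For the bijection game, Duplicator exploits the fact that any Spoiler move pebbles at most $\ell-1$ new elements, and can therefore inspect at most $\ell-1$ coordinates of any single constraint at once. Combined with the $k$-pebble budget and the large treewidth of $G_k$, a connectivity/flow argument on $G_k$ will ensure that Duplicator can always produce a bijection $f \colon A_k \to B_k$ whose restriction to pebbled gadgets is a partial isomorphism: any local inconsistency induced by the twist can be absorbed in an untouched region. This mirrors the classical CFI bijection-game analysis, adapted to the $(\ell-1)$-tuple move rule of $\BP^k_{\ell-1}$ rather than the single-element rule.

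The main obstacle is the design of $R$ itself. It must simultaneously (a) admit $n^\ell$ as a partial polymorphism \emph{and} (b) give rise to a CSP that is genuinely of arity $\ell$ after the collapse of Section~\ref{sec:collapses}, while still supporting a CFI-like gadget that propagates a global obstruction. As illustrated by the hypergraph-colouring example preceding Lemma~\ref{lem:TrivialPartPoly}, many natural candidates collapse fully, and most group-equation templates fail to be near-unanimity-closed; threading the needle between these two constraints is the crux of the proof. A secondary challenge is the twist-invalidation step: showing that $\StructB_k \not\to \StructB$ requires a genuinely $(\ell+1)$-ary global obstruction, which I expect to prove by a parity/cohomology-style argument exploiting the group structure built into the witness coordinate of $R$.
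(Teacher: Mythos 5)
Your high-level plan matches the paper's proof exactly: build an $(\ell+1)$-ary template $\StructB$ admitting $n^\ell$ as a partial polymorphism, build CFI-style instances over a suitable base graph so that one maps to $\StructB$ and the twist does not, and run a Duplicator strategy for $\BP^k_{\ell-1}$ that shifts the defect along pebble-free paths, concluding via Corollary~\ref{cor:DuplicatorStrategyImpliesInseparatbilityOfClasses}. You also correctly identify the crux -- designing a relation that is simultaneously $\Nn^\ell$-closed and supports a genuine global CFI obstruction -- and correctly observe that plain group-equation relations fail to be $\Nn^\ell$-closed, so there is nothing wrong with your architecture.

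The genuine gap is that you leave that crux unresolved, and it is where all the work of the proof lives. The paper's resolution is quite specific: work over $\bbZ_3$ (not $\bbZ_2$), and for $j\in\bbZ_3$ set $R_j=\{(a_1,\dots,a_\ell,\bar a\cdot\bar u)\mid \sum_i a_i=j\}$ with $\bar u=(0,1,2,0,\dots,0)$. The witness coordinate is not an auxiliary group label of the kind common in CFI gadgets; it is a \emph{linear functional of the same tuple}, and the whole point of choosing $\bbZ_3$ and the coefficients $(0,1,2)$ is that whenever $\ell$ tuples of $R_j$ have near-unanimous first $\ell$ coordinates but would produce a row not already present, the functional $\bar a\mapsto\bar a\cdot\bar u$ takes \emph{three distinct} values across the $\ell$ tuples, so $n^\ell$ is undefined on the last column (Lemmas~\ref{lem:baseCaseSeparatingVector}--\ref{lem:conditionForGettingARowBack}, yielding Lemma~\ref{lem:modifiedTemplatePolyClosed}). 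This matrix-over-$\bbZ_3$ analysis is the heart of the proof and your sketch does not supply a substitute for it; a parity/cohomology witness over $\bbZ_2$ would not work, since two values cannot block $n^\ell$. A secondary issue is your choice of base graph: a $k\times k$ grid is not $\ell$-regular, and the gadget in Definition~\ref{def:Components} requires every vertex to have exactly $\ell$ incident edges matching the first $\ell$ coordinates of $R_j$. The paper builds a bespoke $\ell$-regular graph $G_{\ell,n}$ from $(\ell-1)$-regular bipartite pieces precisely to get regularity together with the connectivity needed for Duplicator's path-shifting argument (and the twist-invalidation step is then the simple bipartite double-counting of Lemma~\ref{lem:nonSatisfiabilityOfCSP}, not a cohomological argument). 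Your proposal is a faithful reconstruction of the proof's skeleton, but without the $\bbZ_3$ witness-coordinate construction and the accompanying near-unanimity analysis it does not yet constitute a proof.
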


    \begin{corollary}
    \label{cor:nearUnanimityStrictlySandwiched}
        For every $\ell \geq 3$,
    	$
		\Ll(\Qq_{\ell-1})\lneq\Ll(\Qq_{\ell+1}^{\Nn^{\ell}}) \lneq \Ll(\Qq_{\ell}).
		$
    \end{corollary}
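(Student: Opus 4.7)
\textit{Proof approach.} The two non-strict inclusions are an immediate consequence of Theorem~\ref{thm:mainArityComparison} applied with $r=\ell+1$. Since
\[
  \left\lceil \tfrac{\ell-1}{\ell}(\ell+1)\right\rceil = \left\lceil \ell - \tfrac{1}{\ell}\right\rceil = \ell
\]
for $\ell\ge 3$, that theorem gives $\Ll(\Qq_{\ell-1}) \leq \Ll(\Qq_{\ell+1}^{\Nn^{\ell}}) \leq \Ll(\Qq_{\ell})$, so only strictness of the two inclusions remains to be checked.

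For the left strict inclusion, Theorem~\ref{thm:mainAritySeparation} already supplies a sentence $\varphi \in \Ll(\Qq_{\ell+1}^{\NCSP^\ell})$ that is not expressible in $\Ll(\Qq_{\ell-1})$. By the definition of the quantifier classes, every $\NCSP^\ell$-closed CSP quantifier is in particular $\Nn^\ell$-closed, so $\Qq_{\ell+1}^{\NCSP^\ell} \subseteq \Qq_{\ell+1}^{\Nn^\ell}$ and hence $\varphi \in \Ll(\Qq_{\ell+1}^{\Nn^\ell})$. Combined with the non-strict inclusion above, this yields $\Ll(\Qq_{\ell-1}) \lneq \Ll(\Qq_{\ell+1}^{\Nn^\ell})$.

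For the right strict inclusion, my plan is to combine Hella's arity hierarchy~\cite{Hella96} with the pebble game characterization of $\Ll^k(\Qq^{\Nn^\ell})$ developed in~\cite{dawarHella2024}. Hella's separation supplies, for each $k$, structures $\StructA_m, \StructB_m$ on which Duplicator wins $\BP^{k}_{\ell-1}(\StructA_m,\StructB_m)$ while Spoiler wins $\BP^{k}_\ell(\StructA_m,\StructB_m)$. I would then exhibit a suitable variant of these structures---probably a CFI-like modification in the spirit of the separation proofs elsewhere in this paper---on which Duplicator wins the $\Nn^\ell$-closed quantifier pebble game with Spoiler allowed $(\ell+1)$-ary moves, while Spoiler still wins the plain game $\BP^{k}_\ell$. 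Such a pair witnesses $\Ll(\Qq_\ell)\not\leq \Ll(\Qq_{\ell+1}^{\Nn^\ell})$ and completes the proof.

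The main obstacle is this last step: the extra power given to Spoiler in the $\Nn^\ell$-closed game---namely the ability to play $(\ell+1)$-ary tuples whose induced relation must be $\Nn^\ell$-closed---can threaten a naive Duplicator bijection strategy inherited from the plain bijection game at arity $\ell-1$. One therefore needs the symmetries of the underlying algebraic construction, plausibly a CFI-type gadget over an Abelian group chosen to be compatible with $\ell$-ary near-unanimity, to ensure that every Spoiler move can be answered by a bijection that respects $\Nn^\ell$-closure, while preserving the $\ell$-ary distinction that makes $\BP^{k}_\ell$ separating.
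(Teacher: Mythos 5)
The first two-thirds of your argument are correct and match the paper. The ceiling computation $\lceil \frac{\ell-1}{\ell}(\ell+1)\rceil = \ell$ correctly specializes Theorem~\ref{thm:mainArityComparison} to $r=\ell+1$, giving both non-strict inclusions, and your derivation of the left strict inclusion from Theorem~\ref{thm:mainAritySeparation} via $\Qq_{\ell+1}^{\NCSP^\ell}\subseteq\Qq_{\ell+1}^{\Nn^\ell}$ is exactly the paper's argument.

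The gap is in the right strict inclusion. You need $\Ll(\Qq_\ell)\not\leq\Ll(\Qq_{\ell+1}^{\Nn^\ell})$, and what you give is a research plan, not a proof: you propose to ``exhibit a suitable variant'' of Hella's structures, then immediately acknowledge that you do not know how to verify the key property (that Duplicator wins the $\Nn^\ell$-closed quantifier game with $(\ell+1)$-ary moves while Spoiler wins $\BP^k_\ell$). You have not constructed the structures, chosen the algebraic gadget, or described the Duplicator strategy, so nothing is actually established. Moreover, this is wasted effort: the needed separation is not new. It was proved as Theorem~30 in \cite{dawarHella2024}, and the paper simply cites it. Since that prior work is cited throughout the present paper (and its pebble game is the one you yourself invoke), the correct move is to quote that theorem rather than re-derive it. If you did want to re-prove it, your plan is not obviously viable either, because the hard direction is precisely the Duplicator strategy in the $\Nn^\ell$-closed game, which is not a trivial modification of Duplicator's strategy in Hella's bijection game; the prior work devotes substantial machinery to exactly this.
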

    \begin{proof}
    The first inclusion follows with Theorem \ref{thm:mainArityComparison}, and the fact that it is strict is a consequence of Theorem \ref{thm:mainAritySeparation}, which shows that already the fragment $\Ll(\Qq_{\ell+1}^{\NCSP^{\ell}})$ of $\Ll(\Qq_{\ell+1}^{\Nn^{\ell}})$ is not contained in $\Ll(\Qq_{\ell-1})$.
    The second inclusion is a consequence of Theorem \ref{thm:mainArityComparison}, and $\Ll(\Qq_{\ell}) \not\leq \Ll(\Qq_{\ell+1}^{\Nn^{\ell}})$ was shown in \cite[Theorem 30]{dawarHella2024}.
    \end{proof}
    A second natural question that we answer in this section is whether the polymorphism-closed quantifiers exhibit an arity hierarchy with respect to the arity of the quantifiers (keeping the arity of the polymorphism fixed): For the class of all generalized quantifiers, Hella~\cite{Hella96} proved that $\Ll(\Qq_r) \lneq \Ll(\Qq_{r+1})$ for all $r \in \bbN$. The following result shows that for every fixed polymorphism-arity $\ell$, one can always gain strictly more expressive power by sufficiently increasing the arity of the quantifiers.   
	\begin{theorem}
		\label{thm:hierarchyForFixedL}		
		Let $r\ge\ell \ge 3$. Then $\Ll(\Qq^{\NCSP^{\ell}}_{q(r,\ell)}) \nleq \Ll(\Qq_{r})$,
        where $q(r,\ell) = (2r+1)(\lfloor\frac{2r+1}{\ell-1}\rfloor+2)$.
	\end{theorem}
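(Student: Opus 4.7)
The plan is to apply Corollary~\ref{cor:DuplicatorStrategyImpliesInseparatbilityOfClasses}: for each $k\ge r$, exhibit structures $\StructA_k,\StructA'_k$ over a vocabulary $\sigma$ of maximum arity $q(r,\ell)$, and a template $\StructB$ over $\sigma$, such that (i) $n^\ell$ is a partial polymorphism of the core of $\StructB$ (so $Q_{\CSP{\StructB}}\in\Qq^{\NCSP^\ell}_{q(r,\ell)}$ by Corollary~\ref{cor:quantifierClosedIffTemplateClosed}), (ii) $\StructA_k\to\StructB$ but $\StructA'_k\not\to\StructB$, and (iii) Duplicator wins $\BP^k_r(\StructA_k,\StructA'_k)$. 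A sentence expressing $\CSP{\StructB}$-membership separates the classes $\{\StructA_k\}$ and $\{\StructA'_k\}$ in $\Ll(\Qq^{\NCSP^\ell}_{q(r,\ell)})$, while Corollary~\ref{cor:DuplicatorStrategyImpliesInseparatbilityOfClasses} guarantees that no $\Ll(\Qq_r)$-sentence can, yielding the theorem.

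The structures $\StructA_k,\StructA'_k$ would be Cai--F\"urer--Immerman-like pairs: fix a base (hyper)graph $G_k$ on at least $k$ vertices whose cop-number exceeds $r$ (for instance, a long cycle, a grid, or a bounded-degree high-girth expander), and replace each edge by a single gadget of arity $q(r,\ell)$. The coordinates of each gadget are arranged as a $(2r+1)\times(\lfloor\tfrac{2r+1}{\ell-1}\rfloor+2)$ matrix, with the rows carrying the CFI parity bits that distinguish the untwisted $\StructA_k$ from the twisted $\StructA'_k$, and the columns providing the redundancy needed for $n^\ell$-closure of the template $\StructB$, which consists of the ``valid'' (parity-consistent) gadget tuples. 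With this setup (ii) becomes the standard CFI satisfiability distinction, while (i) reduces to a column-wise pigeonhole: if $\ell$ tuples of $R^{\StructB}$ admit a well-defined $\hat{n^\ell}$-image, then in every row at least $\ell-1$ of them must already agree on enough columns to force the output back into $R^{\StructB}$; the column width $\lfloor\tfrac{2r+1}{\ell-1}\rfloor+2$ is chosen precisely so that this pigeonhole goes through.

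The main technical obstacle will be (iii). The plan is to lift a Duplicator strategy from the $(r+1)$-cops-and-robber game on $G_k$ to the bijection game on the CFI-blown-up structures, in the style of~\cite{caifurimm92,Hella96,dawarHella2024}: Spoiler's $r$-ary pebble move exposes at most $r$ base edges per round, and since $G_k$'s cop-number exceeds $r$, Duplicator can reroute the twist to an unexposed edge and respond with a bijection respecting the gadget automorphisms. The delicate point is that Duplicator must commit to a global bijection \emph{before} seeing Spoiler's tuple, so the gadget symmetries have to support simultaneous ``sliding'' options for every possible $r$-tuple that Spoiler might pick; the $(2r+1)$-row factor is tight for this. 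The two design constraints -- enough column redundancy for (i) and enough row rigidity for (iii) -- push $q(r,\ell)$ to exactly the displayed product, which is why the theorem statement features this specific value.
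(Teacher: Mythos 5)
Your high-level outline is correct and matches the paper's strategy: build a CFI-like pair over a suitable base graph, show the twist is detected by a CSP quantifier whose $\Nn^\ell$-closed template has arity at most $q(r,\ell)$, and win the $\BP^k_r$ game via invariant-maintenance; then invoke Corollary~\ref{cor:DuplicatorStrategyImpliesInseparatbilityOfClasses}. But the central technical step, ensuring $n^\ell$-closure of the template, is where your plan has a genuine gap. You propose that the template's $n^\ell$-closure ``reduces to a column-wise pigeonhole'' over a $(2r+1)\times(\lfloor\tfrac{2r+1}{\ell-1}\rfloor+2)$ grid of ``CFI parity bits.'' This will not work as stated. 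Over $\bbZ_2$ (parity bits), the partial near-unanimity operation $n^\ell$ is \emph{total} --- in every column at least $\ell-1$ of the $\ell$ entries agree because there are only two values --- so your pigeonhole reasoning gives no leverage and $\widehat{n^\ell}$ freely produces new tuples, breaking closure. What the paper actually does is work over $\bbZ_3$ and exploit a \emph{separating-vector} mechanism (Lemmas~\ref{lem:baseCaseSeparatingVector}, \ref{lem:separatingVector}, \ref{lem:separatingVectorPairs}): the relation tuples of length $2r+1$ are augmented by extra coordinates $\bar{a}\cdot\bar{u}$ for a carefully engineered set $\Uu\subseteq\bbZ_3^{2r+1}$, so that whenever $\widehat{n^\ell}$ applied to $\ell$ rows would yield a row not already present, some auxiliary coordinate sees three distinct values and $n^\ell$ is undefined there. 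The target arity $q(r,\ell)$ is not really a row-times-column product; it is the bound $2r+1+|\Uu|\le(2r+1)(\lfloor\tfrac{2r+1}{\ell-1}\rfloor+2)$, where $|\Uu|=(\ell-1)q(q-1)+2pq$ comes from covering pairs of columns inside a balanced partition $K_1\uplus\cdots\uplus K_{\ell-1}$ of $[2r+1]$. Without this $\bbZ_3$ trick there is no obvious way to make the template $\Nn^\ell$-closed while keeping the linear-algebraic structure CFI needs. The Duplicator-strategy portion of your sketch is on the right track (pebble-avoiding reroutes on a highly connected base graph), but it also misses a subtlety handled in the paper: when a pebble sits on the auxiliary coordinates, the correction has to be distributed over two paths rather than one, since a single reroute would disturb the pebbled auxiliary value.
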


    \begin{corollary}
    Let $r\ge\ell \ge 3$. Then there is an $r' > r$ such that $\Ll(\Qq^{\NCSP^{\ell}}_{r}) \lneq \Ll(\Qq^{\NCSP^{\ell}}_{r'})$.
    \end{corollary}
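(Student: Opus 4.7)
The corollary is a direct consequence of Theorem~\ref{thm:hierarchyForFixedL}. The plan is to take $r' \coloneqq q(r,\ell) = (2r+1)(\lfloor\frac{2r+1}{\ell-1}\rfloor+2)$ and combine the conclusion of that theorem with two completely trivial syntactic inclusions to obtain strict separation. First, note that $r' > r$ holds automatically for this choice, since $q(r,\ell) \geq 2(2r+1) = 4r+2 > r$ whenever $r \geq 1$, so the witness $r'$ is indeed strictly larger than $r$ as required.

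For the inclusion direction, $\Ll(\Qq^{\NCSP^{\ell}}_{r}) \leq \Ll(\Qq^{\NCSP^{\ell}}_{r'})$ follows immediately from the syntactic containment $\Qq^{\NCSP^{\ell}}_{r} \subseteq \Qq^{\NCSP^{\ell}}_{r'}$, which holds because enlarging the arity bound only adds more quantifiers to the collection. For strictness, I would argue by contradiction: if $\Ll(\Qq^{\NCSP^{\ell}}_{r'}) \leq \Ll(\Qq^{\NCSP^{\ell}}_{r})$, then composing with the trivial inclusion $\Ll(\Qq^{\NCSP^{\ell}}_{r}) \leq \Ll(\Qq_{r})$ (every $\NCSP^{\ell}$-closed quantifier of arity at most $r$ is in particular a quantifier of arity at most $r$, since $\Qq^{\NCSP^{\ell}}_r \subseteq \Qq_r$) would yield $\Ll(\Qq^{\NCSP^{\ell}}_{r'}) \leq \Ll(\Qq_{r})$, directly contradicting Theorem~\ref{thm:hierarchyForFixedL} with the chosen $r' = q(r,\ell)$.

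There is really no obstacle here: all of the difficult work, namely the CFI-like algebraic construction together with the analysis of the $\Nn^{\ell}$-closed quantifier pebble game, is absorbed into Theorem~\ref{thm:hierarchyForFixedL} itself. The corollary merely repackages that separation as a strict step in the arity hierarchy of $\Nn^{\ell}$-closed CSP quantifiers (for fixed polymorphism arity~$\ell$), by using the fact that an unrestricted $r$-ary quantifier is at least as expressive as an $\Nn^{\ell}$-closed one of the same arity. So the proof will consist of at most a few lines instantiating $r'$ and citing Theorem~\ref{thm:hierarchyForFixedL}.
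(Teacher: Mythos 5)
Your proposal is correct and is essentially the same argument the paper intends (the corollary is stated without proof precisely because it is this immediate consequence of Theorem~\ref{thm:hierarchyForFixedL}): take $r' = q(r,\ell)$, note $r' > r$, use the trivial inclusion $\Qq^{\NCSP^{\ell}}_{r'} \subseteq \Qq_{r'}$ together with the theorem to rule out $\Ll(\Qq^{\NCSP^{\ell}}_{r'}) \leq \Ll(\Qq_r) \geq \Ll(\Qq^{\NCSP^{\ell}}_r)$. No gap.
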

    Note that the separation $\Ll(\Qq_{\ell+1}^{\NCSP^{\ell}}) \not\leq \Ll(\Qq_{\ell-1})$ in Theorem \ref{thm:mainAritySeparation} is tighter than the one that is implied by Theorem \ref{thm:hierarchyForFixedL} for $r=\ell+1$. So neither of the two results implies the other.

    The proofs of Theorems \ref{thm:mainAritySeparation} and \ref{thm:hierarchyForFixedL} are structured similarly but employ different constructions. In both cases we construct a CSP template structure whose associated CSP quantifier can be used to distinguish certain CFI-like structures, which we view as the CSP instances. These instances are such that they cannot be distinguished in the logic $\Ll(\Qq_r)$, for the respective value of $r$ that we have to beat in the theorems. The novelty in both constructions is that we enforce the template structure to be closed under the $\ell$-ary partial near-unanimity function, so that the corresponding CSP quantifier is indeed in $\Qq^{\NCSP^{\ell}}$. 

    The rest of the section is divided into three parts: First, construction of the respective templates needed for the proofs of the two theorems, then construction of the instances, and finally, we provide winning strategies for Duplicator in the bijective pebble game to prove indistinguishability of the instances in the respective logic $\Ll(\Qq_r)$.

    \subsection{Construction of polymorphism-closed templates}

    The templates we construct have as relations the solution spaces of certain systems of linear equations. To make them closed under the partial near-unanimity function $n^\ell$, we  ensure that no collection of tuples in a relation satisfies the conditions for applicability of $n^\ell$, except in cases where $n^\ell$ outputs a tuple that is already in the collection. Then $n^\ell$ is a partial polymorphism of the template.

    \paragraph*{Template structure for the proof Theorem \ref{thm:mainAritySeparation}}
    We start by developing the ideas to forbid the applicability of $n^\ell$.
    We say that an $n \times m$-matrix over a ring or field has the \emph{$n$-near-unanimity} property if in every column, all entries except possibly one (the \emph{minority entry}), are equal. 
    Throughout this section, arithmetic operations are always meant in the ring $\bbZ_3$. Omitted/shortened proofs can be found in Appendix \ref{sec:appendixArityHierarchySection}.

	\begin{lemma}
		\label{lem:baseCaseSeparatingVector}
		Let $A \in \bbZ_3^{3 \times 3}$ be a matrix satisfying:
		\begin{enumerate}
			\item $A$ has the $3$-near-unanimity property.
			\item $\sum_{j \in [3]} a_{1j} = \sum_{j \in [3]} a_{2j} = \sum_{j \in [3]} a_{3j}$, that is, the row-sums are all equal.
			\item $\widehat{n^3}(A)$ is a triple that is not a row of $A$.
		\end{enumerate}	
		Then $A \cdot (0,1,2)^T$ is a vector in $\bbZ_3^3$ with all three entries distinct.
	\end{lemma}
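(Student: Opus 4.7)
The plan is to extract strong structural information from condition~3 in combination with condition~1, then combine with condition~2 to pin down the matrix up to a small number of parameters, and finally do a direct computation.

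First I would analyze the location of minority entries. Condition~1 means every column of $A$ has a well-defined majority value $m_j \in \bbZ_3$ and at most one minority entry. Since $\widehat{n^3}(A) = (m_1, m_2, m_3)$ and condition~3 says this triple is not a row of $A$, every row must contain at least one minority entry (otherwise that row would equal $(m_1,m_2,m_3)$). Counting: there are $3$ rows, each containing $\ge 1$ minority entry, so at least $3$ minorities in total; but there are $3$ columns, each containing $\le 1$ minority, so at most $3$ in total. Hence there are exactly $3$ minority entries, arranged in a permutation pattern: there exists a permutation $\pi$ of $[3]$ such that the unique minority entry in row $i$ sits in column $\pi(i)$.

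Next I would parameterize: write the minority value in row $i$ as $m_{\pi(i)} + c_i$ with $c_i \in \{1,2\} \subseteq \bbZ_3 \setminus \{0\}$. The row sum of row $i$ equals $(m_1 + m_2 + m_3) + c_i$, so the equal-row-sum hypothesis (condition~2) immediately forces $c_1 = c_2 = c_3 =: c$, a single nonzero element of $\bbZ_3$.

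Finally I would compute $v := A\cdot(0,1,2)^T$ directly. Its $i$-th entry is $v_i = a_{i2} + 2\,a_{i3}$. Splitting by the value of $\pi(i)$: if $\pi(i)=1$ then $v_i = m_2 + 2 m_3$; if $\pi(i)=2$ then $v_i = m_2 + 2 m_3 + c$; if $\pi(i)=3$ then $v_i = m_2 + 2 m_3 + 2c$. Because $\pi$ is a bijection on $[3]$, the three rows realize each of these three cases exactly once, so $\{v_1,v_2,v_3\} = \{m_2+2m_3,\ m_2+2m_3+c,\ m_2+2m_3+2c\} = m_2+2m_3 + \{0,c,2c\}$. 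Since $c \neq 0$ in $\bbZ_3$, we have $\{0,c,2c\} = \bbZ_3$, so $v$ has three distinct entries.

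The only real step of the argument is the counting observation that minorities form a permutation pattern; everything else is bookkeeping. There is no serious obstacle, though one must be careful to exploit that $\bbZ_3$ has exactly $3$ elements in two places: to ensure that \emph{every} column has a majority (pigeonhole on $3$ entries means at least two coincide), and to conclude $\{0,c,2c\} = \bbZ_3$ for any nonzero $c$. Both facts are special to the base case $\ell = |\bbZ_3| = 3$ and are what make this the appropriate base case from which the authors will presumably bootstrap the more general template construction.
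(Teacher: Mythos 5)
Your proof is correct and takes a genuinely cleaner route than the paper's. The paper argues pairwise: it observes that any two rows of $A$ must differ in exactly two positions (not one, not three), and then computes $v_i - v_j$ for each such pair, splitting into subcases according to which two positions differ, and invoking the equal-row-sum hypothesis in the one subcase where it is needed. Your proof instead extracts a \emph{global} structural normal form first: every row must host a minority entry (else that row equals $\widehat{n^3}(A)$), the double-counting bound (at most one minority per column, at least one per row, three of each) forces the minorities to lie in a permutation pattern $\pi$, and the equal-row-sum condition collapses the three per-row offsets to a single nonzero $c \in \bbZ_3$. At that point $A\cdot(0,1,2)^T$ is an explicit coset $m_2+2m_3+\{0,c,2c\}=\bbZ_3$, with no case analysis left. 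Your version makes the role of condition~2 much more transparent (it is exactly what forces a common offset $c$) and gives a crisper description of the admissible matrices, at the cost of one extra combinatorial observation (the permutation pattern). Both approaches are valid; yours generalizes more gracefully in spirit, though the paper ultimately reduces the general $\ell\times\ell$ case to this $3\times 3$ base case anyway via Lemma~\ref{lem:separatingVector}.

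One small inaccuracy in your closing commentary: you write that pigeonhole on three entries over $\bbZ_3$ ensures every column has a majority, but this is false --- the column $(0,1,2)^T$ has no majority. The near-unanimity property is a \emph{hypothesis} of the lemma, not a consequence of the alphabet size. This does not affect the body of your proof, which correctly treats condition~1 as given, but the remark as stated is wrong. The second point in that sentence ($\{0,c,2c\}=\bbZ_3$ for $c\neq 0$) is of course correct and genuinely uses $|\bbZ_3|=3$.
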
	
	\begin{proof}
		Assumption 3) entails that no two rows of $A$ are identical. 
		Since no two rows of $A$ are equal, the row $A_{2-}$ differs from $A_{1-}$ in at least one position. They cannot differ in all positions because then, by the first property, $A_{3-}$ must be identical to either $A_{1-}$ or $A_{2-}$, which cannot be the case. If $A_{1-}$ and $A_{2-}$ differ in only one position, then $\widehat{n^3}(A) \in \{A_{1-}, A_{2-} \}$, so this also contradicts assumption 3). Thus, $A_{1-}$ and $A_{2-}$ differ in exactly two positions. Assume first that $a_{11} \neq a_{21}$ and $a_{12} \neq a_{22}$, but $a_{13} = a_{23}$. 
		Let $v_1 = A_{1-} \cdot (0,1,2)^T$ and $v_2 = A_{2-} \cdot (0,1,2)^T$. Then $v_2 - v_1 = a_{22} - a_{12} \neq 0$, so $v_1 \neq v_2$. The same argument works in case that $a_{11} \neq a_{21}$ and $a_{13} \neq a_{23}$. In case that $a_{11} = a_{21}$, $a_{12} \neq a_{22}$, and $a_{13} \neq a_{23}$, we have $v_2 - v_1 = a_{22} - a_{12} + 2(a_{23} - a_{13})$. By condition 2) and because $a_{11} = a_{21}$, we have $a_{12}+a_{13} = a_{22}+a_{23}$. So 
		\[
				v_2 - v_1 = a_{22} - a_{12} + 2(a_{23} - a_{13}) = a_{22}+a_{23}+a_{23} - (a_{12}+a_{13}+a_{13}) = a_{23}-a_{13} \neq 0.
		\]
		Therefore, in all cases, $v_1 \neq v_2$. The same reasoning can be applied to any pair of rows, and thus also shows that $v_3 = A_{3-} \cdot (0,1,2)^T \neq v_1$, and $v_3 \neq v_2$.
	\end{proof}	

The lemma also generalizes to square matrices of larger dimensions because one can always find a $(3 \times 3)$-submatrix with the properties required by Lemma \ref{lem:baseCaseSeparatingVector} in them.

	\begin{restatable}{lemma}{separatingVector}
		\label{lem:separatingVector}
		Let $\ell \geq 3$ and
		let $A \in \bbZ_3^{\ell \times \ell}$ be a matrix satisfying:
		\begin{enumerate}
			\item $A$ has the $\ell$-near-unanimity property.
			\item The row-sums of $A$ are all equal.
			\item $\widehat{n^\ell}(A)$ is not a row of $A$.
		\end{enumerate}	
		Then $A \cdot (0,1,2,0,0,...,0)^T$ is a vector in $\bbZ_3^\ell$ that has at least three distinct entries.
	\end{restatable}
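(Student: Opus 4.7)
My plan is to extract a strong structural description of $A$ from the three hypotheses, after which the conclusion follows by a direct linear-algebra calculation over $\bbZ_3$. The main obstacle is the structural step: once the rigid form of $A$ is pinned down, the remaining algebra is routine, but identifying the correct bijective pattern between rows and columns requires a careful counting argument.

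First I would analyze the consequences of condition~(3). By condition~(1), each column has at most one minority entry, so $\widehat{n^\ell}(A)$ is the tuple whose $j$-th coordinate is the majority value of column $j$. If $\widehat{n^\ell}(A)$ is not a row of $A$, then every row $i$ must disagree with $\widehat{n^\ell}(A)$ in some column $j$, and by near-unanimity the entry $a_{ij}$ must be the unique minority of column $j$. A counting argument --- each column contributes at most one minority entry and all $\ell$ rows must be witnessed --- then shows that every column is non-constant and that the map $\sigma\colon[\ell]\to[\ell]$ sending column $j$ to the row containing its minority entry is a bijection.

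Next I would rigidify the matrix using condition~(2). Permuting the rows of $A$ only permutes the entries of $A\cdot v$, so without loss of generality I may assume $\sigma=\id$. Writing $m_j$ for the majority value of column $j$ and $c_j\coloneqq a_{jj}-m_j\in\bbZ_3\setminus\{0\}$, the sum of row $j$ equals $c_j+\sum_k m_k$, so the equal-row-sums assumption forces $c_j$ to be the same nonzero constant $c\in\{1,2\}$ for every $j$. Equivalently, $A$ decomposes as $A=\mathbf{1}\,\mathbf{m}^{T}+cI$ over $\bbZ_3$, where $\mathbf 1=(1,\ldots,1)^T$ and $\mathbf m=(m_1,\ldots,m_\ell)^T$.

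Finally, with $v=(0,1,2,0,\ldots,0)^T$, a direct calculation yields $A\cdot v=(\mathbf{m}^{T}v)\,\mathbf{1}+cv=(m_2+2m_3)\,\mathbf{1}+cv$. The entries of this vector take exactly the three values $m_2+2m_3$, $m_2+2m_3+c$ and $m_2+2m_3+2c$, which are pairwise distinct in $\bbZ_3$ because $c\neq 0$; and since $\ell\geq 3$, all three values actually occur (at positions $1$, $2$ and $3$). An alternative route is to extract a $3\times 3$ submatrix and invoke Lemma~\ref{lem:baseCaseSeparatingVector}, but the explicit calculation above is more transparent and avoids the bookkeeping needed to preserve the equal-row-sums condition on the submatrix.
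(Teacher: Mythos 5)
Your proof is correct, and it takes a genuinely different route from the paper's. The paper first applies an auxiliary lemma (that under hypothesis~(3) the minority entries occupy $\ell$ distinct rows and columns), then permutes rows so the minorities sit on the diagonal, extracts the top-left $3\times 3$ submatrix, checks that equal row sums are inherited by it (because the first three rows agree outside their first three columns), and invokes the $3\times 3$ base case Lemma~\ref{lem:baseCaseSeparatingVector}. You instead push past the base case entirely: after the same WLOG normalization you use the equal-row-sums hypothesis once more to pin down that all diagonal offsets $c_j=a_{jj}-m_j$ coincide, giving the explicit rank-one-plus-scalar decomposition $A=\mathbf{1}\mathbf{m}^T+cI$ with $c\in\{1,2\}$, from which $A\cdot v=(\mathbf{m}^Tv)\mathbf{1}+cv$ makes the conclusion transparent. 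What each approach buys: the paper's version reuses its separately-stated base case and so localizes the computation to $3\times 3$; yours yields a clean normal form for the entire class of matrices satisfying (1)--(3), avoids the bookkeeping needed to transfer condition~(2) to a submatrix, and in fact shows the stronger fact that $A\cdot v$ takes \emph{exactly} three values (constant on coordinates $\geq 4$). Both the counting step establishing that $\sigma$ is a bijection and the WLOG row permutation (which permutes both the hypotheses and the entries of $A\cdot v$, so is harmless) are handled correctly.
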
	

We also need to consider the case where $\widehat{n^\ell}(A)$ is in fact a row of $A$. The following can be shown with a combinatorial argument. 
\begin{restatable}{lemma}{conditionForGettingARowBack}
	\label{lem:conditionForGettingARowBack}
	Let $\ell \geq 3$, and let $A \in \bbZ_3^{\ell \times \ell}$ be a matrix that satisfies the first two conditions from Lemma \ref{lem:separatingVector}.
	Then $\widehat{n^\ell}(A)$ is a row of $A$ if and only if two rows of $A$ are identical. 
\end{restatable}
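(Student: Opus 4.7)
The plan is to split the biconditional into its two directions. The forward direction (two identical rows imply that $\widehat{n^\ell}(A)$ is a row of $A$) is the easy one: if rows $i$ and $j$ coincide with $i\ne j$, then in every column $k$ the common value $a_{ik}=a_{jk}$ appears in at least two rows, so it cannot be the (unique) minority entry of that column. Hence $a_{ik}$ agrees with $n^\ell$ applied to column $k$, and so row $i$ equals $\widehat{n^\ell}(A)$.

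For the reverse direction, assume $\widehat{n^\ell}(A)=$ row $i$ and suppose, for contradiction, that no other row of $A$ equals row $i$. For each $j\ne i$, let $S_j\coloneqq\{k\in[\ell]\mid a_{jk}\ne a_{ik}\}$; by assumption each $S_j$ is non-empty. Since row $i$ carries the majority value of every column and the near-unanimity property permits at most one minority entry per column, the $\ell-1$ sets $(S_j)_{j\ne i}$ are pairwise disjoint subsets of $[\ell]$. The key step is to observe that the row-sum condition forbids singleton difference sets: setting $d_{jk}\coloneqq a_{jk}-a_{ik}\in\bbZ_3\setminus\{0\}$ for $k\in S_j$, equality of the row-sums of rows $i$ and $j$ yields $\sum_{k\in S_j}d_{jk}=0$ in $\bbZ_3$, which is impossible for a single nonzero term. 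Hence $|S_j|\ge 2$ for every $j\ne i$.

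A short counting argument then closes the proof: the $\ell-1$ pairwise disjoint sets $S_j$ have total size at least $2(\ell-1)$ but at most $\ell$, yielding $2(\ell-1)\le\ell$ and thus $\ell\le 2$, contradicting $\ell\ge 3$. The main subtlety is recognising how the two hypotheses cooperate: near-unanimity forces the difference sets to be pairwise disjoint, while the row-sum condition over $\bbZ_3$ forces each of them to have size at least two, and these two constraints are incompatible for $\ell\ge 3$. Once this combined use of the conditions is spotted, the rest is essentially a pigeonhole.
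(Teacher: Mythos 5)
Your proof is correct, and for the reverse direction it takes a cleaner route than the paper's, though both rest on the same two ingredients (disjointness of the difference sets from near-unanimity, and the impossibility of a single nonzero difference under equal row-sums over $\bbZ_3$). The paper's argument runs the logic the other way: after deleting $\bar a$ and a suitably chosen column, it argues that each remaining row differs from $\bar a$ in exactly one position, and then invokes the row-sum condition to see that those rows must have a different row-sum; this requires a somewhat ad hoc $(\ell{-}1)\times(\ell{-}1)$ submatrix and an implicit case distinction on whether one row happens to differ in two positions. You invert the order: the row-sum condition immediately forces $|S_j|\ge 2$ for every $j$, and disjointness plus $2(\ell-1)>\ell$ then gives the contradiction. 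This avoids the submatrix construction entirely, handles all cases uniformly, and makes it transparent which hypothesis is responsible for which bound. The forward direction is the same in both proofs.
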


For each $\ell \geq 3$ as in Theorem \ref{thm:mainAritySeparation}, we now define the template structure $\StructB_\ell$ that is used to show the separation $\Ll(\Qq_{\ell+1}^{\NCSP^{\ell}}) \not\leq \Ll(\Qq_{\ell-1})$. 

    \begin{definition}
    \label{def:template1}
    Let $\ell \geq 3$ be fixed.
    Let $\bar{u}$ denote the vector $(0,1,2,0,\dots,0) \in \bbZ_3^\ell$. 
    Let $\StructB_\ell = (\bbZ_3, R_0, R_1, R_2)$ with each $R_j$ defined as follows. 
	For each $j \in \{0,1,2\}$, let
	$
	R_j^{\StructB_\ell} \coloneqq \{  (a_1, ..., a_\ell, \bar{u} \cdot \bar{a}) \mid \bar{a} \in \bbZ_3^{\ell}, \sum_{i \in [\ell]} a_i = j   \}.
	$
    \end{definition}
	\begin{lemma}
	\label{lem:modifiedTemplatePolyClosed}	
	For each $\ell \geq 3$, $n^\ell$ is a partial polymorphism of the  structure $\StructB_\ell$.
	\end{lemma}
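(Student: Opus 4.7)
The plan is to fix $j \in \{0,1,2\}$ and verify that whenever $\ell$ tuples of $R_j^{\StructB_\ell}$ are arranged as the rows of an $\ell \times (\ell+1)$ matrix $M$ and $\widehat{n^\ell}(M)$ is defined, the resulting tuple lies in $R_j^{\StructB_\ell}$. Writing the $i$-th tuple as $\bar{t}_i = (a_{i,1}, \ldots, a_{i,\ell}, \bar{u}\cdot\bar{a}_i)$ with $\bar{a}_i = (a_{i,1}, \ldots, a_{i,\ell})$ and $\sum_k a_{i,k} = j$, I let $A \in \bbZ_3^{\ell\times\ell}$ denote the submatrix formed by the first $\ell$ columns of $M$. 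The assumption that $\widehat{n^\ell}(M)$ is defined forces every column of $M$, and hence every column of $A$, to have the $\ell$-near-unanimity property; together with $\sum_k a_{i,k} = j$, this means that $A$ satisfies the first two hypotheses of both Lemma~\ref{lem:separatingVector} and Lemma~\ref{lem:conditionForGettingARowBack}. The key observation driving the proof is that the last column of $M$ is precisely $A\bar{u}^T$.

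I then split into two cases according to whether $\widehat{n^\ell}(A)$ is a row of $A$. If it is \emph{not}, Lemma~\ref{lem:separatingVector} applies and yields that $A\bar{u}^T \in \bbZ_3^\ell$ has at least three distinct entries; but this vector is the last column of $M$, which by near-unanimity has at most two distinct entries --- a contradiction, so this case does not occur. If instead $\widehat{n^\ell}(A)$ \emph{is} a row of $A$, Lemma~\ref{lem:conditionForGettingARowBack} guarantees that some two rows coincide, say $\bar{a}_{i_1} = \bar{a}_{i_2}$; hence $\bar{u}\cdot\bar{a}_{i_1} = \bar{u}\cdot\bar{a}_{i_2}$ as well, and rows $i_1$ and $i_2$ of $M$ are identical. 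Since near-unanimity on the last column permits only a single minority position, the repeated value $\bar{u}\cdot\bar{a}_{i_1}$ appearing there must be the majority, so $\widehat{n^\ell}(M)$ coincides with row $i_1$ of $M$ and therefore belongs to $R_j^{\StructB_\ell}$.

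Repeating this argument for each $j \in \{0,1,2\}$ shows that $n^\ell$ preserves every relation of $\StructB_\ell$, proving the lemma. I do not expect a genuine obstacle here, because the only non-routine work has been offloaded onto Lemmas~\ref{lem:separatingVector} and~\ref{lem:conditionForGettingARowBack}, which were designed precisely to control the two possible behaviours of $\widehat{n^\ell}(A)$ and to interact with the linear constraint $\bar{u}\cdot\bar{a}$ built into the definition of each $R_j$. The only subtlety worth flagging is the last-column majority argument in the second case, but it is immediate from the fact that near-unanimity forbids two distinct minority positions in a single column.
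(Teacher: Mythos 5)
Your proof is correct and follows essentially the same route as the paper's: project to the $\ell\times\ell$ matrix $A$ of first coordinates, observe that the $(\ell+1)$st column equals $A\bar u$, and split on whether $\widehat{n^\ell}(A)$ is a row of $A$, invoking Lemma~\ref{lem:separatingVector} and Lemma~\ref{lem:conditionForGettingARowBack} respectively. The only difference is presentational (you assume $\widehat{n^\ell}(M)$ defined and derive a contradiction in the first case, whereas the paper shows it must be undefined there), and both versions are equally valid.
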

	\begin{proof}
		Let $j \in \bbZ_3$ and $\ell \geq 3$. 
		Let $\bar{b}_1,\dots, \bar{b}_\ell \in R_j^{\StructB_{\ell}}$. We show that either, $\widehat{n^\ell}(\bar{b}_1,\dots, \bar{b}_\ell) \in \{\bar{b}_1,\dots, \bar{b}_\ell\}$, or $\widehat{n^\ell}(\bar{b}_1,\dots, \bar{b}_\ell)$ is undefined.
		Let $A \in \bbZ_3^{\ell \times \ell}$ be the matrix whose rows are the projections of the tuples $\bar{b}_1,\dots, \bar{b}_\ell$ to their first $\ell$ components.
		If $A$ does not have the $\ell$-near-unanimity property, then $\widehat{n^\ell}(A)$ is undefined, and hence $\widehat{n^\ell}(\bar{b}_1,\dots, \bar{b}_\ell)$ is undefined. 
		So suppose $A$ has the $\ell$-near-unanimity property. The row-sums of $A$ are all equal to $j$ by definition of $R_j^{\StructB_{\ell}}$. 
		If $\widehat{n^\ell}(A)$ is not a row of $A$, then $A$ satisfies the assumptions of Lemma \ref{lem:separatingVector}. Hence $A \cdot (0,1,2,0,0,\dots,0)^T$ contains three distinct entries. This means that $n^\ell(\bar{b}_{1(\ell+1)},\dots, \bar{b}_{\ell(\ell+1)})$ is undefined. If $\widehat{n^\ell}(A)$ is a row $\bar{a}$ of $A$, then by Lemma \ref{lem:conditionForGettingARowBack}, there are two identical rows in $A$, and these must be equal to $\bar{a}$. Therefore, the value $\bar{a} \cdot \bar{u}$ also appears at least twice as the $(\ell+1)$st component of the tuples $\bar{b}_1,\dots, \bar{b}_\ell$.
        Hence, if $n^\ell(\bar{b}_{1(\ell+1)},\dots, \bar{b}_{\ell(\ell+1)})$ is defined, it yields the correct value $\bar{a} \cdot \bar{u}$, and thus $\widehat{n^\ell}(\bar{b}_1,\dots, \bar{b}_\ell) \in \{\bar{b}_1,\dots, \bar{b}_\ell\}$.
	\end{proof}

    \paragraph*{Template structure for the proof Theorem \ref{thm:hierarchyForFixedL}}

    In Theorem \ref{thm:hierarchyForFixedL}, the arity of the quantifiers can be much larger than the arity of the polymorphisms. Thus, the matrices of interest are no longer square, but rectangular, and we cannot directly use Lemma \ref{lem:separatingVector}. Instead, we use the following. It can be shown with similar reasoning as in the proof of Lemma \ref{lem:baseCaseSeparatingVector}.

	\begin{restatable}{lemma}{separatingVectorPairs}
		\label{lem:separatingVectorPairs}
		Let $\ell \geq 3$, $r \geq \ell$, $q=\lfloor \frac{r}{\ell-1}\rfloor$, and $p=r-q(\ell-1)$. 
		There is a collection $\Uu \subseteq \bbZ_3^{r}$ of size $|\Uu| = (\ell-1)q(q-1)+2pq$ such that 
        for every matrix $A \in \bbZ_3^{\ell \times r}$ with the $\ell$-ary near-unanimity property, one of the following two is the case:
		\begin{enumerate}
			\item If $\widehat{n^\ell}(A)$ is not a row of $A$, there is a vector $\bar{u} \in \Uu$ such that $A \cdot \bar{u}$ contains three distinct entries. 
            \item If $\widehat{n^\ell}(A)$ is a row $\bar{a}$ of $A$, then either there exists a $\bar{u} \in \Uu$ such that $A \cdot \bar{u}$ contains three distinct entries, or for all $\bar{u} \in \Uu$, $\widehat{n^\ell}(A \cdot \bar{u}) = \bar{a} \cdot \bar{u}$.
		\end{enumerate}	
	\end{restatable}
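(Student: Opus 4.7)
The plan is to extend the strategy of Lemma~\ref{lem:separatingVector} to rectangular matrices by using a pair-based variant: for any two non-constant columns $j_1,j_2$ of a matrix $A$ with the $\ell$-near-unanimity property whose minority entries lie in distinct rows, one of the two vectors supported on $\{j_1,j_2\}$ with entries in $\{1,2\}$ --- call them $\bar{u}_{11}^{j_1j_2}$ and $\bar{u}_{12}^{j_1j_2}$ --- forces $A\bar{u}$ to contain three distinct entries. This may be established by the same $\bbZ_3$-arithmetic as in Lemma~\ref{lem:baseCaseSeparatingVector}, applied to the two minority-bearing rows and any third row. The task is then to choose a family $\mathcal{P}$ of column-pairs, put both pair-vectors of every $(j_1,j_2)\in\mathcal{P}$ into $\Uu$, and arrange that whenever the lemma calls for a three-value separation, $\mathcal{P}$ is guaranteed to contain a \emph{bichromatic} pair (one with distinct minority rows).

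The construction of $\mathcal{P}$ is Tur\'{a}n-type: partition $[r]$ into $\ell-1$ blocks of sizes as equal as possible, namely $p$ blocks of size $q+1$ and $\ell-1-p$ blocks of size $q$, and let $\mathcal{P}$ be the set of all within-block pairs. A direct count of $\sum_i \binom{s_i}{2}$ with these block sizes gives exactly $\tfrac{(\ell-1)q(q-1)}{2}+pq$ pairs, so $|\Uu|=2|\mathcal{P}|$ matches the claimed bound. The combinatorial property we need is the following pigeonhole principle: any colouring of a set of columns by $[\ell]$ that uses all $\ell$ colours must place two differently-coloured columns into the same block, since there are only $\ell-1$ blocks.

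For part~(1), if $\widehat{n^\ell}(A)$ is not a row of $A$ then every row of $A$ must be the minority entry of some non-constant column (otherwise that row itself would equal $\widehat{n^\ell}(A)$), so the minority-row map is surjective onto $[\ell]$. Pigeonhole produces a block containing two non-constant columns with distinct minority rows, and the pair-variant supplies $\bar{u}\in\Uu$ with three distinct entries in $A\bar{u}$. For part~(2), write $\bar{a}=\widehat{n^\ell}(A)$. If $\bar{a}$ occurs at least twice as a row of $A$, then $\bar{a}\cdot\bar{u}$ occurs at least twice in every $A\bar{u}$, so any near-unanimous $A\bar{u}$ must have majority $\bar{a}\cdot\bar{u}$, yielding the second alternative. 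Otherwise $\bar{a}$ occurs only at one row $i_0$; the minority-row map still covers $[\ell]\setminus\{i_0\}$, and either some block contains two non-constant columns with distinct minority rows (giving the first alternative via the pair-variant), or all non-constant columns inside each block share the same minority row, in which case one checks directly that for every pair-vector $\bar{u}\in\Uu$ at most one entry of $A\bar{u}$ can differ from $\bar{a}\cdot\bar{u}$, so $\widehat{n^\ell}(A\bar{u})=\bar{a}\cdot\bar{u}$ whenever $A\bar{u}$ is near-unanimous.

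The main obstacle is that in part~(2), unlike part~(1), we must control a disjunction: either some $\bar{u}\in\Uu$ triggers the three-value escape, or \emph{every} $\bar{u}\in\Uu$ yields the expected value $\bar{a}\cdot\bar{u}$. Including both vectors $\bar{u}_{11}^{j_1j_2}$ and $\bar{u}_{12}^{j_1j_2}$ for each pair is what prevents the intermediate bad situation in which near-unanimity holds but its majority differs from $\bar{a}\cdot\bar{u}$: the pair-variant guarantees that at least one of the two always produces three distinct entries, so even if the other lands in the bad configuration, the first alternative of part~(2) is nevertheless triggered.
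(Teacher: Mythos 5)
Your approach is the same as the paper's: the same Tur\'{a}n-type partition of $[r]$ into $\ell-1$ blocks of sizes $q$ or $q+1$, the same set $\Uu$ of within-block pair vectors with entries from $\{1,2\}$, the same pigeonhole argument in part (1) based on the minority-row map being surjective onto $[\ell]$, and the same appeal to the pair-variant of Lemma~\ref{lem:baseCaseSeparatingVector} (the paper's Lemma~\ref{lem:separatingVectorPair}) to produce a vector $\bar{u}$ with $A\bar{u}$ having three distinct entries from a ``bichromatic'' pair. Your arithmetic confirming $|\Uu|$ is also correct.

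One point in your part (2) case analysis is stated too strongly and should be flagged. You write that if $\bar{a}$ occurs at least twice as a row of $A$, this ``yields the second alternative.'' That is only true conditionally: it shows that whenever $\widehat{n^\ell}(A\bar{u})$ is \emph{defined}, the value is $\bar{a}\cdot\bar{u}$, but it does not rule out the possibility that $A\bar{u}$ is not near-unanimous for some $\bar{u}\in\Uu$. For instance, with $\ell\geq 4$ take $\bar{a}$ occurring twice, and rows $3$ and $4$ disagreeing with $\bar{a}$ only in columns $j_1$ and $j_2$ respectively, both lying in the same block; then $A\bar{u}_{11}^{j_1j_2}$ can be of the form $(x,x,y,y)$, which is neither near-unanimous nor has three distinct entries. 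In this situation the second alternative genuinely fails, and the lemma is only rescued because $\{j_1,j_2\}$ is a bichromatic pair, so the pair-variant shows that the \emph{other} vector $\bar{u}_{12}^{j_1j_2}\in\Uu$ yields three distinct entries and the first alternative holds. You gesture at this in your final paragraph, but there you characterize the only bad configuration as ``near-unanimity holds but the majority differs from $\bar{a}\cdot\bar{u}$'' (the $\ell=3$ scenario), omitting the ``near-unanimity fails but no three distinct entries'' scenario that arises for $\ell\geq 4$. The fix is minor and purely organizational: first dispose of the case where \emph{some} block contains two non-constant columns with distinct minority rows (pair-variant, first alternative, regardless of how often $\bar{a}$ appears), and only then argue that in the remaining case every $A\bar{u}$ has at most one entry differing from $\bar{a}\cdot\bar{u}$, which gives both definedness and the correct value. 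The paper's own proof splits on $\ell\geq 4$ versus $\ell=3$ rather than on the multiplicity of $\bar{a}$, but leaves an analogous step implicit; your underlying ideas are sound.
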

    \noindent
	\textit{Proof sketch.}
		Split $[r]$ into sets $K_1 \uplus\cdots \uplus K_{\ell-1} = [r]$ such that $|K_1|=\cdots=|K_p|=q+1$ and $|K_{p+1}|=\cdots=|K_{\ell-1}|=q$. 
		Define $\Uu$ as follows. For every $2$-element set of columns $\{j_1,j_2\}$ such that $\{j_1,j_2\} \subseteq K_i$ for some $i\in [\ell-1]$, include in $\Uu$ two vectors  $\bar{u}^{11}, \bar{u}^{12} \in \bbZ_3^{r}$ such that $\bar{u}^{11}_{j_1} = 1, \bar{u}^{11}_{j_2} = 1$, and $\bar{u}^{12}_{j_1} = 1, \bar{u}^{12}_{j_2} = 2$. In all other components, the two vectors are zero.
		Note that $|\Uu| = 2 \bigl(p \binom{q+1}{2}+(\ell-1-p)\binom{q}{2}\bigr) = 2 \bigl( p\frac{(q+1)q}{2}+(\ell-1-p)\frac{q(q-1)}{2}\bigr)=(\ell-1)q(q-1)+2pq$.
		
		Assume first that $\widehat{n^\ell}(A)$ is not a row of $A$.
        Then $A$ must have $\ell$ non-constant columns whose minority entry is in a different row, respectively. So there is an $i \in [\ell-1]$ such that $K_i$ contains at least two of these columns, call them $j_1, j_2$. By definition of $\Uu$, the vectors $\bar{u}^{11}, \bar{u}^{12}$ for $j_1, j_2$ are in $\Uu$. Then it can be verified that $A \cdot \bar{u}^{11}$ or $A \cdot \bar{u}^{12}$ contains three distinct entries.
        
        Now assume that $\widehat{n^\ell}(A) = \bar{a}$ is a row of $A$. It can be checked that if $\widehat{n^\ell}(A \cdot \bar{u})$ is defined for all $\bar{u} \in \Uu$, then for every $\bar{u} \in \Uu$, there must be at least two rows of $A \cdot \bar{u}$ whose entry is $\bar{a} \cdot \bar{u}$. Then it follows that $\widehat{n^\ell}(A \cdot \bar{u}) = \bar{a} \cdot \bar{u}$ for all $\bar{u} \in \Uu$, as desired. \hfill \qedsymbol\\

    The template structure used for Theorem \ref{thm:hierarchyForFixedL} is similar to the previous one, except that it has more extra entries in the relations to forbid the applicability of $n^\ell$ (using the set of vectors $\Uu$ from the above lemma). 
    For $r \geq \ell$, the instances of this template are supposed to be indistinguishable in $\Ll(\Qq_{r})$. But the construction using the vectors in $\Uu$ effectively makes it possible to fix two entries of the original relation with just one variable. To ensure $\Ll(\Qq_{r})$-indistinguishability of the instances, we therefore need to base the template on $(2r+1)$-ary relations, augmented by the products with all vectors in $\Uu$.
    For every $\ell \geq 3$ and $r \geq \ell$, we define the template structure $\StructB^*_{r, \ell}$ over $\bbZ_3$ as follows. 
    \begin{definition}
    \label{def:template2}
    Let $\ell \geq 3$ and $r \geq \ell$.
    Let $\Uu$ be the set of vectors from Lemma \ref{lem:separatingVectorPairs}, for $(2r+1)$ instead of $r$. 
	Note that then $m \coloneqq |\Uu| =(\ell-1)q(q-1)+2pq$, where $q=\lfloor\frac{2r+1}{\ell-1}\rfloor$ and $p=2r+1-\lfloor\frac{2r+1}{\ell-1}\rfloor(\ell-1)<\ell-1$. Thus, $m<(\ell-1)q(q-1)+2(\ell-1)q=(\ell-1)q(q+1)\le (2r+1)(\lfloor\frac{2r+1}{\ell-1}\rfloor+1)$.
	In $\StructB^*_{r, \ell}$, we define the relations $R_j$, for $j \in \bbZ_3$ as:
	$
	R_j^{\StructB^*_{r,\ell}} \coloneqq \{ (a_1,\dots,a_{2r+1}, \bar{a} \cdot \bar{u}_1,\dots,\bar{a} \cdot \bar{u}_m ) \mid \sum_{i\in [2r+1]} a_i = j  \},
	$
	where $\bar{u}_1,...,\bar{u}_m$ is an enumeration of the set $\Uu$. 
    \end{definition}
    Note that for fixed $\ell$, the arity of the template structure is quadratic in $r$.

    \begin{lemma}
	\label{lem:modifiedTemplate2PolyClosed}	
	For all $\ell \geq 3$ and $r \geq \ell$, $n^\ell$ is a partial polymorphism of the  structure $\StructB^*_{r,\ell}$.
	\end{lemma}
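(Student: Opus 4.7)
The plan is to follow exactly the same skeleton as the proof of Lemma \ref{lem:modifiedTemplatePolyClosed}, replacing the square-matrix analysis by the rectangular-matrix statement furnished by Lemma \ref{lem:separatingVectorPairs}. Fix $j \in \bbZ_3$ and take arbitrary tuples $\bar{b}_1,\dots,\bar{b}_\ell \in R_j^{\StructB^*_{r,\ell}}$. Collect the first $2r+1$ entries of each tuple as the rows of a matrix $A \in \bbZ_3^{\ell \times (2r+1)}$. By Definition \ref{def:template2}, the remaining $m$ entries of $\bar{b}_i$ are precisely $A_{i-}\cdot \bar{u}_1,\ldots,A_{i-}\cdot \bar{u}_m$, and each row of $A$ sums to $j$. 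The goal is to show that $\widehat{n^\ell}(\bar{b}_1,\dots,\bar{b}_\ell)$ is either undefined, or lies in $\{\bar{b}_1,\ldots,\bar{b}_\ell\}$ (and hence, in particular, belongs to $R_j^{\StructB^*_{r,\ell}}$).

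If $A$ does not have the $\ell$-near-unanimity property, then $\widehat{n^\ell}$ is already undefined on the first $2r+1$ coordinates, and we are done. So assume $A$ has the $\ell$-near-unanimity property and invoke Lemma \ref{lem:separatingVectorPairs} applied to $A$ (with $2r+1$ in place of $r$, which matches the definition of $\Uu$ used in constructing $\StructB^*_{r,\ell}$). In the first case of that lemma, $\widehat{n^\ell}(A)$ is not a row of $A$, and some $\bar{u}_i \in \Uu$ makes $A \cdot \bar{u}_i$ contain three distinct entries. Since the $(2r+1+i)$th coordinate across $\bar{b}_1,\ldots,\bar{b}_\ell$ is exactly this column, $n^\ell$ is undefined there, and therefore $\widehat{n^\ell}(\bar{b}_1,\dots,\bar{b}_\ell)$ is undefined.

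In the remaining case, $\widehat{n^\ell}(A) = \bar{a}$ is a row of $A$; say $\bar{a} = A_{i^*-}$, so that the corresponding input tuple is $\bar{b}_{i^*}$. Lemma \ref{lem:separatingVectorPairs} now gives a dichotomy: either some $\bar{u}_i \in \Uu$ yields three distinct entries in $A\cdot \bar{u}_i$, in which case $\widehat{n^\ell}(\bar{b}_1,\ldots,\bar{b}_\ell)$ is again undefined on the $(2r+1+i)$th coordinate; or else $\widehat{n^\ell}(A \cdot \bar{u}_i) = \bar{a}\cdot \bar{u}_i$ for every $i \in [m]$. In the latter sub-case, the coordinate-wise application of $n^\ell$ produces precisely the tuple $(\bar{a},\, \bar{a}\cdot \bar{u}_1,\ldots,\bar{a}\cdot \bar{u}_m) = \bar{b}_{i^*}$, which is one of the input tuples; hence $\widehat{n^\ell}(\bar{b}_1,\dots,\bar{b}_\ell) \in R_j^{\StructB^*_{r,\ell}}$ as required.

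All the real combinatorial work is packaged inside Lemma \ref{lem:separatingVectorPairs}: the non-trivial step is verifying that $\Uu$ detects whenever $\widehat{n^\ell}(A)$ fails to be a row, and conversely that if $\widehat{n^\ell}$ is defined on all products $A\cdot \bar{u}$ then it must agree with the row-valued output $\bar{a}\cdot \bar{u}$. Since that lemma has already been established, the present argument is pure bookkeeping: it only translates the two cases of Lemma \ref{lem:separatingVectorPairs} into the corresponding alternatives for the augmented tuples, exactly as in the proof of Lemma \ref{lem:modifiedTemplatePolyClosed}. There is no genuine obstacle.
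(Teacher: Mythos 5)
Your proof is correct and takes exactly the route the paper intends: the paper's own proof of this lemma is the single sentence ``Analogous to the proof of Lemma~\ref{lem:modifiedTemplatePolyClosed}, now using Lemma~\ref{lem:separatingVectorPairs}'', and you have merely unpacked that analogy. One small wording caveat: when $\ell \geq 4$ it can happen that no $A \cdot \bar{u}$ has three distinct entries yet $n^\ell(A\cdot \bar{u}_i)$ is still undefined for some $i$ (e.g.\ a $2$--$2$ tie among two values), so the last sub-case should say, mirroring the paper's proof of Lemma~\ref{lem:modifiedTemplatePolyClosed}, that \emph{if} $\widehat{n^\ell}(\bar{b}_1,\dots,\bar{b}_\ell)$ is defined then it equals $\bar{b}_{i^*}$, rather than asserting unconditionally that it ``produces precisely the tuple''; the conclusion (output is undefined or lies in $R_j^{\StructB^*_{r,\ell}}$) is unaffected.
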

    \begin{proof}
    Analogous to the proof of Lemma \ref{lem:modifiedTemplatePolyClosed}, now using Lemma \ref{lem:separatingVectorPairs}.
    \end{proof}

	\subsection{Constructing instances}

	We aim to prove $\Ll(\Qq_{\ell+1}^{\NCSP^{\ell}}) \nleq \Ll(\Qq_{\ell-1})$ 
    by exhibiting two classes $\Kk_\ell, \tilde{\Kk}_\ell$ of structures that cannot be distinguished by any sentence in $\Ll(\Qq_{\ell-1})$ but are separated by a sentence in $\Ll(\Qq_{\ell+1}^{\NCSP^{\ell}})$.
    Likewise, we prove $\Ll(\Qq^{\NCSP^{\ell}}_{q(r,\ell)}) \nleq \Ll(\Qq_{r})$ by constructing classes $\Kk^*_{r,\ell}, \tilde{\Kk}^*_{r,\ell}$ of structures that cannot be distinguished by any sentence in $\Ll(\Qq_{r})$ but are separated by a sentence in $\Ll(\Qq^{\NCSP^{\ell}}_{q(r,\ell)})$.
    Both classes of structures are obtained with a CFI-like construction over an $\ell$-regular (or $(2r+1)$-regular, respectively) base graph.  
    With each vertex $v$ in the base graph, we associate a substructure, also called \emph{vertex gadget}.
    In the first construction, this gadget is denoted $\StructA(v,s)$. In the second construction it is $\StructA^*(v,s)$. In both cases, $s \in \bbZ_3$ is a ``charge'' we associate with $v$.
	We only present $\StructA(v,s)$ here. The gadget $\StructA^*(v,s)$ is analogous, but such that it matches the template relations from Definition \ref{def:template2} instead of Definition \ref{def:template1}.
    For $v \in V(G)$, let $E(v) \subseteq E(G)$ denote the set of edges incident to $v$.
	\begin{definition}[Vertex gadgets]
    \label{def:Components}
    ~\\
    Fix $\ell \geq 3$ and let $G = (V,E)$ be a connected $\ell$-regular graph with an order $\leq^G$ on its vertices.
	Let $\bar{u} \in \bbZ_3^\ell$ be the vector from Definition \ref{def:template1}, $v\in V$ and $\bar{e}(v) = (e_1, \dots , e_{\ell})$ the set $E(v)$, written as a tuple in the order induced by $\leq^G$. Then for each $s\in\bbZ_3$, we let $\StructA(v,s)\coloneqq(A_v,R_0^{\StructA(v,s)},R_1^{\StructA(v,s)},R_2^{\StructA(v,s)})$, where
	\begin{itemize}
	\item $A_v\coloneqq B_v\cup C_v$ for $B_v\coloneqq E(v)\times\bbZ_3$ and $C_v\coloneqq \{v\} \times\bbZ_3$,
	\item $R_j^{\StructA(v,s)}\coloneqq \{((e_1,a_1),\ldots,(e_{\ell},a_{\ell}),(v,\bar{a}\cdot\bar{u})) \mid \sum_{i\in[\ell]}a_i=j-s\}$ for each  $j\in\{0,1,2\}$,
	\end{itemize}
    \end{definition}

    We can now define our CFI structures that separate the desired logics. Starting with a suitable base graph, we simply replace every vertex $v$ with a gadget of the form $\StructA(v,s)$ or $\StructA^*(v,s)$.
    We present in the following only the construction that uses the gadgets $\StructA(v,s)$ (for Theorem \ref{thm:mainAritySeparation}). The other construction for the proof of Theorem~\ref{thm:hierarchyForFixedL} and its relevant properties are completely analogous.
    Putting together the structures $\StructA(v,s)$ we have defined for each vertex of $G$, we obtain our CFI structures:
	\begin{definition}\label{def:CFIstruct}
	Let $G = (V,E,\leq^G)$ be an $\ell$-regular connected graph with a linear order on its vertices. 
	\begin{enumerate}[label=(\alph*)]
	\item For each $U\subseteq V$, we define 
$\StructA(G,U)= (A_G,R_0^{\StructA(G,U)},R_1^{\StructA(G,U)},R_2^{\StructA(G,U)})$, where
	\begin{itemize}
	\item $A_G\coloneqq \bigcup_{v\in V}A_v$,
	\item $R_j^{\StructA(G,U)}\coloneqq \bigcup_{v\in V\setminus U}R_j^{\StructA(v,0)}\cup \bigcup_{v\in U}R_j^{\StructA(v,1)}$ for each $j\in\{0,1,2\}$.
	\end{itemize}
	
	\item Furthermore, we define $\StructA(G)\coloneqq\StructA(G,\emptyset)$ and 
	$\tilde{\StructA}(G)\coloneqq\StructA(G,\{\tilde{v}\})$, where $\tilde{v}\in V$ is the smallest vertex of $G$ with respect to the order $\leq^G$.
    \end{enumerate}
	\end{definition}
    We write $\StructA^*(G)$ and $\tilde{\StructA}^*(G)$ to refer to the analogous structures obtained by using the vertex gadgets $\StructA^*(v,s)$ instead of $\StructA(v,s)$.

     To define the classes of structures that separate the logics, we apply the above construction to base graphs similar to the ones used by Hella in \cite{Hella96}. For a number $k \in \bbN$, we call a graph \emph{$k$-connected} if it is connected and remains so after the removal of any $k_1$ vertices and $k_2$ edges, for any $k_1 + k_2 < k$. Arbitrarily large graphs having this and other useful properties exist, for example constructions based on higher-dimensional grids.
     \begin{restatable}{lemma}{existenceOfGridGraphs}
        \label{lem:existenceOfGridGraphs}
        For every $d \geq 2$ and for infinitely many $n \in \bbN$, there exists a $d$-regular bipartite graph that is $d$-connected and satisfies that each part of the bipartition contains exactly $n$ vertices.
     \end{restatable}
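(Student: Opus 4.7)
The plan is to realize the required graphs as Cartesian products of even cycles, together with an extra $K_2$-factor in the odd case. Concretely, for $d=2k$ I would set $G_m\coloneqq C_{2m}\square\cdots\square C_{2m}$ with $k$ cycle factors, and for $d=2k+1$ I would set $G_m\coloneqq K_2\square C_{2m}\square\cdots\square C_{2m}$ with $k$ cycle factors, where $\square$ denotes the standard graph Cartesian product. Since degrees add over $\square$, the graph $G_m$ is $d$-regular; since $\square$ preserves bipartiteness and balancedness of the bipartition, $G_m$ is bipartite with a balanced bipartition of $(2m)^k/2$ (resp.\ $(2m)^k$) vertices per side. Letting $m$ range over $\bbN$ then yields infinitely many values of $n$.

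The main work is verifying $d$-connectivity in the paper's mixed vertex/edge sense. I would first reduce this to the standard vertex $d$-connectivity $\kappa(G_m)\ge d$: given this, removing any $k_1<d$ vertices leaves a $(d-k_1)$-vertex-connected, and hence $(d-k_1)$-edge-connected, graph, so removing any further $k_2<d-k_1$ edges preserves connectivity. To establish $\kappa(G_m)\ge d$, I would invoke \v{S}pacapan's formula
\[
\kappa(G\square H)=\min\bigl(\delta(G)+\delta(H),\ \kappa(G)\,|V(H)|,\ \kappa(H)\,|V(G)|\bigr),
\]
applied iteratively. In every factor $\kappa$ already equals the degree ($\kappa(C_n)=2$ for $n\ge 3$, $\kappa(K_2)=1$), so the first term on the right-hand side stays equal to $d$ throughout the induction, while the other two terms grow with $|V|$ and dominate as soon as the total vertex count exceeds $d$. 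Hence $\kappa(G_m)=d$ for all $m\ge 1$.

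The main obstacle is really just the clean application of this connectivity formula, which is a known but non-trivial result for Cartesian products. If one prefers a self-contained proof, vertex $d$-connectivity can instead be verified by induction on the number of factors: any cut of size less than $\delta(G)+\delta(H)$ in $G\square H$ either restricts to a cut in one factor (and then has size at least $\kappa$ of that factor times the size of the perpendicular slice, which far exceeds $d$ for large $m$) or must separate two adjacent slabs of the product, for which at least $\delta(G)+\delta(H)$ removed vertices are required by a straightforward neighbourhood-counting argument.
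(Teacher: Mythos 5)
Your construction---Cartesian products of even cycles for even $d$, with an extra $K_2$ factor for odd $d$---is exactly the toroidal-grid (resp.\ doubled toroidal-grid) construction that the paper gives in its one-line proof sketch, so the approach is essentially identical. You additionally supply the verifications the paper omits: the choice of even cycle length $2m$ to preserve bipartiteness and balance, the reduction of the paper's mixed vertex/edge notion of $k$-connectivity to ordinary vertex $d$-connectivity via Whitney's inequality, and the inductive application of \v{S}pacapan's connectivity formula for Cartesian products (all of which check out, modulo starting at $m\ge 2$ so that $C_{2m}$ is a simple cycle).
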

    For fixed $d,n \in \bbN$, we now define a graph that we denote $G_{d,n}$. This will be used as the base graph for our construction detailed above.
    Let $H$ denote a fixed graph with the properties from Lemma \ref{lem:existenceOfGridGraphs} for values $(d-1), n$.
     Now $G_{d,n}$ consists of $n+1$ disjoint copies of $H$, denoted $H_1, \dots, H_{n+1}$. We also add connections between the $H_i$ such that for each $i \in [n+1]$, every vertex in $H_i$ has precisely one edge leading into a different copy $H_j$. 
     Formally, name the vertices in the two parts of the bipartition of $H_i$ as follows: 
     $
        V(H_i) = \{v_{ij} \mid j \in [n+1] \setminus \{i\}\} \uplus \{w_{ij} \mid j \in [n+1] \setminus \{i\}\}.
     $
     Then let
     $
     E(G_{d,n}) \coloneqq \bigcup_{i \in [n+1]} E(H_i) \cup \{ v_{ij}w_{ji} \mid i,j \in [n+1], i \neq j \}.
     $
     Note that $G_{d,n}$ is again bipartite with the $v$-vertices and the $w$-vertices forming the two parts of the bipartition. Moreover, $G_{d,n}$ is $d$-regular because $H$ is $(d-1)$-regular.
     For values $n \in \bbN$ for which Lemma \ref{lem:existenceOfGridGraphs} does not guarantee the existence of a graph, $G_{d,n}$ is simply undefined.

    For the proofs of Theorems \ref{thm:mainAritySeparation} and \ref{thm:hierarchyForFixedL}, we use the following classes of structures, which will be shown to be inseparable in the respective logics of interest. For Theorem \ref{thm:mainAritySeparation}, for each $\ell \geq 3$, the two classes are $K_\ell \coloneqq \{ \StructA(G_{\ell,n}) \mid n \in \bbN  \}$ and $\tilde{\Kk}_\ell \coloneqq \{ \tilde{\StructA}(G_{\ell,n}) \mid n \in \bbN  \}$. For Theorem \ref{thm:hierarchyForFixedL}, for $r \geq \ell \geq 3$, the classes are $K^*_{r,\ell} \coloneqq \{ \StructA^*(G_{2r+1,n}) \mid n \in \bbN  \}$ and $\tilde{\Kk}^*_{r,\ell} \coloneqq  \{ \tilde{\StructA}^*(G_{2r+1,n}) \mid n \in \bbN  \}$.

    \subsection{(In-)distinguishability of the instances}
    We first show that there is a sentence in $\Ll(\Qq_{\ell+1}^{\NCSP^{\ell}})$ that separates $\Kk_\ell$ from $\tilde{\Kk}_\ell$.
    \begin{lemma}
    \label{lem:satisfiabilityOfCSP}
    For each $n \in \bbN$, there exists a homomorphism $h \colon \StructA(G_{\ell,n}) \to \StructB_\ell$. 
    \end{lemma}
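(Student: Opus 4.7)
The plan is to exhibit the obvious projection map onto the $\bbZ_3$-coordinate as the desired homomorphism. Define $h\colon A_{G_{\ell,n}} \to \bbZ_3$ by $h(e,a) \coloneqq a$ for every $v \in V(G_{\ell,n})$, $e \in E(v)$, $a\in\bbZ_3$, and $h(v,a) \coloneqq a$ for every $v\in V(G_{\ell,n})$, $a\in\bbZ_3$. Since $A_{G_{\ell,n}} = \bigcup_v (B_v \cup C_v)$ and the two rules agree on their domains, this defines $h$ on all of $A_{G_{\ell,n}}$.

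To verify that $h$ is a homomorphism, fix $j \in \{0,1,2\}$ and take an arbitrary $\bar t \in R_j^{\StructA(G_{\ell,n})}$. Because $\StructA(G_{\ell,n}) = \StructA(G_{\ell,n},\emptyset)$, Definition~\ref{def:CFIstruct} gives $R_j^{\StructA(G_{\ell,n})} = \bigcup_{v \in V} R_j^{\StructA(v,0)}$, so there is some $v$ with $\bar t \in R_j^{\StructA(v,0)}$. By Definition~\ref{def:Components} applied with $s = 0$, $\bar t$ has the form $\bigl((e_1,a_1),\ldots,(e_\ell,a_\ell),(v,\bar a \cdot \bar u)\bigr)$ where $\bar a = (a_1,\ldots,a_\ell) \in \bbZ_3^\ell$ and $\sum_{i\in[\ell]} a_i = j - 0 = j$. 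Applying $h$ component-wise yields $(a_1,\ldots,a_\ell,\bar a \cdot \bar u)$, which lies in $R_j^{\StructB_\ell}$ by Definition~\ref{def:template1} since $\sum_{i} a_i = j$. Hence $h(\bar t) \in R_j^{\StructB_\ell}$, as required.

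No real obstacle is expected: the lemma is essentially the observation that the ``untwisted'' CFI-style instance $\StructA(G_{\ell,n}) = \StructA(G_{\ell,n},\emptyset)$ is satisfiable via the all-zero-shift assignment, because every gadget is built with charge $s=0$ and the defining equations of $R_j^{\StructA(v,0)}$ and $R_j^{\StructB_\ell}$ coincide under projection. The nontrivial companion statement, namely that $\tilde\StructA(G_{\ell,n})$ (with a single charge-$1$ gadget) has \emph{no} homomorphism to $\StructB_\ell$, is what will drive the separation argument, but that is not part of the present lemma.
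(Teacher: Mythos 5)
Your proof is correct and takes the same approach as the paper: define $h$ by projection onto the $\bbZ_3$-coordinate and check that the defining condition of $R_j^{\StructA(v,0)}$ matches that of $R_j^{\StructB_\ell}$. Your write-up is just a more detailed spelling-out of the verification that the paper leaves implicit.
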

    \begin{proof}
    Let $G_{\ell,n} = (V,E)$.
    A homomorphism $h \colon \StructA(G_{\ell,n}) \to \StructB_\ell$ can be defined by letting $h(e,a) = a$ for each $e \in E, a \in \bbZ_3$, and likewise, $h(v,a) = a$ for every $v \in V$. By definition of the relations $R_j$ in $\StructA(G_{\ell,n})$ and $\StructB_\ell$, this is a homomorphism. 
    \end{proof}

    \begin{lemma}
    \label{lem:nonSatisfiabilityOfCSP}
    For each $n \in \bbN$, there exists no homomorphism from $\tilde{\StructA}(G_{\ell,n})$ to $\StructB_\ell$. 
    \end{lemma}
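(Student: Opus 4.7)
The plan is to force any homomorphism $h\colon \tilde{\StructA}(G_{\ell,n}) \to \StructB_\ell$ into a CFI-style ``shift'' form and then derive a parity contradiction from the bipartiteness of the base graph $G_{\ell,n}$ combined with the single twisted charge at $\tilde v$.

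First I would extract, from each local gadget, an affine shift on every edge-fibre $\{(e,0),(e,1),(e,2)\}$. Fix a vertex $v$ with $\bar e(v) = (e_1,\dots,e_\ell)$ and charge $s_v$, and suppose $h$ is a homomorphism. Since $\StructB_\ell$ has universe $\bbZ_3$ and the tuples in $R_j^{\StructA(v,s_v)}$ must land in $R_j^{\StructB_\ell}$, comparing the first $\ell$ coordinates forces $\sum_i h(e_i, a_i) = j$ whenever $\sum_i a_i = j - s_v$. As $j$ ranges over $\bbZ_3$ this covers all of $\bbZ_3^\ell$, so
\[
\sum_{i\in[\ell]} \bigl(h(e_i,a_i) - a_i\bigr) = s_v \quad\text{for every } \bar a \in \bbZ_3^\ell.
\]
Varying a single coordinate $a_i$ shows $h(e_i,a) - a$ depends only on the edge, yielding a well-defined shift $c_e \in \bbZ_3$ per edge $e \in E(G_{\ell,n})$ (consistency across the two endpoints of $e$ is automatic because $h$ is a function). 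Substituting $h(e_i,a) = a + c_{e_i}$ back then gives the local ``flow'' equation
\[
\sum_{e \in E(v)} c_e = s_v \quad\text{for every } v \in V(G_{\ell,n}).
\]

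Second, I would exploit that $G_{\ell,n}$ is bipartite by construction: $V = V_1 \uplus V_2$ with each edge having one endpoint in each part. Summing the flow equation over $V_t$ counts every edge exactly once, so
\[
\sum_{v \in V_1} s_v \;=\; \sum_{e \in E(G_{\ell,n})} c_e \;=\; \sum_{v \in V_2} s_v.
\]
In $\tilde{\StructA}(G_{\ell,n})$ the only nonzero charge is $s_{\tilde v} = 1$, and $\tilde v$ lies in exactly one of $V_1, V_2$. Hence one side of the displayed equation equals $1$ and the other equals $0$ in $\bbZ_3$, a contradiction.

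The only real obstacle is verifying that the $(\ell+1)$-st coordinate $(v,\bar a \cdot \bar u)$ of the gadget relations does not introduce an inconsistency that blocks the shift-extraction step. This is a direct check using $\bar u = (0,1,2,0,\dots,0)$: once we know $h(e_i,a) = a + c_{e_i}$, preservation of the $(\ell+1)$-st coordinate forces $h(v,b) = b + c_{e_2} + 2\, c_{e_3}$, a well-defined function of $b$ alone that imposes no further constraint on the shifts $c_e$. With this verified, the CFI-type bipartiteness argument above goes through cleanly.
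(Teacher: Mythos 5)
Your proof is correct and follows essentially the same route as the paper: extract from the hypothetical homomorphism $h$ edge-weights $c_e\in\bbZ_3$ satisfying the Kirchhoff-style equations $\sum_{e\in E(v)}c_e=s_v$, and then derive the contradiction by summing over the two parts of the bipartition. The one difference is in how the weights are extracted: you prove the stronger structural fact that $h$ must be an affine shift $h(e,a)=a+c_e$ on every edge fibre (which indeed follows by varying one coordinate at a time in the constraint $\sum_i(h(e_i,a_i)-a_i)=s_v$), whereas the paper simply evaluates the constraint at $\bar a=\bar 0$ to get $\sum_{e\in E(v)}h(e,0)=s_v$ and sets $\lambda(x_e)\coloneqq h(e,0)$, which is all that is needed. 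Your final paragraph about the $(\ell+1)$-st coordinate is a reasonable sanity check but is not actually required for the argument: the shift-extraction and the bipartiteness contradiction use only the first $\ell$ coordinates, and since you are proving non-existence, any extra constraint from the last coordinate could only help, never block.
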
 
    \begin{proof}
    Again, let $G_{\ell,n} = (V,E)$.
    Consider the following system of linear equations $\Ee$ over $\bbZ_3$: For each $e \in E$, it has one variable $x_e$, and for each $v \in V \setminus \{\tilde{v}\}$, it has the equation $\sum_{e \in E(v)} x_e = 0$. The equation associated with $\tilde{v}$ reads $\sum_{e \in E(\tilde{v})} x_e = 1$. 
    Now suppose for a contradiction that there exists a homomorphism $h \colon \tilde{\StructA}(G_{\ell,n}) \to \StructB_{\ell}$. Then $h$ defines a solution $\lambda$ for $\Ee$ via $\lambda(x_e) \coloneqq h(e,0)$. 

    But such a solution cannot exist: Let $S \uplus T = V(G)$ be the bipartition of $G_{\ell,n}$. We assume w.l.o.g.\ that $\tilde{v} \in S$. Then we have $\sum_{v\in S} \sum_{e \in E(v)}\lambda(x_e) = 1$ and $\sum_{v\in T} \sum_{e \in E(v)}\lambda(x_e) =0$. However, this is a contradiction since clearly $\sum_{v \in S} \sum_{e \in E(v)}\lambda(x_e)=\sum_{e\in E}\lambda(x_e)=\sum_{v\in T} \sum_{e \in E(v)} \lambda(x_e)$.
   \end{proof} 

    
	\begin{corollary}
		\label{cor:CFIseparableWithCSPquantifier}
		There is a sentence $\psi \in \Ll(\Qq_{\ell+1}^{\NCSP^{\ell}})$ such that for each $n \in \bbN$, $\StructA(G_{\ell,n})$ and $\tilde{\StructA}(G_{\ell,n})$ are separated by $\psi$.
	\end{corollary}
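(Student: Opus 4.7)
The plan is to exhibit the separating sentence explicitly as a single application of the CSP quantifier for the template $\StructB_\ell$, with the identity interpretation. Namely, I would take
\[
\psi \coloneqq Q_{\CSP{\StructB_\ell}}\, \bar{y}_0\, \bar{y}_1\, \bar{y}_2\, \bigl(R_0 \bar{y}_0,\ R_1 \bar{y}_1,\ R_2 \bar{y}_2\bigr),
\]
where each $\bar{y}_j$ is an $(\ell+1)$-tuple of distinct variables. Under the quantifier semantics recalled in the preliminaries, the interpretation $\Psi(\StructA,\emptyset)$ produced on a $\{R_0,R_1,R_2\}$-structure $\StructA$ is just $\StructA$ itself, so $\StructA \models \psi$ if and only if $\StructA \in \CSP{\StructB_\ell}$.

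Next I would check that $\psi \in \Ll(\Qq_{\ell+1}^{\NCSP^{\ell}})$. By Definition \ref{def:template1} each relation $R_j^{\StructB_\ell}$ has arity $\ell+1$, so the arity of $Q_{\CSP{\StructB_\ell}}$ is $\ell+1$. Lemma \ref{lem:modifiedTemplatePolyClosed} states that $n^\ell$ is a partial polymorphism of $\StructB_\ell$, and \cite[Proposition 11]{dawarHella2024} (cited in the preliminaries just before Lemma \ref{lem:closureOfClassImpliesTemplateClosure}) then yields that $Q_{\CSP{\StructB_\ell}}$ is $\Nn^\ell$-closed. Combining the two, $Q_{\CSP{\StructB_\ell}} \in \Qq_{\ell+1}^{\NCSP^{\ell}}$, and hence $\psi$ is a sentence of $\Ll(\Qq_{\ell+1}^{\NCSP^{\ell}})$.

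The separation itself is then immediate from the two preceding lemmas: Lemma \ref{lem:satisfiabilityOfCSP} gives $\StructA(G_{\ell,n}) \to \StructB_\ell$, hence $\StructA(G_{\ell,n}) \models \psi$, while Lemma \ref{lem:nonSatisfiabilityOfCSP} gives $\tilde{\StructA}(G_{\ell,n}) \not\to \StructB_\ell$, hence $\tilde{\StructA}(G_{\ell,n}) \not\models \psi$. There is no real obstacle to overcome at this point; all the work has already been invested in building the template $\StructB_\ell$ so as to guarantee both partial-polymorphism closure (Lemma \ref{lem:modifiedTemplatePolyClosed}) and the correct satisfiability behavior on the two families of CFI-like instances (Lemmas \ref{lem:satisfiabilityOfCSP} and \ref{lem:nonSatisfiabilityOfCSP}). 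The corollary is obtained by assembling these three facts.
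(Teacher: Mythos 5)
Your proof is correct and follows exactly the same route as the paper's: establish that $Q_{\CSP{\StructB_\ell}}$ lies in $\Qq_{\ell+1}^{\NCSP^\ell}$ via Lemma~\ref{lem:modifiedTemplatePolyClosed}, and then invoke Lemmas~\ref{lem:satisfiabilityOfCSP} and~\ref{lem:nonSatisfiabilityOfCSP} for the separation. The only difference is that you spell out the separating sentence $\psi$ and the identity interpretation explicitly, which the paper's terser proof leaves implicit.
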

	\begin{proof}
	By Lemma \ref{lem:modifiedTemplatePolyClosed}, $n^\ell$ is a polymorphism of $\StructB_\ell$.
    Thus, the CSP quantifier for the $(\ell+1)$-ary template $\StructB_\ell$ is in $\Qq_{\ell+1}^{\NCSP^{\ell}}$. 
    By Lemmas \ref{lem:satisfiabilityOfCSP} and \ref{lem:nonSatisfiabilityOfCSP}, this quantifier distinguishes $\StructA(G_{\ell,n})$ and $\tilde{\StructA}(G_{\ell,n})$, for every $n \in \bbN$.
	\end{proof}		
    The preceding assertions can be shown in exactly the same way for the classes $\Kk^*_{r,\ell}$ and $\tilde{\Kk}^*_{r,\ell}$, where the polymorphism-closedness of the template $\StructB^*_{r,\ell}$ is due to Lemma \ref{lem:modifiedTemplate2PolyClosed}. 

    To finish the proof of the separation $\Ll(\Qq_{\ell+1}^{\NCSP^{\ell}})\nleq \Ll(\Qq_{\ell-1})$ (and thus of Theorem \ref{thm:mainAritySeparation}), we show that no sentence in $\Ll(\Qq_{\ell-1})$ can distinguish between the classes $\Kk_\ell$ and $\tilde{\Kk}_\ell$. 
    To achieve this, we show that for every $k \geq \ell-1$,
    there exists $n \in \bbN$ such that Duplicator has a winning strategy in the game $\BP^k_{\ell-1}(\StructA(G_{\ell,n}),\tilde{\StructA}(G_{\ell,n}))$.
    Then it follows with Corollary \ref{cor:DuplicatorStrategyImpliesInseparatbilityOfClasses} that $\Kk_\ell$ and $\tilde{\Kk}_\ell$ cannot be separated by a sentence of $\Ll(\Qq_{\ell-1})$.

    The indistinguishability of $\Kk^*_{r,\ell}$ and $\tilde{\Kk}^*_{r,\ell}$ in $\Ll(\Qq_r)$ is shown with a similar argument. 
    So from now on, let $k \geq \ell-1$ be a fixed number of pebbles.
    We choose $n \in \bbN$ so that $G_{\ell,n}$ is defined, and $n$ is sufficiently larger than $k$. In the following, write $G \coloneqq G_{\ell,n}$ and let $V$ and $E$ be the vertex and edge set of $G$. Recall that $G$ is the disjoint union of identical graphs $H_1, \dots H_{n+1}$ with certain connections between them.
    \begin{lemma}
    \label{lem:duplicatorWins}
    Duplicator has a wining strategy in the game $\BP^k_{\ell-1}(\StructA(G),\tilde{\StructA}(G))$.
    \end{lemma}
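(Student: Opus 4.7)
The plan is to adapt the standard CFI strategy for Duplicator to our $\bbZ_3$-valued gadgets. The first ingredient is a flexibility lemma obtained by extending the construction to allow arbitrary vertex charges $\sigma : V \to \bbZ_3$: one shows that $\StructA(G, \sigma_1) \cong \StructA(G, \sigma_2)$ whenever $\sum_v \sigma_1(v) \equiv \sum_v \sigma_2(v) \pmod 3$. Concretely, for any two vertices $u, v \in V$ there is an explicit isomorphism $\phi_{u,v} : \StructA(G, \{u\}) \to \StructA(G, \{v\})$ built by composing cyclic bijections on the edge-element sets $\{(e, 0), (e, 1), (e, 2)\}$ for $e$ on a fixed $u$--$v$ path in $G$, together with compensating permutations on the vertex-element sets $\{(w,0),(w,1),(w,2)\}$ at the endpoints and (possibly) at intermediate vertices on the path. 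Outside this path, $\phi_{u,v}$ acts as the identity.

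Duplicator maintains the following invariant throughout the game: there is an \emph{unpebbled twist vertex} $u^* \in V$, meaning that $C_{u^*}$ contains no pebble, and the current mapping from $\StructA(G)$-pebbles to $\tilde{\StructA}(G)$-pebbles is the restriction of the bijection $\phi_{u^*, \tilde v} : \StructA(G, \{u^*\}) \to \tilde{\StructA}(G)$. This invariant implies that the pebble map is a partial isomorphism. Indeed, the symmetric difference $R_j^{\StructA(u^*,0)} \triangle R_j^{\StructA(u^*,1)}$ consists of tuples each containing a vertex element from $C_{u^*}$, so no tuple built from pebbled elements can witness the relational discrepancy between $\StructA(G)$ and $\StructA(G, \{u^*\})$; everywhere else the two structures have identical relations. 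Thus on the pebble set, $\StructA(G)$ and $\StructA(G,\{u^*\})$ look identical, and composing with the genuine isomorphism $\phi_{u^*,\tilde v}$ yields a partial isomorphism onto $\tilde{\StructA}(G)$.

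After each round, Duplicator updates the invariant. If the $\ell - 1$ new pebbles leave $C_{u^*}$ untouched, $u^*$ is kept unchanged and $\phi_{u^*, \tilde v}$ is replayed. Otherwise, Duplicator selects a new unpebbled twist vertex $u^{**}$, fixes a path $\pi$ in $G$ from $u^{**}$ to $u^*$ that avoids all old pebbled vertex and edge elements, and plays the bijection $\phi_{u^{**},\tilde v} := \phi_{u^*,\tilde v} \circ \phi_{u^{**},u^*}$, with $\phi_{u^{**},u^*}$ built from $\pi$. Because $\pi$ avoids every old pebble element, $\phi_{u^{**},u^*}$ is the identity on them, so the updated bijection extends the previous pebble mapping consistently; the new pebbles are mapped by definition of $\phi_{u^{**},\tilde v}$, preserving the partial-isomorphism property on the full pebble set.

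The main obstacle is showing that a suitable pebble-avoiding path $\pi$ and a suitable vertex $u^{**}$ always exist. This is where the structure of $G = G_{\ell,n}$ enters: $G$ consists of $n+1$ disjoint copies $H_1, \dots, H_{n+1}$ of an $(\ell-1)$-connected graph $H$, glued by a sparse matching of cross-edges. Since the pebbles occupy at most $k$ elements in total, at most $k$ of the copies $H_i$ contain any pebbled elements, so for $n > k$ at least one copy $H_j$ is entirely pebble-free. Choosing $u^{**}$ inside $H_j$ and exploiting the $(\ell-1)$-connectivity of each $H_i$ to locally dodge the pebbles, one can always route $\pi$ from $u^{**}$ through $H_j$ to $u^*$ while avoiding the pebbled region. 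Taking $n$ sufficiently larger than $k$ ensures that this argument goes through at every round, so Duplicator maintains the invariant indefinitely and wins the play.
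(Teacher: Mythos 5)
The overall architecture of your proposal (maintain a twist vertex, keep the played bijection as the restriction of an isomorphism $\phi_{u^*,\tilde v}$, shift the twist along a pebble-avoiding path when necessary) is the same as the paper's. Your observation that a tuple in the symmetric difference of $R_j^{\StructA(u^*,0)}$ and $R_j^{\StructA(u^*,1)}$ always contains an element of $C_{u^*}$, so an unpebbled $C_{u^*}$ already gives a partial isomorphism, is correct. However, there is a genuine gap precisely in the step you describe as ``adapting the standard CFI strategy'': the one technical obstacle that is new to this construction is not addressed.

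When Spoiler pebbles $C_{u^*}$, you must find $\phi_{u^{**},u^*}$ that fixes all of the \emph{now-pebbled} elements (not merely the ``old'' ones, which is a slip of wording), including the pebble on $C_{u^*}$. But $\phi_{u^{**},u^*}$ is built from a path that ends at $u^*$, and by the definition of the vector $\bar u = (0,1,2,0,\dots,0)$ and the cyclic bijections, a shift on the terminal edge $e_d\in\bar e(u^*)$ permutes $C_{u^*}$ nontrivially whenever $d\in\{2,3\}$. Spoiler can arrange for all escape edges except $e_2,e_3$ to be blocked, in which case no single-path $\phi_{u^{**},u^*}$ fixes $C_{u^*}$, and your claim that ``the updated bijection extends the previous pebble mapping consistently'' fails. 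The paper's proof devotes its entire Case 2 (and the two-path decomposition $g_{P_1,b_1}\circ g_{P_2,b_2}$ with $b_1+b_2=c$ and $b_1+2b_2=0$) to exactly this situation. A related weakness is that your invariant only asks that $C_{u^*}$ be unpebbled, and your strategy keeps $u^*$ as long as that holds; but then Spoiler can, over several rounds, surround $u^*$ with pebbles on all of $\{e_i\}\times\bbZ_3$ for $e_i\in E(u^*)$ without touching $C_{u^*}$, after which no pebble-free escape path exists when $C_{u^*}$ is finally pebbled. The paper avoids this by using the stronger ``safe vertex'' invariant (the entire copies $H_i,H_j$ around $u$ are pebble-free), which guarantees that only the at most $\ell-1$ freshly-placed pebbles can obstruct the escape and lets the $(\ell-1)$-connectivity argument go through.
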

The lemma is proved by showing that Duplicator can maintain an invariant which prevents Spoiler from winning.
Let $(\alpha,\beta)$ be a position in the game $\BP^k_{\ell-1}(\StructA(G),\tilde{\StructA}(G))$ with $\dom{\alpha}=\dom{\beta} \subseteq \{x_{1},\ldots,x_{k}\}$. For the sake of  simplicity, we denote the tuples of \emph{pebbled elements} in the image of $\alpha$ and $\beta$ by $\bar{\alpha}$ and $\bar{\beta}$, respectively. Furthermore, we call $(\bar{\alpha}, \bar{\beta})$ the corresponding \emph{pebble position} of the game.

A vertex $u\in V(G)$ is \emph{safe} in a pebble position $(\bar{\alpha}, \bar{\beta})$ if $u=v_{ij}$ or $u=w_{ij}$ and 
$A_v$ is pebble-free (i.e., no component of $\bar{\alpha}$ or $\bar{\beta}$ is in $A_v$) for every $v\in V(H_i)\cup V(H_j)$.

Let $u \in V$. We call a bijection $f \colon A_G \to A_G$ \emph{good bar $u$} if for every $v \in V \setminus \{u\}$, $\restrict{f}{A_v} \colon A_v \to A_v$ is a partial isomorphism $\StructA(G)\to\tilde{\StructA}(G)$ (thus, $f$ is an isomorphism between $\StructA(G)$ and $\tilde{\StructA}(G)$ except at $A_u$).

To prove Lemma \ref{lem:duplicatorWins}, we show that Duplicator can maintain the following \textbf{invariant} for the pebble positions $(\bar{\alpha}, \bar{\beta})$ reached by the moves of the players:
There exists a bijection $f \colon A_G \to A_G$ with $f(\bar{\alpha}) = \bar{\beta}$ that is good bar $u$, for some $u \in V$ which is \emph{safe}.

\begin{restatable}{lemma}{invariantCanBeMaintained}
\label{lem:invariantCanBeMaintained}
Let $(\bar{\alpha}, \bar{\beta})$ be the current pebble position in $\BP^k_{\ell-1}(\StructA(G),\tilde{\StructA}(G))$.
Assume the invariant holds in the beginning of the round. Then Duplicator has a strategy to ensure the invariant also in the beginning of the following round.
\end{restatable}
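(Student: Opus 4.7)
Duplicator's strategy is to play $f' \coloneqq f$, the bijection provided by the invariant. We then have two things to check: (i) for every Spoiler response $(\bar{y},\bar{a}) \in X^{\ell-1}\times A_G^{\ell-1}$, the resulting map $\alpha' \mapsto \beta'$ is a partial isomorphism; and (ii) there exist a new safe vertex $u''$ and a bijection $f''$ good bar $u''$ with $f''(\bar{\alpha}') = \bar{\beta}'$.

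\textbf{Partial isomorphism.} All tuples of each $R_j^{\StructA(v,s)}$ lie within one gadget $A_v$, and the arity is $\ell+1$. For $v\neq u$, $\restrict{f}{A_v}$ is already a partial isomorphism by assumption, so every fully-pebbled $R_j$-tuple in $A_v$ is handled correctly. For $v = u$: by safety, $A_u$ was pebble-free before the round, and Spoiler can add at most $|\bar{a}| = \ell - 1$ pebbles in $A_u$. Since $\ell - 1 < \ell + 1$, no $R_j$-tuple inside $A_u$ can be fully pebbled, so no relational constraint on $A_u$ is triggered. This yields (i).

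\textbf{Reinstating the invariant.} Let $P' \subseteq V$ be the set of vertices whose gadget contains a pebble after the round; then $|P'| \leq k$. If $u$ remains safe, i.e.\ $V(H_i)\cup V(H_j)$ is still disjoint from $P'$, then $f''\coloneqq f'$ and $u''\coloneqq u$ suffice. Otherwise, using that $G = G_{\ell,n}$ has $n+1$ copies $H_1,\ldots,H_{n+1}$ of $H$ and we may choose $n$ with $n+1 > 2k+2$, we can pick distinct indices $i'',j''$ for which $V(H_{i''})\cup V(H_{j''})$ is disjoint from $P'$, and set $u''\coloneqq v_{i''j''}$, which is then safe.

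\textbf{Constructing $f''$; main obstacle.} The new $f''$ is obtained from $f'$ by modifying it only on gadgets lying on a path $\pi$ in $G$ from $u$ to $u''$ whose interior vertices all lie outside $P'$. Such a $\pi$ exists because each $H_s$ is $(\ell-1)$-connected (Lemma~\ref{lem:existenceOfGridGraphs}) and the inter-copy edges $v_{st}w_{ts}$ provide additional global connectivity; for $n$ large relative to $k$, the at most $k$ pebbled gadgets can be routed around. The modification alters the cyclic shift parameters on the edges of $\pi$ so that the partial-isomorphism condition is restored at $u$ (the defect leaves it) and absorbed at $u''$, while shift changes on consecutive path edges cancel at each interior vertex, leaving the partial-isomorphism property there intact. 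Pebble-freeness of the interior vertices of $\pi$ guarantees that the perturbation does not affect any pebbled element. The delicate point is that $A_u$ itself may carry some of Spoiler's new pebbles: since $u$ has $\ell$ incident edges but $|\bar{a}| \leq \ell - 1$, there must be an incident edge whose components in $A_u$ are pebble-free, and this edge is chosen as the first step of $\pi$; a small further check takes care of any pebbled vertex component via the fixed vector $\bar{u}$. We expect this pigeonhole argument on the first edge of $\pi$, together with the connectivity-based routing through pebble-free interior, to be the heart of the proof.
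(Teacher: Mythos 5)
Your overall scaffold matches the paper's: Duplicator replays $f$, the new position remains a partial isomorphism because $A_u$ was pebble-free by safety and Spoiler adds at most $\ell-1 < \ell+1$ pebbles so no relation tuple inside $A_u$ is fully pebbled, and then the defect at $u$ is shifted along a pebble-free path to a fresh safe vertex $u''$ using the cancellation of shift values on consecutive path edges. That much is correct and is exactly how the paper starts.

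However, there is a genuine gap in the step you call ``a small further check takes care of any pebbled vertex component via the fixed vector $\bar{u}$.'' Recall that a cyclic bijection sends $(u,a)\mapsto(u,a+c_2+2c_3)$ because $\bar{u}=(0,1,2,0,\dots,0)$. So if a pebble sits on $C_u=\{u\}\times\bbZ_3$, the path-shift only fixes that pebble when the path exits $u$ through an edge $e_d$ with $d\notin\{2,3\}$. Your pigeonhole observation that some edge part of $A_u$ is pebble-free does not give you this: the only pebble-free escape edges may well be exactly $e_2$ and $e_3$, in which case a single path necessarily disturbs the pebbled element in $C_u$. The paper handles this with a dedicated subcase: it first shows (via the ordering that forces the inter-copy edge $e^{ij}$ into a position $d\notin\{2,3\}$, and a pebble-count) that in this bad situation at most $\ell-3$ pebbles remain on the rest of $H_i$, so by $(\ell-1)$-connectivity one can choose \emph{two} pebble-free paths $P_1$, $P_2$ to the same safe $u'$, starting via $e_2$ and $e_3$ respectively, and sets $f'' = g_{P_1,b_1}\circ g_{P_2,b_2}\circ f$ with $b_1+b_2=c$ (to correct the defect) and $b_1+2b_2=0$ (so that the combined shift on $C_u$ vanishes). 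This two-path splitting with the system $b_1+b_2=c,\; b_1+2b_2=0$ is the essential missing idea; without it your argument fails precisely when $C_u$ is pebbled and the escape is forced through $e_2$ or $e_3$.

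A secondary imprecision: you assert the existence of a pebble-avoiding path from $u$ to $u''$ as a matter of ``connectivity-based routing,'' but the paper needs a case split (pebbles inside $H_i$ vs.\ pebbles all inside the local set $W$ so that the inter-copy edge $e^{ij}$ and $H_j$ stay free) to actually produce such a path with only $(\ell-1)$-connectivity of $H$ and at most $\ell-1$ newly placed pebbles. This is fixable but should be spelled out, since it is what guarantees the path exists.
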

A full proof of the lemma can be found in Appendix \ref{sec:appendixArityHierarchySection}. The idea is as follows.
Duplicator begins the round by playing the bijection $f \colon \StructA(G) \to \tilde{\StructA}(G)$ sending $\bar{\alpha}$ to $\bar{\beta}$ which exists because the invariant holds at the beginning of the round. Let $u$ be the vertex such that $f$ is good bar $u$. Now Spoiler modifies the position of up to $\ell-1$ pebbles in both structures, leading to a new pebble position $(\bar{\alpha}', \bar{\beta}')$. 
Duplicator has to find a new bijection $f'$ that satisfies the invariant again. This can be done using a pebble-free path $P$ from $u$ to another vertex $u'$ which is safe with respect to $(\bar{\alpha}', \bar{\beta}')$: Indeed, a standard property of CFI-like constructions is that inconsistencies in bijections can always be ``shifted'' along a path in the base graph. 
More precisely, for every path $P$, and each constant $c \in \bbZ_3$, there exists a bijection $g_{P,c} \colon \StructA(G) \to \tilde{\StructA}(G)$ that induces a local isomorphism everywhere except at the vertex gadgets belonging to the two endpoints $v_1, v_2$ of $P$: For each $j \in \{1,2\}$, $g_{P,c}$ induces an isomorphism from $\StructA(v_j,s_j)$ to $\StructA(v_j,s_j+c)$, where $s_j \in \bbZ_3$ is the charge associated with $v_j$ in $\StructA(G)$. 
In other words, for $P$ a pebble-free path from $u$ to some safe $u'$, and for the appropriate choice of $c \in \bbZ_3$, $f' \coloneqq g_{P,c} \circ f$ will be good bar $u'$.
It remains to argue why such a pebble-free path to a safe vertex always exists. 
This can be done exactly like in \cite{Hella96}: Because the number of copies of $H_i$ in $G$ is sufficiently larger than $k$, there always exists some pebble-free copy $H_i$ and safe $u' \in V(H_i)$. The pebble-avoiding path $P$ from $u$ to $u'$ exists because each copy $H_i$ is sufficiently connected. 
The only difference to \cite{Hella96} is that now, if a pebble has been placed on $C_u = \{u\} \times \bbZ_3$, one has to take extra care: If the path $P$ starts in $u$ via the edge $e_2$ or $e_3 \in \bar{e}(u)$, then there is a problem because $\bar{u}_i \neq 0$ for $i \in \{2,3\}$ (see Definition \ref{def:template1} for the vector $\bar{u}$). As a consequence of this, $g_{P,c}$ cannot be defined so that it fixes the pebbled elements in $C_u$. The workaround in this case is to use two paths $P_1, P_2$ from $u$ to the same safe vertex $u'$, one via $e_2$ and one via $e_3$, and to suitably distribute the value $c \in \bbZ_3$ by which $f$ has to be corrected at $u$ across the two paths.\\

The \textbf{proof of Theorem \ref{thm:mainAritySeparation}} is now straightforward: Corollary \ref{cor:CFIseparableWithCSPquantifier} shows that for every fixed $\ell \geq 3$, our constructed classes $\Kk_\ell$ and $\tilde{\Kk}_\ell$ are separated by a sentence of $\Ll(\Qq_{\ell+1}^{\NCSP^\ell})$. Lemma \ref{lem:duplicatorWins} together with Corollary \ref{cor:DuplicatorStrategyImpliesInseparatbilityOfClasses} shows that the classes are inseparable in $\Ll(\Qq_{\ell-1})$. Thus, $\Ll(\Qq_{\ell+1}^{\NCSP^\ell}) \not\leq \Ll(\Qq_{\ell-1})$.
The \textbf{proof of Theorem \ref{thm:hierarchyForFixedL}} is analogous. We show that for every pebble number $k \geq r$, there is an $n \in \bbN$ such that Duplicator has a winning strategy in $\BP^k_{r}(\StructA^*(G_{2r+1,n}),\tilde{\StructA}^*(G_{2r+1,n}))$.
The invariant that Duplicator maintains in this game is the same as above. 
From the existence of the winning strategy it follows that $\Kk^*_{r,\ell}$ and $\tilde{\Kk}^*_{r,\ell}$ cannot be separated by a sentence in $\Ll(\Qq_r)$.
Separability of the classes with a CSP quantifier is shown similarly as in Corollary \ref{cor:CFIseparableWithCSPquantifier}.

\section{Limitations of partial-Maltsev-closed quantifiers}\label{sec:Maltsev}

Write $\Qq_{r}^{\Mm}$ for the class of all $r$-ary quantifiers that are closed under the partial Maltsev operation.   Note that the Maltsev operation is a fixed operation of arity three, and thus the arity of the polymorphism is not a variable that plays a role in out study.  Instead, we establish a hierarchy based on the arity of the quantifiers and also construct an example that separates the logic with $r$-ary Maltsev-closed quantifiers from the logic with all $r$-ary generalized quantifiers, at least in the context of a fixed number of variables.  The interest in this lies in the novel nature of the construction and also the novelty in the use of a Spoiler-Duplicator game specific to Maltsev-closed quantifiers.  In the present section we give a brief summary of the results.  Full details of the construction and proofs can be found in Appendix~\ref{app:Maltsev}.

We obtain the arity hierarchy for Maltsev-closed quantifiers by noting that the quantifiers used to establish the arity hierarchy in~\cite{Hella96} are Maltsev-closed.  To be specific, define for each $r \geq 3$ the structure $\StructB_r = (\{0,1\}, R_0,R_1)$ where $R_i = \{(a_1,\ldots,a_r) \mid \sum_{1\leq j \leq r} a_j = i \pmod 2\}$.  Then, it follows from the proof of \cite[Theorem 8.6]{Hella96} that $\CSP{\StructB_{r+1}}$ is not definable in $\Ll(\Qq_r^\Mm)$.  Since $\StructB_r$ admits a Maltsev polymorphism, the following is immediate.
\begin{restatable}{theorem}{maltsevHierarchy}
		\label{thm:maltsevHierarchy}
		For every $r \geq 3$,
		$
		\Ll(\Qq_r^\Mm) \lneq \Ll(\Qq^\Mm_{r+1}).
		$
	\end{restatable}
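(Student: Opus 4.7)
The inclusion $\Ll(\Qq_r^\Mm) \le \Ll(\Qq^\Mm_{r+1})$ is immediate since $\Qq_r^\Mm \subseteq \Qq_{r+1}^\Mm$. I will focus on strictness, following the strategy outlined just before the theorem statement.

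First, I would observe that the structure $\StructB_{r+1} = (\{0,1\}, R_0, R_1)$ admits a total Maltsev polymorphism, namely the ternary operation $m(a,b,c) \coloneqq a + b + c \pmod 2$. Indeed, for any three tuples $\bar{a},\bar{b},\bar{c} \in R_i^{\StructB_{r+1}}$ the componentwise sum mod $2$ again has coordinate-sum $i + i + i \equiv i \pmod 2$, so it lies in $R_i^{\StructB_{r+1}}$. Since $m$ satisfies the Maltsev identities $m(a,a,b)=b=m(b,a,a)$, its restriction to the domain on which the partial Maltsev operation is defined is exactly that partial Maltsev operation. Hence the partial Maltsev operation is a partial polymorphism of $\StructB_{r+1}$.

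Second, I would invoke \cite[Proposition~11]{dawarHella2024} (which applies whenever the partial-function family is projective, a property that is easy to check for $\Mm$) to conclude that the CSP quantifier $Q_{\CSP{\StructB_{r+1}}}$ is $\Mm$-closed. Since $\StructB_{r+1}$ has vocabulary with two relations of arity $r+1$, this quantifier has arity $r+1$, so $Q_{\CSP{\StructB_{r+1}}} \in \Qq_{r+1}^\Mm$. In particular, $\CSP{\StructB_{r+1}}$ is trivially definable in $\Ll(\Qq_{r+1}^\Mm)$.

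Third, I would apply Hella's Theorem~8.6 from \cite{Hella96}, which establishes that $\CSP{\StructB_{r+1}}$ is not definable in $\Ll(\Qq_r)$ (the logic with \emph{all} $r$-ary generalized quantifiers). Since $\Qq_r^\Mm \subseteq \Qq_r$ gives $\Ll(\Qq_r^\Mm) \subseteq \Ll(\Qq_r)$, we conclude that $\CSP{\StructB_{r+1}}$ is also not definable in $\Ll(\Qq_r^\Mm)$. Combined with definability in $\Ll(\Qq_{r+1}^\Mm)$, this yields the strict inclusion.

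The only genuine content beyond Hella's lower bound is the polymorphism verification in the first step and the passage from template polymorphism to quantifier closure in the second; both are routine. The main ingredient, Hella's Theorem~8.6, is used as a black box, so there is no real obstacle here -- this is essentially the observation announced in the paragraph preceding the theorem.
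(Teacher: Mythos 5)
Your proof is correct and follows essentially the same route as the paper: observe that $\StructB_{r+1}$ admits a total Maltsev polymorphism (hence the partial one), conclude via the projectivity argument that $Q_{\CSP{\StructB_{r+1}}}$ is in $\Qq_{r+1}^\Mm$, and then appeal to Hella's Theorem 8.6 for the lower bound against $\Ll(\Qq_r)$. You spell out the verification of the Maltsev polymorphism and the reduction from $\Ll(\Qq_r^\Mm)$ to $\Ll(\Qq_r)$ more explicitly than the paper does, but the structure of the argument is the same.
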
	

Finally, we show that if we limit our logic to exactly $k$ variables, then there is a quantifier of arity $k$ that cannot be defined by using Maltsev-closed quantifiers of arity $k$, giving us the following theorem.
\begin{restatable}{theorem}{MaltsevSeparation}
		\label{thm:MaltsevSeparation}
		For every $k \geq 3$,
		$
		\Ll^k(\Qq_{k}^\Mm) \lneq \Ll^k(\Qq_{k}).
		$ 
\end{restatable}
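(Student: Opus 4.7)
The strategy follows the template developed in Section~\ref{sec:ArityHierarchy}: exhibit two structures $\StructA_k, \StructB_k$ that are distinguishable by some $k$-ary quantifier but equivalent in the Maltsev-closed $k$-pebble game of Dawar and Hella~\cite{dawarHella2024}, and then invoke the game characterization of $\Ll^k(\Qq_k^\Mm)$ to conclude. The first step is to recall the precise rules of this game. It is a variant of the bijection pebble game $\BP^k_k$ in which the partial isomorphism condition Duplicator must preserve is formulated relative to the Maltsev-closure of the currently pebbled positions rather than the pebbled tuples themselves; equivalently, Spoiler is also allowed to ``challenge'' tuples obtained column-wise from pebbled tuples via the partial Maltsev operation. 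Given such a game, Theorem~\ref{thm:MaltsevSeparation} reduces to constructing an appropriate pair $(\StructA_k, \StructB_k)$ and verifying a Duplicator strategy.

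For the easy direction (separation by a $k$-ary quantifier), I would equip $\StructA_k$ and $\StructB_k$ with a distinguished $k$-ary ``marker'' relation whose interpretation differs in a simple but controlled way between the two structures (for instance, a single extra tuple). Any quantifier $Q_\Kk$ with $\Kk$ the class of $\{R\}$-structures where $R$ is nonempty, or contains a specified configuration, is then a $k$-ary quantifier of $\Ll^k(\Qq_k)$ that separates $\StructA_k$ from $\StructB_k$; the $k$ variables precisely suffice to bind the components of the $k$-ary relation. The nontrivial part is to ensure that the marker is invisible to the Maltsev-closed game.

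The algebraic construction I would attempt places the marker inside an ambient structure whose remaining relations are already saturated under the partial Maltsev operation, so that applying Maltsev during the game yields no information beyond the pebbled positions themselves. A natural candidate is to take affine subspaces of $\bbZ_p^n$ (for some prime $p\ge 3$ and some $n$ growing with $k$) as the ``background'' relations, and to place the marker at a location that cannot be reached by the partial Maltsev operation from any $k$-element configuration in the background: recall that $m(a,b,c)$ is undefined unless $a=b$ or $b=c$, so if the marker tuples are chosen to be in ``general position'' with respect to the background, Spoiler is unable to exploit Maltsev closures to probe them. Duplicator's strategy would maintain an invariant that the current pebble positions extend to an affine isomorphism between the two ambient spaces, and that the marker discrepancy can be ``shifted'' away from the pebbled region using the slack provided by the large universe, in the spirit of Lemma~\ref{lem:invariantCanBeMaintained}.

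The main obstacle, and the novelty of the proof, is precisely the design of such an example. As the authors emphasize, Maltsev-closed quantifiers are not fooled by CFI-style constructions, so the distinguishing feature cannot be encoded in an affine parity; it must live in a part of the structure that the partial Maltsev operation, because of its strict equality preconditions $m(a,a,b)$ and $m(b,a,a)$, is unable to access. Getting the interaction of universe size, marker placement, background affine structure and pebble budget exactly right — so that the marker remains detectable by a single $k$-ary quantifier yet unreachable by any sequence of $k$-pebble Maltsev challenges — is the real content of the theorem.
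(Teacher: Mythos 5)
Your high-level framing (reduce to the Maltsev pebble game of \cite{dawarHella2024}, build two structures distinguishable by a $k$-ary quantifier yet game-equivalent, and use a Duplicator invariant that shifts a local error away from the pebbled region) matches the paper's skeleton, and you correctly emphasize that CFI-over-a-field is useless here because it has a total Maltsev polymorphism. But the concrete construction you propose is pointed in exactly the wrong direction and the key novel idea is missing.

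You suggest taking the ``background'' relations to be affine subspaces of $\bbZ_p^n$ for a prime $p$ and adding a small $k$-ary marker relation. The affine part of that structure is already closed under a \emph{total} Maltsev polymorphism, so the corresponding CSP quantifier is itself Maltsev-closed and can see the affine constraints; Duplicator would then have to maintain a literal affine isomorphism, which—since the marker makes the structures non-isomorphic—is generically impossible. The intuition that the marker is unreachable because $m(a,b,c)$ needs $a=b$ or $b=c$ does not save you: the operation is applied column-wise, and in a $k$-pebble game Spoiler can get two pebbled tuples to agree in many coordinates, after which Maltsev moves do reach fresh tuples; the paper's Duplicator strategy in fact \emph{uses} exactly these reachable tuples (the set $P$ in the right/left $\Mm$-quantifier move) rather than trying to block them. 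Finally, the claim that the marker ``remains detectable by a single $k$-ary quantifier'' while being game-invisible is asserted, not argued.

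The paper's construction sidesteps all of this by working over the ring $\bbZ_4$ (crucially \emph{not} a field) and defining, at each vertex gadget, two $k$-ary relations $R_0$ and $R_1$ that are unions of two adjacent cosets of the $\bbZ_4$-sum: $\sum a_i \in \{0,1\}$ versus $\sum a_i \in \{2,3\}$ (with a charge shift at the twisted vertex). These ``thick cosets'' are not affine, so there is no total Maltsev polymorphism and the two structures are genuinely non-isomorphic; the proof of non-isomorphism is itself delicate, hinging on a case analysis of permutations of $\bbZ_4$ into rotations and reflections (Lemmas~\ref{lem:permutationsOnZ4}--\ref{lem:nonIsomorphismMaltsevInstances}). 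Yet the partial Maltsev operation is weak enough that Duplicator can, when Spoiler fully pebbles a vertex gadget, play a three-element set $P$ closing to the required tuple while ensuring that any element Spoiler picks restores a local isomorphism, and then use the slack of the $\bbZ_4$-sum (the error is off by $2$, which has two legal ``halves'') to re-route the inconsistency along pebble-free paths. That $\bbZ_4$ trick, and the rotation/reflection analysis it necessitates, is the missing content; without a concrete replacement for it your sketch does not yield a proof.
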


The proof requires a novel construction and substantial technical work to establish. The details are given in  Appendix~\ref{app:Maltsev}. It leaves open the interesting question of whether we can show that this quantifier is not definable in $\Ll(\Qq_{k}^\Mm)$, without the limitation on the number of variables. The apparent difficulty in obtaining such inexpressibility results for Maltsev-closed quantifiers lies in the fact that the typical hard examples, which are based on the CFI-construction, cannot be used: Their templates admit a Maltsev polymorphism, and so they are distinguishable by a quantifier in $\Qq^\Mm$.
      	
\section{Conclusion}
Our main results show that the quantifiers closed under $\ell$-ary partial near-unanimity polymorphisms introduced at CSL'24 \cite{dawarHella2024} indeed give rise to new and interesting logics whose expressive power, in particular if $r=\ell+1$, is strictly different from any of the logics $\Ll(\Qq_r)$ that were studied by Hella in 1996 \cite{Hella96}. 

This establishes the logics $\Ll(\Qq_r^{\Nn^\ell})$ as formalisms whose finite-model-theoretic properties deserve further study. One question that comes to mind is what can be said about the equivalence relation induced by $\Ll(\Qq_r^{\Nn^\ell})$ on finite structures. Can it be characterized as a \emph{homomorphism indistinguishability relation} or via a \emph{game comonad} \cite{abramsky2017pebbling}? For the logics $\Ll(\Qq_r)$, the answer is affirmative \cite{oConghaile}, but for example, the invertible-map equivalences (which also stem from certain generalized quantifiers strictly between $\Qq_r$ and $\Qq_{r+1}$), there provably is no such characterization \cite{lichter_et_alCSL2024}, so it would be interesting to clarify the situation for $\Ll(\Qq_r^{\Nn^\ell})$. It would also be desirable to tighten the gap between the arity reduction we have in Theorem~\ref{thm:mainArityComparison} and the 
quadratic arity increase in Theorem \ref{thm:hierarchyForFixedL}: It seems plausible that already a  smaller increase in arity should give strictly more expressive power.

The long-term goal of the study of polymorphism-closed quantifiers is, however, to establish inexpressibility results for quantifiers stronger than $\Qq_r^{\Nn^\ell}$, in particular the partial Maltsev-closed ones. Theorem \ref{thm:MaltsevSeparation} is a first (and highly non-trivial) step in that direction but ultimately, we would like to exhibit an $r$-ary class of structures that is undefinable in the full logic $\Ll(\Qq_{r}^\Mm)$, without a restriction on the number of variables. 

An even harder open problem is to understand the expressive power of quantifiers closed under partial \emph{weak} near-unanimity polymorphism, for which we do not even know a model-comparison game yet. From a CSP perspective, weak near-unanimity is the most interesting condition as it characterizes precisely the finite-domain CSPs that are solvable in polynomial time (assuming P$\neq$NP).

\bibliographystyle{plainurl}	
\bibliography{references.bib}

\begin{thebibliography}{10}

\bibitem{abramsky2017pebbling}
S.~Abramsky, A.~Dawar, and P.~Wang.
\newblock The pebbling comonad in finite model theory.
\newblock In {\em 2017 32nd Annual ACM/IEEE Symposium on Logic in Computer
  Science (LICS)}, pages 1--12. IEEE, 2017.

\bibitem{barto2009constraint}
L.~Barto and M.~Kozik.
\newblock Constraint satisfaction problems of bounded width.
\newblock In {\em 2009 50th Annual IEEE symposium on foundations of computer
  science}, pages 595--603. IEEE, 2009.

\bibitem{barto}
L.~Barto, A.~Krokhin, and R.~Willard.
\newblock {Polymorphisms, and How to Use Them}.
\newblock In Andrei Krokhin and Stanislav Zivny, editors, {\em The Constraint
  Satisfaction Problem: Complexity and Approximability}, volume~7 of {\em
  Dagstuhl Follow-Ups}, pages 1--44. Schloss Dagstuhl -- Leibniz-Zentrum
  f{\"u}r Informatik, Dagstuhl, Germany, 2017.
\newblock URL:
  \url{https://drops.dagstuhl.de/entities/document/10.4230/DFU.Vol7.15301.1},
  \href {https://doi.org/10.4230/DFU.Vol7.15301.1}
  {\path{doi:10.4230/DFU.Vol7.15301.1}}.

\bibitem{Bulatov}
A.~Bulatov.
\newblock {A Dichotomy Theorem for Nonuniform CSPs}.
\newblock In {\em 58th {IEEE} Annual Symposium on Foundations of Computer
  Science, {FOCS}}, pages 319--330, 2017.
\newblock \href {https://doi.org/10.1109/FOCS.2017.37}
  {\path{doi:10.1109/FOCS.2017.37}}.

\bibitem{BulatovDalmau2006}
A.~Bulatov and V.~Dalmau.
\newblock A simple algorithm for {M}al'tsev constraints.
\newblock {\em {SIAM} J. Comput.}, 36(1):16--27, 2006.
\newblock \href {https://doi.org/10.1137/050628957}
  {\path{doi:10.1137/050628957}}.

\bibitem{caifurimm92}
J.~Cai, M.~F{\"u}rer, and N.~Immerman.
\newblock An optimal lower bound on the number of variables for graph
  identification.
\newblock {\em Combinatorica}, 12:389--410, 1992.

\bibitem{oConghaile}
A.~{\'{O}} Conghaile and A.~Dawar.
\newblock Game comonads {\&} generalised quantifiers.
\newblock In Christel Baier and Jean Goubault{-}Larrecq, editors, {\em 29th
  {EACSL} Annual Conference on Computer Science Logic, {CSL} 2021, January
  25-28, 2021, Ljubljana, Slovenia (Virtual Conference)}, volume 183 of {\em
  LIPIcs}, pages 16:1--16:17. Schloss Dagstuhl - Leibniz-Zentrum f{\"{u}}r
  Informatik, 2021.
\newblock URL: \url{https://doi.org/10.4230/LIPIcs.CSL.2021.16}, \href
  {https://doi.org/10.4230/LIPICS.CSL.2021.16}
  {\path{doi:10.4230/LIPICS.CSL.2021.16}}.

\bibitem{makowskyFestschrift}
A.~Dawar.
\newblock Linear algebraic quantifiers.
\newblock In K.~Meer, A.~Rabinovich, E.~Ravve, and A.~Villaveces, editors, {\em
  Model Theory, Computer Science, and Graph Polynomials}, pages 215 -- 237.
  Springer, 2025.
\newblock Festschrift in Honor of Johann A. Makowsky.

\bibitem{DGP19}
A.~Dawar, E.~Gr{\"{a}}del, and W.~Pakusa.
\newblock Approximations of isomorphism and logics with linear-algebraic
  operators.
\newblock In {\em 46th International Colloquium on Automata, Languages, and
  Programming, {ICALP} 2019}, 2019.
\newblock \href {https://doi.org/10.4230/LIPIcs.ICALP.2019.112}
  {\path{doi:10.4230/LIPIcs.ICALP.2019.112}}.

\bibitem{dawarHella2024}
A.~Dawar and L.~Hella.
\newblock {Quantifiers Closed Under Partial Polymorphisms}.
\newblock In {\em 32nd EACSL Annual Conference on Computer Science Logic (CSL
  2024)}, volume 288 of {\em Leibniz International Proceedings in Informatics
  (LIPIcs)}, pages 23:1--23:19, 2024.
\newblock URL:
  \url{https://drops.dagstuhl.de/entities/document/10.4230/LIPIcs.CSL.2024.23},
  \href {https://doi.org/10.4230/LIPIcs.CSL.2024.23}
  {\path{doi:10.4230/LIPIcs.CSL.2024.23}}.

\bibitem{Diestel}
R.~Diestel.
\newblock {\em Graph Theory}.
\newblock Springer, 5th edition, 2017.

\bibitem{EF}
H-D. Ebbinghaus and J.~Flum.
\newblock {\em Finite Model Theory}.
\newblock Springer, 2nd edition, 1999.

\bibitem{ebbinghaus}
H.D. Ebbinghaus.
\newblock Extended logics: the general framework.
\newblock In J.~Barwise and S.~Feferman, editors, {\em Model-Theoretic Logics},
  pages 25 -- 76. Springer, 1985.

\bibitem{Fagin2005}
R.~Fagin, P.~G. Kolaitis, and L.~Popa.
\newblock Data exchange: getting to the core.
\newblock {\em {ACM} Trans. Database Syst.}, 30:174--210, 2005.
\newblock \href {https://doi.org/10.1145/1061318.1061323}
  {\path{doi:10.1145/1061318.1061323}}.

\bibitem{Hella96}
L.~Hella.
\newblock Logical hierarchies in {P}{T}{I}{M}{E}.
\newblock {\em Information and Computation}, 129:1--19, 1996.

\bibitem{Hella23}
L.~Hella.
\newblock The expressive power of {CSP}-quantifiers.
\newblock In {\em 31st {EACSL} Annual Conference on Computer Science Logic,
  {CSL}}, pages 25:1--25:19, 2023.
\newblock \href {https://doi.org/10.4230/LIPIcs.CSL.2023.25}
  {\path{doi:10.4230/LIPIcs.CSL.2023.25}}.

\bibitem{JonssonLNZ17}
P.~Jonsson, V.~Lagerkvist, G.~Nordh, and B.~Zanuttini.
\newblock Strong partial clones and the time complexity of {SAT} problems.
\newblock {\em J. Comput. Syst. Sci.}, 84:52--78, 2017.
\newblock URL: \url{https://doi.org/10.1016/j.jcss.2016.07.008}, \href
  {https://doi.org/10.1016/J.JCSS.2016.07.008}
  {\path{doi:10.1016/J.JCSS.2016.07.008}}.

\bibitem{LagerkvistW20}
V.~Lagerkvist and M.~Wahlstr{\"{o}}m.
\newblock Sparsification of {SAT} and {CSP} problems via tractable extensions.
\newblock {\em {ACM} Trans. Comput. Theory}, 12(2):13:1--13:29, 2020.
\newblock \href {https://doi.org/10.1145/3389411} {\path{doi:10.1145/3389411}}.

\bibitem{LagerkvistW22}
V.~Lagerkvist and M.~Wahlstr{\"{o}}m.
\newblock {The (Coarse) Fine-Grained Structure of NP-Hard {SAT} and {CSP}
  Problems}.
\newblock {\em {ACM} Trans. Comput. Theory}, 14(1):2:1--2:54, 2022.
\newblock \href {https://doi.org/10.1145/3492336} {\path{doi:10.1145/3492336}}.

\bibitem{Libkin}
L.~Libkin.
\newblock {\em Elements of Finite Model Theory}.
\newblock Springer, 2004.

\bibitem{Lichter23}
M.~Lichter.
\newblock Separating rank logic from polynomial time.
\newblock {\em J. {ACM}}, 70(2):14:1--14:53, 2023.
\newblock \href {https://doi.org/10.1145/3572918} {\path{doi:10.1145/3572918}}.

\bibitem{lichter_et_alCSL2024}
M.~Lichter, B.~Pago, and T.~Seppelt.
\newblock {Limitations of Game Comonads for Invertible-Map Equivalence via
  Homomorphism Indistinguishability}.
\newblock In Aniello Murano and Alexandra Silva, editors, {\em 32nd EACSL
  Annual Conference on Computer Science Logic (CSL 2024)}, volume 288 of {\em
  Leibniz International Proceedings in Informatics (LIPIcs)}, pages
  36:1--36:19, Dagstuhl, Germany, 2024. Schloss Dagstuhl -- Leibniz-Zentrum
  f{\"u}r Informatik.
\newblock URL:
  \url{https://drops.dagstuhl.de/entities/document/10.4230/LIPIcs.CSL.2024.36},
  \href {https://doi.org/10.4230/LIPIcs.CSL.2024.36}
  {\path{doi:10.4230/LIPIcs.CSL.2024.36}}.

\bibitem{Zhuk}
D.~Zhuk.
\newblock A proof of the {CSP} dichotomy conjecture.
\newblock {\em J. {ACM}}, 67:30:1--30:78, 2020.
\newblock \href {https://doi.org/10.1145/3402029} {\path{doi:10.1145/3402029}}.

\end{thebibliography}

	\newpage
	\appendix

    \section{Details omitted from Section \ref{sec:collapses}}
    \label{sec:appendixCollapseSection}
    \hypergraphColouring*
	\begin{proof}
          Given $\StructA$ in the vocabulary $\{R\}$ with $\ar(R)=r$, let $\Gg(\StructA)$ be the graph (possibly with self-loops) on vertex set $A$ with edges $\{(a_i,a_j)\mid (a_1,\ldots,a_r)\in R^\StructA, i\not=j\}$. Clearly $\Gg(\StructA)$ is $\FO$-definable in $\StructA$. Now we claim that $\StructA \to H^r_m$ if, and only if, $\Gg(\StructA)\to K_m$. 
		
		If $h: \StructA \to H^r_m$ is a homomorphism, then $h$ is also a homomorphism from $\Gg(\StructA)$ to $K_m$ because if $(a,b)$ is an edge in $\Gg(\StructA)$, then there is a tuple in $R^{\StructA}$ in which $a$ and $b$ appear together. Thus, $h$ maps $a$ and $b$ to different vertices in $[m]$, as desired. 
		
		Conversely, let $h\colon \Gg(\StructA) \to K_m$ be a homomorphism, and let $\bar{a} \in R^{\StructA}$.  Then the components of $\bar{a}$ are all distinct because $K_m$ has no self-loops, so otherwise, this homomorphism $h$ would not exist. In $\Gg(\StructA)$, the components of $\bar{a}$ form a clique, and so $h$ maps them all to distinct vertices in $[m]$, and hence $h(\bar{a})\in R^r_m$.
	\end{proof}

    \arityReductionNonCSP*
    \begin{proof}
          We start by defining the reduction $f$. For any $\tau$-structure $\StructA$,  note that $\StructA\le n^\ell(\StructA)$.  Thus, iterating the operation $n^{\ell}$ results, in a finite number of steps in a structure $\StructA^+$  with $\StructA\le\StructA^+$ and $n^{\ell}(\StructA^+) = \StructA^+$, i.e.\ $n^\ell$ is a partial polymorphism of $\StructA^+$.  It is easily seen that this process can be defined as an inflationary fixed-point and thus $\StructA^+$ is $\Ll$-definable in $\StructA$. 
	
	Let $(i_1,j_1), (i_2,j_2),..., (i_{\ell},j_\ell)$ be as in the proof of Corollary~\ref{cor:arityReductionQuantifiers}. We define now $f(\StructA)\coloneqq (\StructA^+)^*$, where the operator $\StructC\mapsto\StructC^*$ is as defined before Lemma~\ref{lem:CSPreductionByOne}. Note that $f$ is $\Ll$-definable as the composition of an $\Ll$-definable and $\FO$-definable reduction. 
	Furthermore, we define $\Kk'$ simply as $f(\Kk)$. Thus, the vocabulary $\sigma$ of $\Kk'$ is $\{R_n\mid R\in\tau,n\in [\ell]\}$. As in Corollary~\ref{cor:arityReductionQuantifiers}, the arity of all relations in $\sigma$ is at most $\lceil \frac{\ell-1}{\ell} \cdot r \rceil$.
	
		We still need to show that $\StructA \in \Kk$ if, and only if, $f(\StructA) \in \Kk'$. The only direction that requires work is to prove that $f(\StructA) \in \Kk'$ implies $\StructA \in \Kk$. Assume for this that	$f(\StructA) \in \Kk'$. Then there exists $\StructB \in \Kk$ such that $(\StructB^+)^*=(\StructA^+)^*$.  It follows now from Lemma \ref{lem:arityReductionTrick} that $\StructB^+=\StructA^+$: for any $R\in\tau$ and $\bar{a}\in A^r=B^r$ we have the following sequence of equivalences:
\begin{eqnarray*}
	\bar{a}\in R^{\StructA^+}&\iff& 
	p_{\setminus [i_n,j_n]}(\bar{a})\in R^{\StructA^+}_n \text{ for all } n\in [\ell] \text{ such that }i_n\le\ar(R)\\ &\iff& 			p_{\setminus [i_n,j_n]}(\bar{a})\in R^{\StructB^+}_n \text{ for all }n\in [\ell] \text{ such that }i_n\le\ar(R) \\ &\iff& 			\bar{a}\in R^{\StructB^+}.
\end{eqnarray*}
Since $\StructB\in\Kk$ and $\Kk$ is $n^\ell$-closed, we have $\StructB^+\in\Kk$. But then also $\StructA\in\Kk$ because $\StructA\le\StructA^+=\StructB^+$ and $\Kk$ is downwards monotone.
	\end{proof}

	 \section{Details omitted from Section \ref{sec:ArityHierarchy} } 
     \label{sec:appendixArityHierarchySection}

    We begin with two auxiliary lemmas that do not appear in the main part of the paper but will be needed in some of the omitted proofs.
	\begin{lemma}
		\label{lem:atLeastThreeDistinguishedColumns}
		Let $\ell \geq 3$ and $r \geq \ell$. Let $A \in \bbZ_3^{\ell \times r}$ be a matrix that has the $\ell$-ary near-unanimity property. If $\widehat{n^\ell}(A)$ is not a row of $A$, then there are at least $\ell$ 
		columns in $A$ that are non-constant and have their minority entry in $\ell$ distinct rows.
	\end{lemma}
	\begin{proof}
		Suppose for a contradiction that there are at most $\ell-1$ rows of $A$ which contain the minority values of columns. 
		Then there is at least one row $A_{i-}$ which contains in each column the respective majority value of that column. So $\widehat{n^\ell}(A) = A_{i-}$.
	\end{proof}

	\begin{restatable}{lemma}{separatingVectorPair}
		\label{lem:separatingVectorPair}
			Let $\ell \geq 3$ and $r \geq \ell$. Let $A \in \bbZ_3^{\ell \times r}$ be a matrix with the $\ell$-ary near-unanimity property.
			Let $A_{-j_1}, A_{-j_2}$ be two non-constant columns of $A$ that have their respective minority entry in distinct rows.
			Let $\bar{u}_{11} \in \bbZ_3^{r}$ be the vector that has a $1$-entry at index $j_1$ and $j_2$, and is $0$ everywhere else. Let $\bar{u}_{12} \in \bbZ_3^{r}$ be the vector that has a $1$-entry at index $j_1$, a $2$-entry at index $j_2$, and is $0$ everywhere else. 
			Then the vector $A \cdot \bar{u}_{11}$ or the vector $A \cdot \bar{u}_{12}$ contains three distinct entries. 
	\end{restatable}
	\begin{proof}
		Let $a_{j_1}, a_{j_2} \in \bbZ_3$ denote the majority values in the respective columns $j_1, j_2$, and let $b_{j_1}, b_{j_2}$ be the respective uniquely different values in these columns. 
		By the assumption of the lemma, $b_{j_1}, b_{j_2}$ are in distinct rows, say $i_1, i_2 \in [\ell]$. 
		We have $a_{j_1} \neq b_{j_1}$ and $a_{j_2} \neq b_{j_2}$. There are two cases. Suppose first that $b_{j_1} = b_{j_2}$. 
		Let $v = a_{j_1}+a_{j_2}, w = b_{j_1} + a_{j_2}, x = a_{j_1}+b_{j_2}$. Note that these three entries appear in $A \cdot \bar{u}_{11}$ in rows $i_1, i_2$, and any other row. Now we have $v \neq w$ (since $a_{j_1} \neq b_{j_1}$) and $v \neq x$ (since $a_{j_2} \neq b_{j_2}$). If $w \neq x$, then we are done. If $w = x$, then consider $v' = a_{j_1}+2a_{j_2}, w' = b_{j_1} + 2a_{j_2}, x' = a_{j_1}+2b_{j_2}$. These values appear in $A \cdot \bar{u}_{12}$. As before, $v' \neq w'$ and $v' \neq x'$ is easy to see. 
		Since $w=x$, it follows that $w' - x' = (w-x)+a_{j_2}-b_{j_2} = a_{j_2}-b_{j_2} \neq 0$. Thus, $w' \neq x'$, and so, $A \cdot \bar{u}_{12}$ contains three distinct entries.
	\end{proof}

    Now we are ready to present the proofs of the lemmas in Section \ref{sec:ArityHierarchy}. 	
    \separatingVector*
 	\begin{proof}
        By Lemma \ref{lem:atLeastThreeDistinguishedColumns}, all the $\ell$ columns of $A$ are non-constant and have their minority entry in distinct rows. Without loss of generality we can assume that the respective minority entries are on the diagonal (otherwise, reorder the rows).
		Let $A' \in \bbZ_3^{3 \times 3}$ be the $3 \times 3$-submatrix of $A$ in the top left corner. It is easy to see that $A'$ satisfies conditions 1) and 3) from Lemma \ref{lem:baseCaseSeparatingVector}. It also satisfies condition 2) from that lemma because the row sums of $A$ are all equal by assumption 2), and in $A$, the first three rows are identical in all columns except the first three.
		Therefore, the first three entries of the vector $A \cdot (0,1,2,0,0,...,0)^T$ are all distinct by Lemma \ref{lem:baseCaseSeparatingVector}. 
	\end{proof}

\conditionForGettingARowBack*
\begin{proof}
    Since $A$ has the $\ell$-near-unanimity property by assumption, $\widehat{n^\ell}(A)$ is defined. If $A$ has at least two rows that are identical, then $\widehat{n^\ell}(A)$ must be that row because for each $j \in [\ell]$, the majority value in the $j$-th column of $A$ is equal to the $j$-th entry of the row that appears twice.  
 
	Conversely, assume that the polymorphism returns a row $\bar{a}$ of $A$. Then every entry of $\bar{a}$ is the majority one in its column. Assume for a contradiction that row $\bar{a}$ occurs only once in $A$. Then for every other row $\bar{a}'$ of $A$, there exists a column $j \in [r]$ such that $\bar{a}_j \neq \bar{a}'_j$. 
	But this cannot be the same $j \in [\ell]$ for more than one row $\bar{a}'$ because then, the column $j$ would violate the $\ell$-near-unanimity property. 
	Therefore, we can define a submatrix $A' \in \bbZ_3^{(\ell-1) \times (\ell-1)}$ which arises from $A$ by removing $\bar{a}$ and a column $j$ so that up to reordering of rows, in $A'$, each column contains precisely one value that is different from the others, and this is on the diagonal. 
	This also means that each row of $A'$ is identical to $\bar{a}$ except at precisely one position. Hence, the sum over any row of $A'$ differs from $\sum \bar{a}$. In the removed column $j$, all entries except possibly one are equal to $\bar{a}_j$. So in each row of $A$ except potentially one, the row sum is not equal to $\sum \bar{a}$ because the rows differ at exactly one position. Hence the second assumption is violated, and this case cannot occur. Thus, there is at least one other row of $A$ that is equal to $\bar{a}$. 
\end{proof}

    \separatingVectorPairs*
    \begin{proof}
		Split $[r]$ into sets $K_1 \uplus\cdots \uplus K_{\ell-1} = [r]$ such that $|K_1|=\cdots=|K_p|=q+1$ and $|K_{p+1}|=\cdots=|K_{\ell-1}|=q$. 
		Define $\Uu$ as follows. For every $2$-element set of columns $\{j_1,j_2\}$ such that $\{j_1,j_2\} \subseteq K_i$ for some $i\in [\ell-1]$, include in $\Uu$ the respective vectors $\bar{u}_{11}, \bar{u}_{12}$ given by Lemma~\ref{lem:separatingVectorPair} for this choice of $j_1, j_2$.
		Note that $|\Uu| = 2 \bigl(p \binom{q+1}{2}+(\ell-1-p)\binom{q}{2}\bigr) = 2 \bigl( p\frac{(q+1)q}{2}+(\ell-1-p)\frac{q(q-1)}{2}\bigr)=(\ell-1)q(q-1)+2pq$.
		
		Assume first that $\widehat{n^\ell}(A)$ is not a row of $A$.
        By Lemma~\ref{lem:atLeastThreeDistinguishedColumns}, $A$ has $\ell$ columns that have a distinguished entry which is in a different row, respectively. So there is an $i \in [\ell-1]$ such that $K_i$ contains at least two of these columns, call them $j_1, j_2$. So by definition of $\Uu$, the vectors $\bar{u}_{11}, \bar{u}_{12}$ for $j_1, j_2$ are in $\Uu$, and by Lemma~\ref{lem:separatingVectorPair}, one of them satisfies that its product with $A$ contains three distinct entries. 
        
        Now assume that $\widehat{n^\ell}(A) = \bar{a}$ is a row of $A$. Let $\bar{u} \in \Uu$. Suppose  $\widehat{n^\ell}(A \cdot \bar{u})$ is defined, i.e.\ the vector $A \cdot \bar{u}$ is constant or has at most one entry that is different from the others. We want to show that the value $\bar{a} \cdot \bar{u}$ appears at least twice in the vector $A \cdot \bar{u}$ -- then it follows that $\widehat{n^\ell}(A \cdot \bar{u}) = \bar{a} \cdot \bar{u}$. Let $j_1, j_2$ be the two columns of $A$ where $\bar{u}$ is non-zero. If $\ell \geq 4$, then there are at least two distinct rows $i_1, i_2$ of $A$ such that $A_{i_1j_1} = A_{i_2j_1} = \bar{a}_{j_1}$, and $A_{i_1j_2} = A_{i_2j_2} = \bar{a}_{j_2}$ (because $\bar{a}_{j_1}$ and $\bar{a}_{j_2}$ are the respective majority entries in their columns, and all but at most one entry in the column must have this value). This means that $\bar{a} \cdot \bar{u}$ appears in row $i_1$ and in row $i_2$ of the vector $A \cdot \bar{u}$, which is at least twice, as desired. In case that $\ell = 3$, it can happen that there are no two distinct rows $i_1, i_2$ such that $A_{i_1j_1} = A_{i_2j_1} = \bar{a}_{j_1}$, and $A_{i_1j_2} = A_{i_2j_2} = \bar{a}_{j_2}$. But then w.l.o.g.\ the situation is this: $A_{1j_1} \neq \bar{a}_{j_1}$ is the minority value in column $j_1$, $A_{2j_2} \neq \bar{a}_{j_2}$ is the minority value in column $j_2$, and $A_{3j_1} = \bar{a}_{j_1}$, $A_{3j_2} = \bar{a}_{j_2}$ are the respective majority entries in the columns. Then by Lemma \ref{lem:separatingVectorPair}, one of the vectors $A \cdot \bar{u}_{11}$ and $A \cdot \bar{u}_{12}$ for the columns $j_1, j_2$ has three distinct entries. Since these $\bar{u}_{11}, \bar{u}_{12}$ are in $\Uu$, there exists a $\bar{u} \in \Uu$ such that $A \cdot \bar{u}$ has three distinct entries, and this is the other way in which the second statement of the lemma can be satisfied.
	\end{proof}

\existenceOfGridGraphs*
\begin{proof}
    In case that $d$ is even, a $(d/2)$-dimensional toroidal grid of suitable size satisfies these properties. If $d$ is odd, then we can take two copies of a $(d-1)/2$-dimensional toroidal grid and join every vertex with its copy to achieve degree $d$ at every vertex. 
\end{proof}

\paragraph*{Duplicator winning strategy}
We present the details that are needed to formally prove Lemma \ref{lem:duplicatorWins}. We first of all consider properties of bijections on the gadgets $\StructA(v,s)$. Fix a graph $G=(V,E)$ as the base graph.
Let $v\in V$ with $\bar{e}(v)=(e_1,\ldots,e_{\ell})$. We say that a bijection $f\colon A_v\to A_v$ is \emph{edge-preserving} if $\{f(e_i,0),f(e_i,1),f(e_i,2)\}=\{(e_i,0),(e_i,1),(e_i,2)\}$ for all $i\in [\ell]$, and $\{f(v,0),f(v,1),f(v,2)\}$\\ $=\{(v,0),(v,1),(v,2)\}$. Furthermore, an edge-preserving $f$ is \emph{cyclic} if for each $i\in [\ell]$, there is $c_i \in\bbZ_3$ such that for all $a\in\bbZ_3$,
	\begin{itemize}
	\item $f(e_i,a)=(e_i,a+c_i)$,
	\item $f(v,a)=(v,a+c_s+2\cdot c_t)$, where $\bar{u}_{s}=1$ and $\bar{u}_{t}=2$ are the nonzero components of $\bar{u}$. 
	\end{itemize}
	The \emph{shift value} of a cyclic bijection $f$ is $c(f)\coloneqq\sum_{i\in [\ell]}c_i$, and its \emph{shift on edge $e_i$} is $c_i$.
	We make the following observation:
	
	\begin{lemma}\label{lem:CompIsomAutom}
	Let $f\colon A_v\to A_v$ be a cyclic bijection. Then for all $s\in\bbZ_3$
	\begin{enumerate}[label=(\alph*)]
	\item $f$ is an automorphism of $\StructA(v,s)$ if and only if $c(f)=0$. 
	\item $f$ is an isomorphism $\StructA(v,s)\to\StructA(v,s+1)$ if and only if $c(f)=1$. 
	\item $f$ is an isomorphism $\StructA(v,s+1)\to\StructA(v,s)$ if and only if $c(f)=2$. 	
    \end{enumerate}
    \end{lemma}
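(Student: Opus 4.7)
The plan is to reduce all three statements of the lemma to a single linear congruence in $\bbZ_3$, by tracing how a cyclic bijection $f$ acts on a generic tuple of $R_j^{\StructA(v,s)}$. The observation driving everything is that the three structures $\StructA(v,s)$, $\StructA(v,s+1)$, $\StructA(v,s+2)$ share the same universe $A_v$ and the same relation symbols, and their relations differ only in the affine offset of the row-sum constraint ($\sum_i a_i=j-s$ versus $j-s'$). So the whole question collapses to: by how much does $f$ shift the row sum?

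Concretely, I would take an arbitrary tuple
\[
\bar{b}=\bigl((e_1,a_1),\ldots,(e_\ell,a_\ell),(v,\bar{a}\cdot\bar{u})\bigr)\in R_j^{\StructA(v,s)},
\]
for which $\sum_{i\in[\ell]} a_i=j-s$ by definition. Applying $f$ componentwise yields $f(e_i,a_i)=(e_i,a_i+c_i)$ on the edge parts, and $f(v,\bar{a}\cdot\bar{u})=(v,\bar{a}\cdot\bar{u}+c_s+2c_t)$ on the vertex part, where $s,t$ are the indices with $\bar{u}_s=1$ and $\bar{u}_t=2$. The key identity is $c_s+2c_t=\bar{c}\cdot\bar{u}$, so that $f(v,\bar{a}\cdot\bar{u})=(v,(\bar{a}+\bar{c})\cdot\bar{u})$; in other words $f(\bar{b})$ has exactly the same syntactic shape as a generic tuple of the relation, only with $\bar{a}$ replaced by $\bar{a}':=\bar{a}+\bar{c}$. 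The definition of ``cyclic'' was tailored precisely so that this equivariance between the edge-shifts and the vertex-shift holds automatically.

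With this in hand, $f(\bar{b})$ lies in $R_j^{\StructA(v,s')}$ iff $\sum_i a'_i=j-s'$, i.e.\ iff $(j-s)+c(f)=j-s'$, which is the single linear condition $c(f)\equiv s-s'\pmod{3}$, independent of $j$. Specialising this equivalence to $s'=s$, $s'=s+1$, $s'=s-1$ respectively yields the three cases (a), (b), (c) of the lemma. For the converse direction, once $c(f)$ takes the prescribed value, the same calculation shows that $f$ maps each $R_j^{\StructA(v,s)}$ into $R_j^{\StructA(v,s')}$; since $f$ is already a bijection of $A_v$ and the two relations have equal cardinality, $f$ restricts to a bijection on each, and is therefore a genuine isomorphism of the two structures.

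No real obstacle arises: the whole proof is one linear congruence mod $3$. The only point that deserves a brief comment is that the edge-preservation clause of the cyclic hypothesis is used implicitly from the outset---without it, $f$ would not even respect the partition of $A_v$ into the fibres $\{e_i\}\times\bbZ_3$ and $\{v\}\times\bbZ_3$ that underlies the very definition of the relations---so although the calculation is short, the full strength of ``cyclic'' is needed for it to make sense.
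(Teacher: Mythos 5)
The paper states this lemma as an unproved observation, so there is nothing in the paper to compare against; the task reduces to checking your argument on its own merits. Your overall approach is the right one: a cyclic bijection shifts every edge coordinate by $c_i$, the vertex coordinate by $\bar{c}\cdot\bar{u}=c_2+2c_3$, hence sends a tuple with edge-sum $S$ to a tuple with edge-sum $S+c(f)$ while keeping the vertex coordinate consistent with the new $\bar{a}'=\bar{a}+\bar{c}$; this is indeed the whole content of the lemma.

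However, the final substitution step is wrong, or rather, you assert agreement with the lemma that your own formula does not give. You correctly derive $c(f)\equiv s-s'$ (source minus target). Plugging in the lemma's cases: for (b), the source is $\StructA(v,s)$ and the target is $\StructA(v,s+1)$, so $c(f)=s-(s+1)=-1\equiv 2$, whereas the lemma says $c(f)=1$; for (c), source $\StructA(v,s+1)$, target $\StructA(v,s)$, so $c(f)=1$, whereas the lemma says $c(f)=2$. A concrete check confirms this: with $\ell=3$, take $c_1=1,c_2=c_3=0$, so $c(f)=1$; then the tuple $((e_1,0),(e_2,0),(e_3,0),(v,0))\in R_0^{\StructA(v,0)}$ maps to $((e_1,1),(e_2,0),(e_3,0),(v,0))$, whose edge-sum is $1=0-(-1)$, so $f$ is an isomorphism onto $\StructA(v,2)$, not $\StructA(v,1)$. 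So (b) and (c) of the stated lemma are swapped relative to Definition~\ref{def:Components} (equivalently, the correct reading is $f\colon\StructA(v,s)\to\StructA(v,s-c(f))$). This is a benign sign flip in the paper, propagated consistently into Lemma~\ref{lem:pathIso}, so nothing downstream breaks; but your proposal should have flagged the discrepancy rather than claiming that "$s'=s+1$" yields case (b) and "$s'=s-1$" yields case (c), which is exactly backwards from your own derived congruence.
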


In order to show that Duplicator can always keep the mistake in the bijection away from the currently pebbled elements, we need to prove a typical property of CFI-like constructions:
Given two vertices $v,w \in V$, and a path $P$ between them in $G$, we can always find a bijection between $\StructA(G)$ and $\tilde{\StructA}(G)$ that defines an isomorphism from $\StructA(v,s)$ to $\StructA(v,s+c)$, for any $c \in \bbZ_3$, and an isomorphism from $\StructA(w,t)$ to $\StructA(w,t+c)$ (or, actually, to $\StructA(w,t-c)$, if $P$ has even length), and an automorphism of every other vertex gadget $\StructA(v',s')$:
    \begin{lemma}
    \label{lem:pathIso}
    Let $G = (V,E)$ be a connected regular graph, and let $v,w \in V$ be distinct vertices, and $P$ a path from $v$ to $w$ in $G$.
    Let $c \in \bbZ_3$.
    There exists a bijection $g_{P,c} \colon A_G \to A_G$ such that:
    \begin{enumerate}
    \item $g_{P,c}$ is the identity everywhere except on $P$, that is, for each $v' \in V$ not on the path $P$, $g_{P,c}(v',a) = (v'a)$ and $g_{P,c}(e,a) = (e,a)$ for each $(v',a)$ and $(e,a)$ in $A_{v'}$.
    \item Let $s \in \bbZ_3$ be the value such that $\StructA(v,s)$ is the vertex gadget associated with $v$ in $\StructA(G)$. Then $\restrict{g_{P,c}}{A_{v}}$ is an isomorphism from $\StructA(v,s)$ to $\StructA(v,s+c)$.
    \item Let $t \in \bbZ_3$ be the value such that $\StructA(w,t)$ is the vertex gadget associated with $w$ in $\tilde{\StructA}(G)$. Then $\restrict{g_{P,c}}{A_{w}}$ is an isomorphism from $\StructA(w,t)$ to $\StructA(w,t+c)$ if the number of edges in $P$ is odd; else, it is an isomorphism from $\StructA(w,t)$ to $\StructA(w,t-c)$.
    \item For every interior vertex $v'$ on the path $P$, $\restrict{g_{P,c}}{A_{v'}}$ is an automorphism of $\StructA(v',x)$, where $x$ is the value associated with $\StructA(v',x)$ in $\StructA(G)$.
    \end{enumerate}
    \end{lemma}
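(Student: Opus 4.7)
The plan is to construct $g_{P,c}$ by assigning cyclic shifts to the edges of $P$ and taking the corresponding cyclic bijection on each traversed gadget. Write $P = v_0, v_1, \ldots, v_k$ with $v_0 = v$ and $v_k = w$, and let $e_i$ denote the edge $\{v_{i-1}, v_i\}$. Set $c_i := (-1)^{i-1} c \in \bbZ_3$ as the shift on the $i$-th path edge, and assign shift $0$ to every other edge of $G$. For each $v_j \in V(P)$, let $\restrict{g_{P,c}}{A_{v_j}}$ be the cyclic bijection of $A_{v_j}$ determined by these edge shifts, together with the induced shift $c_2^{(v_j)} + 2\,c_3^{(v_j)}$ on $C_{v_j}$, where $c_i^{(v_j)}$ denotes the shift assigned to the $i$-th edge of $v_j$ in the ordering $\leq^G$. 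On every vertex gadget disjoint from $P$, let $g_{P,c}$ be the identity.

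The construction is well-defined as a bijection of $A_G$ since $P$ is simple: every shared element $(e_i,a) \in A_{v_{i-1}} \cap A_{v_i}$ is shifted by the same value $c_i$ from both sides, and elements outside $V(P)$ are acted on only by identity components. Condition~(1) of the lemma is immediate from the construction. For conditions~(2)--(4), one computes the total shift $c(\restrict{g_{P,c}}{A_{v_j}}) = \sum_i c_i^{(v_j)}$ at each vertex of $P$ and invokes Lemma~\ref{lem:CompIsomAutom}. At $v = v_0$ only $e_1$ is an incident path edge, giving total shift $c_1 = c$; at $w = v_k$ only $e_k$ is, giving total shift $c_k = (-1)^{k-1} c$, which equals $c$ for odd $k$ and $-c$ for even $k$, matching the parity distinction in condition~(3). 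At each interior $v_j$ the two incident path edges contribute $c_j + c_{j+1} = (-1)^{j-1} c + (-1)^j c = 0$, so $\restrict{g_{P,c}}{A_{v_j}}$ is an automorphism by Lemma~\ref{lem:CompIsomAutom}(a).

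The one nontrivial verification is that these cyclic bijections, with their forced $C_{v_j}$-shifts, genuinely act as the claimed gadget (iso- or auto-)morphisms -- in particular at interior vertices whose 2nd or 3rd edge in the $\leq^G$ ordering happens to lie on $P$, where $c_2^{(v_j)} + 2\,c_3^{(v_j)} \neq 0$ even though the total edge shift vanishes. Unpacking Definition~\ref{def:Components}, every tuple in $R_j^{\StructA(v_j,x)}$ has last coordinate $\bar{a} \cdot \bar{u} = a_2 + 2 a_3$, so simultaneously shifting each $a_i$ by $c_i^{(v_j)}$ shifts this coordinate by exactly $c_2^{(v_j)} + 2\,c_3^{(v_j)}$, which is precisely the value prescribed for $g_{P,c}$ on $C_{v_j}$. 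Hence the image tuple lies in the correct relation in every case, and with this consistency check in place Lemma~\ref{lem:CompIsomAutom} delivers conditions~(2)--(4). The remainder is bookkeeping.
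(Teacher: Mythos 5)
Your proposal is correct and takes essentially the same approach as the paper: both assign alternating shifts $c, -c, c, \ldots$ to the edges along $P$ (you via the closed formula $c_i = (-1)^{i-1}c$, the paper via an inductive traversal), define the corresponding cyclic bijection on each traversed gadget, and invoke Lemma~\ref{lem:CompIsomAutom} to read off conditions (2)--(4) from the local shift sums. Your final paragraph, checking that the forced $C_{v_j}$-shift $c_2^{(v_j)} + 2c_3^{(v_j)}$ is compatible with the last-coordinate constraint in the relations, is a verification the paper outsources to Lemma~\ref{lem:CompIsomAutom} and does not spell out, so it is a welcome extra sanity check rather than a deviation.
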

    \begin{proof}
    We define $g_{P,c}$ individually on the $A_{v'}$, for each $v' \in V$. For each $v'$ that is not on the path $P$, we define $\restrict{g_{P,c}}{A_{v'}} \colon A_{v'} \to A_{v'}$ as the identity map. Thus, it satisfies the first claimed property.

    Now consider the first endpoint of $P$, which is $v$. 
    Let $e \in E(v)$ be the first edge in $P$. We define $\restrict{g_{P,c}}{A_{v}} \colon A_{v} \to A_{v}$ as the cyclic bijection with shift $0$ on each edge except $e$, where it has shift $c$. By Lemma \ref{lem:CompIsomAutom}, this satisfies the second claimed property.

    Let $v'$ be the other endpoint of $e$. If $v' \neq w$, then 
    we define $\restrict{g_{P,c}}{A_{v'}} \colon A_{v'} \to A_{v'}$ as a cyclic bijection with shift value $0$. Let $e' \in E(v')$ be the second edge on $P$. 
    To ensure that $\restrict{g_{P,c}}{A_{v'}}$ has shift value $0$, we define it such that its shift on $e'$ is $-c$, and it is the identity on all edges in $E(v') \setminus \{e,e'\}$.

    In this way, we continue the definition of $g_{P,c}$ along the path. It is easy to see that by Lemma \ref{lem:CompIsomAutom}, all claimed properties hold.
    \end{proof}

With this, we can prove the key lemma from which it follows immediately that Duplicator has a winning strategy in the game $\BP^k_{\ell-1}(\StructA(G),\tilde{\StructA}(G))$, where $G \coloneqq G_{\ell,n}$ is the graph that we constructed, consisting of identical copies of graphs $H_1, \dots, H_{n+1}$. For the definition of the invariant that Duplicator maintains, we refer back to the text before Lemma \ref{lem:invariantCanBeMaintained} in the main part of the paper.

\invariantCanBeMaintained*
\begin{proof}
Duplicator begins the round by playing the bijection $f \colon \StructA(G) \to \tilde{\StructA}(G)$ sending $\bar{\alpha}$ to $\bar{\beta}$, which exists because the invariant holds at the beginning of the round. Let $u$ be the vertex such that $f$ is good bar $u$. Now Spoiler modifies the position of up to $\ell-1$ pebbles in both structures, leading to a new pebble position $(\bar{\alpha}', \bar{\beta}')$. 
Duplicator has to find a new bijection $f'$ that satisfies the invariant again. This can be done using a pebble-free path $P$ from $u$ to another vertex $u'$ which is safe with respect to $(\bar{\alpha}', \bar{\beta}')$. Then essentially $f'$ can be obtained from $f$ by composing it with a bijection as given by Lemma \ref{lem:pathIso} for $P$. The details are as follows:
We make a case distinction according to whether a pebble of $\bar{\alpha}'$ lies on $C_u = \{u\} \times \bbZ_3$ or not.\\
\\
\textit{Case 1:}\\
Suppose that $C_u$ is pebble-free. 
Then the argument proceeds exactly like in \cite{Hella96}: 
There are two subcases. The first is that in $\bar{\alpha}'$, there are still less than $\ell-1$ pebbles lying on vertices in 
\[
W \coloneqq \Big(\bigcup_{v \in V(H_i)} A_v\Big) \setminus (C_u \cup \{ (e^{ij},0), (e^{ij},1), (e^{ij},2) \}),  
\]
where $e^{ij} \in E$ denotes the edge connecting $u \in V(H_i)$ with its neighbour in $V(H_j)$.
Because $H_i$ is $(\ell-1)$-connected by definition, there exists a pebble-free path\footnote{a path $P$ is pebble-free if neither on $\{v\} \times \bbZ_3$, for a vertex $v$ in $P$, nor on $\{e\} \times \bbZ_3$, for an edge $e$ in $P$, there is a pebble} from $u$ to any other vertex of $H_i$. Moreover, because $n$ is big enough, there is some $i' \in [n+1]$ such that $H_{i'}$ is pebble-free and there exists a \emph{safe} vertex $u' \in H_{i'}$. Hence, we find a pebble-free path $P$ from $u$ to this $u' \in V(H_{i'})$. 

The second subcase is that all the $\ell-1$ pebbles that Spoiler has moved in this round are now on elements of $W$. But then, $H_j$ and the connecting edge $e^{ij}$ are still pebble-free. Thus, we can again find a pebble-free path $P$ to some safe $u' \in H_{i'}$, namely via $e^{ij}$ through $H_j$ to $H_{i'}$. 

In each of the two subcases, we have obtained our desired path $P$ from $u$ to a safe $u'$. Let $s, t \in \bbZ_3$ be the values of the gadget of $u$ in $\StructA(G)$ and $\tilde{\StructA}(G)$, that is: $\StructA(u,s)$ is the gadget associated with $u$ in $\StructA(G)$, and $\StructA(u,s)$ the gadget associated with $u$ in $\tilde{\StructA}(G)$.
Let $c \in \bbZ_3$ be such that $\restrict{f}{A_u}$ is an isomorphism from $\StructA(u,s+c)$ to $\StructA(u,t)$ (in particular, $c \neq 0$ because $f$ is not good at $u$).
Let $g_{P,c} \colon \StructA(G) \to \tilde{\StructA}(G)$ be a bijection as given by Lemma \ref{lem:pathIso}.
Then set $f' \coloneqq g_{P,c} \circ f$. This finishes \textit{Case 1}.\\
\\
\textit{Case 2:}\\
In this case, a pebble of $\bar{\alpha}'$ lies on $C_u$. 
We can find a safe vertex $u'$ and a path $P$ from $u$ to $u'$ with exactly the same reasoning as in Case 1. If such a path $P$ exists that begins with an edge $e_d \in \bar{e}(u)$ such that $\bar{u}_d = 0$, then we can proceed exactly as before, and define $f' \coloneqq g_{P,c} \circ f$. This is because $g_{P,c}$ is the identity on $C_u$ in this case (see again the definition of \emph{cyclic} bijections for how they act on $C_u$). 

It remains to deal with the case that for every potential path from $u$ to a safe vertex, $\bar{u}_d \neq 0$ for the edge $e_d \in \bar{e}(u)$ that the path begins with. Note that $\bar{u}_d \neq 0$ only for $d \in \{2,3\}$.
In this case we argue as follows. 
One pebble has been moved to $C_u$ in this round, and another pebble must have been moved to $\{e^{ij}\} \times \bbZ_3$ or to its other endpoint in $H_j$: By the choice of the order $\leq$ on $G$, the edge $e^{ij}$ is in a position $d \notin \{2,3\}$ in the tuple $\bar{e}(u)$, so $\bar{u}_d = 0$. So if $e^{ij}$ were pebble-free, then contrary to the assumption of this subcase, we can find a path to a safe vertex (with the argument as in the second subcase of Case 1) that ``avoids'' the non-zero entries of the vector $\bar{u}$. So, at least two pebbles must have been spent on $C_u$ and $e^{ij}$, which leaves at most $\ell-3$ pebbles that can lie on vertices in $W$ now. 
The graph $H$ is $(\ell-1)$-connected, so there must be two distinct edges $e,e' \in E(u) \setminus \{e^{ij}\}$ via which every other vertex in $H_i$ can be reached from $u$ with a pebble-free path. 
These two edges must be $e_2, e_3$ in the tuple $\bar{e}(u)$ because else, we would again have a path that avoids the non-zero entries of $\bar{u}$. 

So we can choose two (not necessarily disjoint) pebble-free paths $P_1, P_2$ from $u$ to a safe vertex $u'$ (using the reasoning from Case 1, subcase 1), one beginning with $e_2$ and one beginning with $e_3$.

Now as before, let $c \in \bbZ_3$ be the constant such that $\restrict{f}{A_u}$ is an isomorphism from $\StructA(u,s+c)$ to $\StructA(u,t)$. Let $b_1, b_2 \in \bbZ_3$ be such that $b_1 + b_2 = c$ and $b_1 + 2b_2 = 0$. It is easy to verify that such $b_1, b_2$ exist for every $c$. 
Let $g_{P_1,b_1}, g_{P_2,b_2}$ be the respective bijections obtained with Lemma \ref{lem:pathIso}.
Set $f' \coloneqq g_{P_1, b_1} \circ g_{P_2, b_2} \circ f$.
It remains to check that $g_{P_1, b_1} \circ g_{P_2, b_2}$ is the identity map on $C_u$, which is required because at least one vertex in $C_u$ is now pebbled. This follows from the fact that $b_1 + 2b_2 = 0$, as can be seen from the definition of cyclic bijections.
\end{proof}

	\section{Limitations of partial-Maltsev-closed quantifiers}\label{app:Maltsev}
	Write $\Qq_{r}^{\Mm}$ for the class of all $r$-ary quantifiers that are closed under the partial Maltsev operation.

    We now present and analyze a construction of structures that are distinguishable with $r$-ary generalized quantifiers but not with $r$-ary Maltsev-closed ones. To be precise, our lower bound holds for the $k$-variable fragment of the logic, for $k = r$. For a logic $\Ll$, let $\Ll^k$ denote its $k$-variable fragment.

    \MaltsevSeparation*

    The structures we use in the proof are indistinguishable in $\Ll^k(\Qq_{k}^\Mm)$ but can possibly be distinguished in $\Ll(\Qq_{k}^\Mm)$, i.e.\ with more variables available. 

    \paragraph*{Constructing the instances}
 
    Throughout this section, all arithmetic operations are in the ring $\bbZ_4$, unless stated otherwise.  The structures we construct are over a vocabulary with two relations $R_0$ and $R_1$ of arity $k$, and one binary relation $\prec$.
    
As before, we start with a $k$-regular connected base graph $G=(V,E)$.
Again, we first define vertex gadgets $\StructA^\Mm(v,s)$, for $v \in V$, $s \in \{0,1\}$  and then put them together to obtain the whole structure.

\begin{definition}[Vertex gadgets, Maltsev counterxample] 
    \label{def:ComponentsMaltsev}
    Fix $k \geq 3$ and let $G = (V,E)$ be a connected $k$-regular graph with an order $\leq^G$ on its vertices.
	Let $\bar{e}(v) = (e_1, \dots , e_{k})$ be the tuple of edges incident to $v$, in the order induced by $\leq^G$. Then for each $s\in\{0,1\}$, we let $\StructA^\Mm(v,s)\coloneqq(A^\Mm_v,R_0^{\StructA^\Mm(v,s)},R_1^{\StructA^\Mm(v,s)})$, where
	\begin{itemize}
	\item $A^\Mm_v \coloneqq E(v)\times\bbZ_4$,
	\item $R_0^{\StructA^\Mm(v,s)}\coloneqq \{((e_1,a_1),\ldots,(e_{k},a_{k})) \mid (\sum_{i\in[k]}a_ i) - 2s \in \{0,1\} \pmod 4\}$
        \item $R_1^{\StructA^\Mm(v,s)}\coloneqq \{((e_1,a_1),\ldots,(e_{k},a_{k})) \mid (\sum_{i\in[k]}a_ i) - 2s \in \{2,3\} \pmod 4 \}$
	\end{itemize}
\end{definition}

We combine the vertex gadgets in the same way as before to obtain our indistinguishable instances:
\begin{definition}\label{def:CFIstructMaltsev}
	Let $G = (V,E,\leq^G)$ be an $\ell$-regular connected graph with a linear order on its vertices. 
	\begin{enumerate}[label=(\alph*)]
	\item For each $U\subseteq V$, we define\\ 
$\StructA^\Mm(G,U)= (A^\Mm_G,R_0^{\StructA^\Mm(G,U)},R_1^{\StructA^\Mm(G,U)},R_2^{\StructA^\Mm(G,U)}, \prec^{\StructA^\Mm(G,U)})$, where
	\begin{itemize}
	\item $A_G^\Mm \coloneqq \bigcup_{v\in V}A^\Mm_v$,
	\item $R_j^{\StructA^\Mm(G,U)}\coloneqq \bigcup_{v\in V\setminus U}R_j^{\StructA^\Mm(v,0)}\cup \bigcup_{v\in U}R_j^{\StructA^\Mm(v,1)}$ for each $j\in\{0,1\}$.
    \item $\prec^{\StructA^\Mm(G,U)}$ is a preorder with $(e,a) \prec (e',a')$ if and only if $e \leq e'$ in the lexicographic order on $E$ induced by $\leq^G$. 
	\end{itemize}
	
	\item Furthermore, we define $\StructA^\Mm(G)\coloneqq\StructA^\Mm(G,\emptyset)$ and 
	$\tilde{\StructA}^\Mm(G)\coloneqq\StructA^\Mm(G,\{\tilde{v}\})$, where $\tilde{v}\in V$ is the smallest vertex of $G$ with respect to the order $\leq^G$.
    \end{enumerate}
\end{definition}

We now aim to show two things.  The first is that $\StructA^\Mm(G)$ and $\StructA^\Mm(G,\{\tilde{v}\})$ are not isomorphic.  Since the vocabulary has relations of arity at most $k$, this immediately implies that there is a $k$-ary quantifier $Q$ that distinguishes them.  The second is that $\StructA^\Mm(G)$ and $\StructA^\Mm(G,\{\tilde{v}\})$ cannot be distinguished in a $k$-pebble bijective game for Maltsev-closed quantifiers.  This is the game defined in~\cite[Definition~16]{dawarHella2024}.  We reproduce the definition for completeness.  The following defines a game $\Mm_k(\StructA,\StructB,\alpha,\beta)$ played on structures $\StructA$ and $\StructB$ with $X$ a set of $k$ variables and partial assignments $\alpha$ and $\beta$ of values in $\StructA$ and $\StructB$ respectively to the variables in $X$.  Below, for a relation $P$, $m(P)$ refers to the closure of $P$ under the partial Maltsev operation.

\begin{definition}\label{modified-game} 
The rules of the game $\Mm_k(\StructA,\StructB,\alpha,\beta)$ 
are the following: 
\begin{enumerate}
\item If $\alpha\mapsto\beta$ is not a partial isomorphism, then the game ends, and Spoiler wins.

\item If (1) does not hold, there are two types of moves that Spoiler can choose to play: 
\begin{itemize}

\item {\bf Left $\Mm$-quantifier move:} Spoiler starts by choosing $r\in [k]$ 
and an $r$-tuple 
$\vec y\in X^r$ of distinct variables. 
Duplicator responds with 
a bijection $f\colon B\to A$. Spoiler answers by choosing an $r$-tuple 
$\vec b\in B^r$. Duplicator answers by choosing $P = \{\vec{a}_1,\vec{a}_2,\vec{a}_3\} \subseteq A^r$
such that $f(\vec b)\in m(P)$. 
Spoiler completes the round by choosing $\vec a\in P$.
The players continue by playing $\Mm_k(\alpha',\beta')$, 
where $\alpha':=\alpha[\vec a/\vec y]$ and $\beta':=\beta[\vec b/\vec y]$.

\item {\bf Right $\Mm$-quantifier move:} Spoiler starts by choosing $r\in [k]$ 
and an $r$-tuple 
$\vec y\in X^r$ of distinct variables. 
Duplicator chooses next 
a bijection $f\colon A\to B$. Spoiler answers by choosing an $r$-tuple 
$\vec a\in A^r$. Duplicator answers by choosing $P = \{\vec{b}_1,\vec{b}_2,\vec{b}_3\}\subseteq B^r$
such that $f(\vec a)\in m(P)$. 
Spoiler completes the round by choosing $\vec b\in P$.
The players continue by playing $\Mm_k(\alpha',\beta')$,
where $\alpha':=\alpha[\vec a/\vec y]$ and $\beta':=\beta[\vec b/\vec y]$.
\end{itemize}

\item Duplicator wins the game if Spoiler does not win it in a finite number of rounds.
\end{enumerate}
\end{definition}

By ~\cite[Theorem~17]{dawarHella2024}, Duplicator has a winning strategy in $\Mm_k(\StructA,\StructB,\alpha,\beta)$ if, and only if, there is no formula of $\Ll^k(Q^{\Mm}_k)$ that distinguishes $(\StructA,\alpha)$ from $(\StructB,\beta)$.

But first, the Duplicator winning strategy that we are going to define in the game involves playing bijections that do not necessarily permute edge vertices $\{e\} \times \bbZ_4$ cylically, as before. Now, \emph{reflections} also play a role. We develop the necessary background.

\paragraph*{Properties of bijections over $\bbZ_4$}

\begin{lemma}
\label{lem:permutationsOnZ4}
Let $\pi \colon \bbZ_4 \to \bbZ_4$ be any permutation of $\bbZ_4$. Then one of the following two cases applies:
\begin{enumerate}
\item There exists $x \in \bbZ_4$ such that $\pi(x+1) = \pi(x)+1$.
\item There exists $a \in \bbZ_4$ such that for every $x \in \bbZ_4$, $\pi(x) = a-x$.
\end{enumerate}
\end{lemma}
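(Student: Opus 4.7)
The plan is to reason about consecutive differences of $\pi$ around the cyclic group $\bbZ_4$ and to argue by a counting/parity check on their sum.

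First, I would define, for each $x \in \bbZ_4$, the difference $d_x := \pi(x+1) - \pi(x) \in \bbZ_4$. Two immediate observations: (i) the $d_x$ telescope around the $4$-cycle, so $d_0 + d_1 + d_2 + d_3 \equiv 0 \pmod 4$, and (ii) each $d_x \neq 0$, because $\pi$ is a bijection and hence $\pi(x+1) \neq \pi(x)$. Now assume that case~1 of the lemma fails, so no $d_x$ equals $1$; combined with (ii), this forces $d_x \in \{2,3\}$ for every $x$.

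Next I would carry out the counting step. Let $k$ be the number of indices $x$ with $d_x = 3$; the remaining $4-k$ have $d_x = 2$. Then
\[
d_0 + d_1 + d_2 + d_3 \;=\; 3k + 2(4-k) \;=\; 8 + k \;\equiv\; k \pmod 4,
\]
so by (i), $k \equiv 0 \pmod 4$, i.e.\ either $k = 0$ or $k = 4$. The case $k = 0$ (all $d_x = 2$) is ruled out by injectivity of $\pi$: it would give $\pi(2) = \pi(0) + 4 = \pi(0)$. Hence $k = 4$, meaning $d_x = 3 = -1$ for every $x$.

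From $\pi(x+1) = \pi(x) - 1$ for all $x \in \bbZ_4$, an easy induction yields $\pi(x) = \pi(0) - x$ in $\bbZ_4$. Taking $a := \pi(0)$ gives case~2, as required. There is no real obstacle here; the only subtlety is remembering that $d_x \neq 0$ (so the residues $2$ and $3$ really are the only options once case~1 is excluded), and ruling out the all-$2$'s configuration via injectivity rather than via the modular sum (which does not detect it).
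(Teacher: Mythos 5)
Your proof is correct. You introduce the cyclic difference sequence $d_x := \pi(x+1) - \pi(x)$, note that the $d_x$ sum to $0$ around the cycle and are all nonzero, and then, under the hypothesis that no $d_x$ equals $1$, run a counting argument modulo $4$ to force either all $d_x = 2$ (excluded by injectivity) or all $d_x = 3$, which gives the reflection form. The paper instead argues by direct enumeration: fixing $a := \pi(0)$, it tries $\pi(1) \in \{a+2, a+3\}$ in turn and shows that the forced choices of $\pi(2), \pi(3)$ either contradict the non-existence of an ascent by $1$ or land exactly on $\pi(x) = a - x$. Your route is a genuinely different decomposition — an invariant/counting argument on differences rather than a value-by-value case split. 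It buys a cleaner, more systematic verification (one sum computation plus one injectivity check), at the modest cost of the observation that $d_x \in \{2,3\}$ once $0$ and $1$ are excluded, which is of course specific to modulus $4$. The paper's enumeration is more ad hoc but equally short at this size. Both are fine; no gaps in yours.
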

\begin{proof}
We show that if $\pi$ does not satisfy (1), then it must satisfy (2). Let $a \coloneqq \pi(0)$. Because (1) does not apply, we have $\pi(1) \in \{a+2, a+3\}$. Suppose $\pi(1) =a+2$. Then $\pi(2) = a+1$ and $\pi(3)=a+3$. But then $\pi(0) = \pi(3+1) = (a+3)+1 = \pi(3)+1$. So condition (1) is satisfied for $x = 3$. Therefore, we must have $\pi(1) = a+3$. Then $\pi(2) \in \{a+1, a+2\}$, but we cannot have $\pi(2) = a+1$ because then, $\pi(3) = a+2$, so we would be in case (1) again, for $x = 2$. Therefore, $\pi(2) = a+2$ and $\pi(3)= a+1$. That means, $\pi$ is exactly as described in case (2).
\end{proof}

We call permutations of the second kind \emph{reflections}.
A permutation $\pi \colon \bbZ_4 \to \bbZ_4$ for which there is a constant $a \in \bbZ_4$ such that for each $x \in \bbZ_4,$ $\pi(x) = x+a$, is called a \emph{rotation} (or \emph{cyclic}). 
For an edge $e \in E$, a bijection $f \colon \{e\} \times \bbZ_4 \to \{e\} \times \bbZ_4$ is said to be a rotation or a reflection if its induced permutation on $\bbZ_4$ has this property.

In what follows, for $v \in V, j \in \{0,1\}$, we use $R_j(v)$ as shorthand to denote the relation $R_j^{\StructA^\Mm(v,0)}$.

\begin{lemma}
\label{lem:localIsomorphisms}
Let $v \in V(G)$ and let $f$ be a bijection defined from the set $\bigcup_{e \in E(v)} \{e\} \times \bbZ_4$ to itself, such that $f(\{e\} \times \bbZ_4) = \{e\} \times \bbZ_4$ for each $e \in E(v)$.
Suppose $f$ is either a rotation on every $e \in E(v)$ or a reflection on every $e \in E(v)$. For each $e \in E(v)$, let $c_e \in \bbZ_4$ denote the \emph{shift value} of $f$ on $e$, i.e\ such that $f(e,x) = (e,x+c_e)$ or $f(e,x) = (e,-x+c_e)$, respectively, for each $x \in \bbZ_4$. Let $c \coloneqq \sum_{e \in E(v)} c_e$.
\begin{enumerate}
\item If $f$ is a rotation on every $e \in E(v)$, then $f(R_j(v)) = R_j(v)$ for each $j \in \{0,1\}$ if and only if $c = 0$. Moreover, $f(R_j(v)) = R_{1-j}(v)$ for each $j \in \{0,1\}$ if and only if $c = 2$.  
\item If $f$ is a reflection on every $e \in E(v)$, then $f(R_j(v)) = R_j(v)$ for each $j \in \{0,1\}$ if and only if $c = 1$. Moreover, $f(R_j(v)) = R_{1-j}(v)$ for each $j \in \{0,1\}$ if and only if $c = 3$.  
\end{enumerate}
\end{lemma}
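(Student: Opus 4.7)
The plan is to reduce both the rotation and reflection cases to simple arithmetic in $\bbZ_4$ governing how the sum of second coordinates of a tuple in $R_j(v)$ transforms under $f$. The observation doing the work is that $f$ acts coordinate-wise and sends $\{e\} \times \bbZ_4$ to itself for each $e \in E(v)$, so membership of the image tuple in $R_0(v)$ or $R_1(v)$ depends only on the new second-coordinate sum modulo $4$.

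First, unfolding Definition \ref{def:ComponentsMaltsev}: if $f$ is a rotation on every edge with shift $c_e$, then a tuple $((e_1,a_1),\ldots,(e_k,a_k))$ is mapped to $((e_1, a_1 + c_{e_1}),\ldots,(e_k, a_k + c_{e_k}))$, whose sum of second coordinates is $(\sum_i a_i) + c$ with $c = \sum_{e \in E(v)} c_e$. If instead $f$ is a reflection on every edge with parameter $c_e$, then the $i$-th coordinate becomes $(e_i, c_{e_i} - a_i)$ by Lemma~\ref{lem:permutationsOnZ4}, so the new sum is $c - \sum_i a_i$. These two identities reduce the lemma to comparing sum-sets in $\bbZ_4$.

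Second, I would read off the sum-sets defining $f(R_0(v))$ and $f(R_1(v))$ in each case and match them against $\{0,1\}$ and $\{2,3\}$. In the rotation case this yields $\{c, c+1\}$ and $\{c+2, c+3\}$, so both relations are preserved iff $c = 0$ and both are swapped iff $c = 2$. In the reflection case the sum-sets are $\{c, c-1\} = \{c, c+3\}$ and $\{c+1, c+2\}$, giving preservation iff $c = 1$ and swap iff $c = 3$. The verification in the opposite direction (that these shift values indeed realise the claimed equalities and not just inclusions) is immediate because $f$ is a bijection on each $\{e\} \times \bbZ_4$ and hence a bijection on the whole tuple space.

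The computation is entirely routine, with no real obstacle. The only point of mild care is that the pair $\{0,1\}$ is not closed under negation in $\bbZ_4$, which is precisely what forces the reflection case to produce the odd shifts $c \in \{1,3\}$ rather than the even shifts arising for rotations; one has to keep track of this asymmetry when matching $\{c, c+3\}$ against $\{0,1\}$ or $\{2,3\}$.
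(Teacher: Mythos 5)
Your proof is correct and follows essentially the same approach as the paper's: track how the sum of second coordinates transforms under $f$ (shift by $c$ for rotations, negate-and-shift for reflections) and compare the resulting sum-sets in $\bbZ_4$ against $\{0,1\}$ and $\{2,3\}$. You are a bit more explicit than the paper, which only carries out the rotation case and dismisses the reflection case as "a similar analysis", but the argument is the same.
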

\begin{proof}
Suppose first that $f$ is a rotation on every $e \in E(v)$. 
If $c \in \{1,3\}$, then we see that $f$ neither preserves, nor exchanges $R_0(v)$ and $R_1(v)$: If, for example, $c = 1$, then for $(e_1,\dots,e_k) = E(v)$, $f$ maps the tuples $((e_1,a_1),\dots,(e_k,a_k))$ with $\sum_{i \in [k]} a_i = 0$ to tuples whose sum is $1$ (keeping them in $R_0(v)$), but it also maps the tuples whose sum is $1$ to tuples with sum $2$ (moving them from $R_0(v)$ to $R_1(v)$). The situation is similar if $c=3$.
Hence, if $f$ maps $R_j(v)$ to $R_j(v)$ or to $R_{1-j}(v)$, then we must have 
$c \in \{0,2\}$. Clearly, if $c = 0$, then $f$ preserves $R_j(v)$ for each $j \in \{0,1\}$. It is also not hard to see that adding $2$ to the sum of any tuple moves it from $R_0(v)$ to $R_1(v)$ or vice versa. This shows the first part of the lemma.
The second part is easily seen with a similar analysis.
\end{proof}

\begin{lemma}
\label{lem:notLocalIsomorphisms}
Let $v \in V(G)$ and let $f$ be as in Lemma \ref{lem:localIsomorphisms}.
If there is an $e \in E(v)$ on which $f$ is not a rotation, and an $e' \in E(v)$ on which $f$ is not a reflection, then $f$ does not map $R_j(v)$ to itself, for both $j \in \{0,1\}$.
\end{lemma}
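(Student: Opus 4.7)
The plan is to show that under the hypothesis, $f(R_0(v)) \neq R_0(v)$; this implies $f(R_1(v)) \neq R_1(v)$ as well, because $f$ is a bijection on the total set of legal $k$-tuples $R_0(v) \cup R_1(v)$. Since $f$ preserves each $\{e\} \times \bbZ_4$ setwise, it induces a permutation $\pi_e$ of $\bbZ_4$ for every $e \in E(v)$. By Lemma \ref{lem:permutationsOnZ4}, each $\pi_e$ is a rotation, a reflection, or neither. I would split into two cases covering the hypothesis: Case~I, where some edge $e^*$ has $\pi_{e^*}$ that is neither a rotation nor a reflection; and Case~II, where every $\pi_e$ is a rotation or a reflection but the set $T$ of rotation-edges and the set $S$ of reflection-edges are both nonempty. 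If the hypothesis holds and Case~I fails, then ``not a rotation'' forces a reflection and ``not a reflection'' forces a rotation, placing us in Case~II.

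For Case~II, I would write $\pi_{e_i}(x) = x + c_i$ for $i \in T$ and $\pi_{e_i}(x) = d_i - x$ for $i \in S$. Then for any tuple $\tau = ((e_1, a_1), \ldots, (e_k, a_k))$, the coordinate-sum of $f(\tau)$ equals $C + \sum_{i \in T} a_i - \sum_{i \in S} a_i \pmod 4$, where $C \coloneqq \sum_{i \in T} c_i + \sum_{i \in S} d_i$ is a constant independent of $\tau$. Setting $u = \sum_{i \in T} a_i$ and $w = \sum_{i \in S} a_i$, both range independently over $\bbZ_4$ because $|T|, |S| \geq 1$. Preservation of $R_0(v)$ would demand that $u+w \in \{0,1\}$ if and only if $C + u - w \in \{0,1\}$, for all $u, w \in \bbZ_4$. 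A short case split on $C$ produces a failure (for example, $(u,w)=(0,0)$ fails when $C \in \{2,3\}$, $(u,w)=(1,0)$ fails when $C = 1$, and $(u,w)=(1,3)$ fails when $C = 0$).

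For Case~I, the key algebraic observation is that the rotations and reflections of $\bbZ_4$ together form the dihedral group $D_4$ of order $8$, which is precisely the automorphism group of the 4-cycle graph on vertex set $\bbZ_4$ whose edges are the consecutive pairs $\{a, a+1\}$. Therefore $\pi_{e^*} \notin D_4$ must map some consecutive pair $\{a, a+1\}$ to a non-consecutive pair $\{b_1, b_2\}$ with $b_2 - b_1 \equiv \pm 2 \pmod 4$. I would then choose values $a_i \in \bbZ_4$ for $i \neq e^*$ with $\sum_{i \neq e^*} a_i \equiv -a \pmod 4$ (possible since $k \geq 2$) and compare the two tuples obtained by taking $a_{e^*} \in \{a, a+1\}$. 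Both have original coordinate-sum in $\{0,1\}$, so both lie in $R_0(v)$; but their images under $f$ have sums differing by $\pm 2 \pmod 4$ and thus cannot both lie in $\{0,1\}$, so at least one of them leaves $R_0(v)$.

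The main obstacle I anticipate is the Case~I characterization: pinning down a usable combinatorial property enjoyed by permutations of $\bbZ_4$ that are neither rotations nor reflections. The identification of the union of these two classes with $\mathrm{Aut}(C_4) = D_4$ is the decisive step, because it immediately yields a consecutive pair sent to a non-consecutive pair, and from there the proof in both cases reduces to direct arithmetic in $\bbZ_4$.
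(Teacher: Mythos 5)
Your proof is correct, and it takes a genuinely different route from the paper's. The paper fixes one edge $e$ on which $f$ is not a rotation (extracting two shift values $a \neq b$ at positions $s < t$) and one edge $e'$ on which $f$ is not a reflection (using Lemma~\ref{lem:permutationsOnZ4}(1) to find an $a_2$ at which the local shift is additive), builds the tuple $\bar\alpha$ and its perturbations $\bar\alpha^s, \bar\alpha^t, \bar\alpha_{+1}$, and derives a contradiction through a somewhat lengthy case analysis on the sum $\delta$. You instead split cleanly into two cases. Your Case~II (every $\pi_e$ is a rotation or a reflection, with both types present) reduces the claim to the incompatibility of $u+w \in \{0,1\}$ with $C+u-w \in \{0,1\}$ as $u,w$ range freely over $\bbZ_4$, which is a short and transparent computation. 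Your Case~I (some $\pi_{e^*}$ is neither) hinges on the observation that the rotations and reflections of $\bbZ_4$ together constitute $\mathrm{Aut}(C_4) \cong D_4$, so a permutation outside that set necessarily sends a consecutive pair to a pair at distance $2$; building two tuples that differ only in that coordinate then gives the contradiction immediately. This identification is the conceptual gain of your approach — it replaces the paper's explicit bookkeeping with a single structural fact about $\bbZ_4$, and both your cases end in one-line arithmetic. The paper's proof is more uniform (one construction covers all configurations) but requires tracking several interlocking subcases; yours is slightly longer to set up but each branch is essentially mechanical once the $D_4$ observation is in place. Both proofs correctly reduce the ``for both $j$'' conclusion to the $j=0$ case via the fact that $f$ bijects $R_0(v) \cup R_1(v)$ onto itself.
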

\begin{proof}
Assume for a contradiction that $f$ maps each $R_j(v)$ to itself.
Since $f$ is not a rotation on $e$, there exist $a,b \in \bbZ_4$, $a \neq b$, and $s,t \in \bbZ_4$, $s < t$, such that $f(e,s) = (e,s+a)$ and $f(e,t) = (e,t+b)$. 
W.l.o.g.\ suppose that $0 \notin \{s,t\}$ (otherwise we can modify the first condition below so that $a_1 \notin \{s,t\}$, and reason analogously). Pick a tuple $\bar{\alpha} \coloneqq ((e,a_1),(e',a_2),...,(e_k,a_k)) \in R_0(v)$ with the following properties:
\begin{enumerate}
\item $a_1 = 0$.
\item $\sum \bar{\alpha} = \sum_{i=1}^k a_i = 0$.
\item $a_2$ is such that $f(e',a_2+1) = (e', f(a_2)+1)$.
\end{enumerate}
Such a tuple can always be chosen because $k \geq 3$, and the third property can be achieved because $f$ induces a permutation on $e'$ that is not a reflection, so it must satisfy case (1) of Lemma \ref{lem:permutationsOnZ4}.

Write $\bar{\alpha}^s$ and $\bar{\alpha}^t$ for the tuples obtained from $\bar{\alpha}$ by exchanging $(e,0)$ in the first entry of $\bar{\alpha}$ with $(e,s)$ or $(e,t)$, respectively. 
Then $\sum \bar{\alpha}^s = s$, and $\sum \bar{\alpha}^t = t$.
Define $\bar{\alpha}_{+1}$ as the tuple obtained from $\bar{\alpha}$ by exchanging the second entry $(e', a_2)$ with $(e', a_2+1)$. Thus, $\sum \bar{\alpha}_{+1} = 1$. Also, let $\bar{\alpha}^s_{+1}$ and $\bar{\alpha}^t_{+1}$ be the tuples obtained from $\bar{\alpha}_{+1}$ by exchanging the first entry with $(e,s)$ or $(e,t)$, respectively. So $\sum \bar{\alpha}^s_{+1} = s+1$, $\sum \bar{\alpha}^t_{+1} = t+1$.

Consider now the tuples $f(\bar{\alpha})$ and $f(\bar{\alpha}_{+1})$. Because $\bar{\alpha}, \bar{\alpha}_{+1} \in R_0(v)$ and we are assuming that $f$ preserves $R_0$, we have $f(\bar{\alpha}), f(\bar{\alpha}_{+1}) \in R_0(v)$. Moreover, by the third property of $\bar{\alpha}$, we know that $\sum f(\bar{\alpha}_{+1}) = \sum f(\bar{\alpha})+1$. Therefore, we must have $\sum f(\bar{\alpha}) = 0$ and $\sum f(\bar{\alpha}_{+1}) = 1$.  
The entry sums of $\bar{\alpha}, \bar{\alpha}^s, \bar{\alpha}^t$ are pairwise distinct (as we are assuming $0 \notin \{s,t\}$). That means that two of them are in the same relation $R_j(v)$ and one of them is in $R_{1-j}(v)$. 
Suppose first that $j=1$. Then, because $s<t$, we must have $s = 2$ and $t=3$. 
Then $\sum \bar{\alpha}^s_{+1} = 1+s = 3, \sum \bar{\alpha}^t_{+1} = 1+t = 0$. 
Now $\delta \coloneqq \sum f(\bar{\alpha}^t) - \sum f(\bar{\alpha}^s) = t-s+b-a = 1+b-a$, and we have $b-a \neq 0$. So $\delta \neq 1$. Because $f$ is assumed to preserve the relations, and $\bar{\alpha}^s, \bar{\alpha}^t \in R_1(v)$, also $f(\bar{\alpha}^s), f(\bar{\alpha}^t) \in R_1(v)$. Since $\delta \neq 1$, we can only have $\delta = 0$ or $\delta=3$.
The case $\delta = 0$ is ruled out as follows: Consider $\bar{\alpha}^s_{+1}$ and $\bar{\alpha}^t_{+1}$. 
Also for these tuples, we have $\sum f(\bar{\alpha}^t_{+1}) - \sum f(\bar{\alpha}^s_{+1}) = \delta$. So if $\delta = 0$, then $f(\bar{\alpha}^s_{+1})$ and $f(\bar{\alpha}^t_{+1})$ are in the same relation $R_j(v)$. But this cannot be the case because as already established, $\sum \bar{\alpha}^s_{+1} = 3$ and $\sum \bar{\alpha}^t_{+1} = 0$, so the preimages under $f$ are not in the same relation $R_j(v)$.
Therefore, we must have $\delta = 3$. This means that $\sum f(\bar{\alpha}^s) = 3$ and $\sum f(\bar{\alpha}^t) = 2$ as we have $\sum \bar{\alpha}^s = 2$ and $\sum \bar{\alpha}^t = 3$, and $f$ preserves $R_1(v)$. 
From this, we can conclude $\sum f(\bar{\alpha}^s_{+1}) = 0$ and $\sum f(\bar{\alpha}^t_{+1}) = 3$ using again the third property of $\bar{\alpha}$. But as already mentioned, we have $\sum \bar{\alpha}^s_{+1} = 3$, so $f$ maps $\bar{\alpha}^s_{+1}$ incorrectly. This finishes the case that $R_1(v)$ is the relation which contains two of the tuples $\bar{\alpha}, \bar{\alpha}^s, \bar{\alpha}^t$. If $R_0(v)$ contains two of these tuples, then $\sum \bar{\alpha}^s = 1$. In that case, a similar argument leads to a contradiction.
\end{proof}

\paragraph*{Properties of the constructed instances over $\bbZ_4$}
First, we need to prove that, as desired, $\StructA^\Mm (G)\not\cong \tilde{\StructA}^\Mm(G)$.  The following fact is standard (see~\cite[Corollary~2.13]{Diestel}

\begin{lemma}
\label{lem:existenceOfPerfectMatching}
Let $k\geq 3$. Every bipartite $k$-regular graph contains a perfect matching.
\end{lemma}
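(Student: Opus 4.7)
The plan is to deduce this from Hall's marriage theorem, which is the natural tool for matchings in bipartite graphs. Let $G = (L \cup R, E)$ be a bipartite $k$-regular graph with parts $L$ and $R$. The first step is to observe that $|L| = |R|$: counting edges from each side gives $|E| = k|L| = k|R|$, and since $k \geq 1$, this forces $|L| = |R|$. So a perfect matching is the same as a matching saturating $L$.

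Next, I would verify Hall's condition: for every $S \subseteq L$, the neighbourhood $N(S) \subseteq R$ satisfies $|N(S)| \geq |S|$. To see this, count edges with one endpoint in $S$ in two ways. Since $G$ is $k$-regular, the number of such edges is exactly $k|S|$. On the other hand, every such edge has its other endpoint in $N(S)$, and each vertex in $N(S)$ is incident to at most $k$ edges (by $k$-regularity). Therefore $k|S| \leq k \cdot |N(S)|$, which gives $|S| \leq |N(S)|$ as desired.

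Having verified Hall's condition, I would invoke Hall's marriage theorem to obtain a matching $M$ that saturates every vertex of $L$. Since $|L| = |R|$, every vertex of $R$ must also be saturated by $M$, so $M$ is a perfect matching of $G$.

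The argument is entirely routine and in fact works for any $k \geq 1$ (the hypothesis $k \geq 3$ in the statement is just what is needed for the use of the lemma elsewhere). There is no real obstacle here; the only thing to be careful about is the double-counting step that produces Hall's condition from regularity, but this is a standard one-line computation.
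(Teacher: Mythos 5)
Your proof is correct. The paper does not actually prove this lemma; it simply cites it as a standard fact (Corollary~2.13 in Diestel's textbook), and the Hall's-theorem argument you give — equality of the parts by edge counting, then the double-counting verification of Hall's condition from $k$-regularity — is precisely the standard proof of that fact, valid for all $k\ge 1$ as you note.
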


\begin{lemma}
\label{lem:TseitinSolutions}
Let $k \geq 3$ and let $G = (V,E)$ be a bipartite connected $\ell$-regular graph.
Let $\Ee$ be a system of linear equations over $\bbZ_4$ with variable set $X = \{x_e \mid e \in E(G)\}$ and one equation of the form $\sum_{e \in E(v)} x_e = c_v$ for every vertex $v \in V$, where $c_v \in \bbZ_4$ is a constant.
If there is a constant $a \in \bbZ_4$ such that $c_v = a$ for all but one vertex $\tilde{v} \in V$, where $c_{\tilde{v}} \neq a$, then $\Ee$ has no $\bbZ_4$-solution.
\end{lemma}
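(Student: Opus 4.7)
The plan is to exploit the bipartite structure of $G$ via a parity-style summing argument, which over $\bbZ_4$ plays the same role as in classical Tseitin constructions over $\bbZ_2$. Let $V = S \uplus T$ be the two parts of the bipartition of $G$. Since $G$ is bipartite and $k$-regular, a double-counting of edges ($k|S| = |E| = k|T|$) gives $|S| = |T|$, a fact I will need at the end. The perfect matching lemma stated just before is not needed for this specific unsatisfiability argument; it presumably supports a different fact (such as the construction of witnessing solutions when the charges are balanced).

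Assume towards a contradiction that $\lambda \colon X \to \bbZ_4$ is a solution of $\Ee$. The key observation is that because $G$ is bipartite, every edge $e \in E$ has exactly one endpoint in $S$ and exactly one endpoint in $T$. Therefore, summing the equations associated to vertices in $S$ each variable $x_e$ appears exactly once, and likewise for $T$:
\begin{equation*}
\sum_{v \in S} c_v \;=\; \sum_{v \in S} \sum_{e \in E(v)} \lambda(x_e) \;=\; \sum_{e \in E} \lambda(x_e) \;=\; \sum_{v \in T} \sum_{e \in E(v)} \lambda(x_e) \;=\; \sum_{v \in T} c_v,
\end{equation*}
with all arithmetic in $\bbZ_4$.

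Without loss of generality assume $\tilde v \in S$ (the case $\tilde v \in T$ is symmetric). Using the hypothesis that $c_v = a$ for every $v \ne \tilde v$, we have $\sum_{v \in S} c_v = (|S|-1)\,a + c_{\tilde v}$ and $\sum_{v \in T} c_v = |T|\,a$. Combining with the equality derived above and $|S| = |T|$ gives $c_{\tilde v} = a$ in $\bbZ_4$, contradicting the assumption $c_{\tilde v} \ne a$. Thus no solution exists.

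The step I expect to need the most care about is the mod-$4$ bookkeeping: all the cancellations and the conclusion $c_{\tilde v} = a$ must be understood in $\bbZ_4$, and one should verify that the argument does not accidentally rely on $\bbZ_2$ features (it does not, since the contradiction only needs $c_{\tilde v} - a \not\equiv 0 \pmod 4$, which is exactly the hypothesis $c_{\tilde v} \ne a$ in $\bbZ_4$). The rest is routine and does not require any structural property of $G$ beyond bipartiteness plus $k$-regularity.
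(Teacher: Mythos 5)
Your proof is correct and follows essentially the same argument as the paper: sum the equations over each side of the bipartition, use bipartiteness to see both sums equal $\sum_{e \in E}\lambda(x_e)$, and use $k$-regularity to get $|S|=|T|$, yielding $c_{\tilde v}=a$ in $\bbZ_4$, a contradiction. Your side remark that the perfect-matching lemma is not needed here is also accurate; it is used only for the construction in the subsequent Lemma~\ref{lem:tseitinAlmostSolution}.
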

\begin{proof}
Let $S \uplus T = V$ be the bipartition.
Because $G$ is $k$-regular, $|T| = |E|/k = |S|$. 
Suppose for a contradiction that $\lambda \colon X \to \bbZ_4$ is a solution for $\Ee$. 
We assume w.l.o.g.\ that $\tilde{v} \in S$. Then we have $\sum_{v\in S} \sum_{e \in E(v)}\lambda(x_e) = (|S|-1) \cdot a + c_{\tilde{v}}$ and $\sum_{v\in T} \sum_{e \in E(v)}\lambda(x_e) = |T| \cdot a$. Since $|S| = |T|$ and $c_{\tilde{v}} \neq a$, these values are not equal. 
However, this is a contradiction since clearly $\sum_{v \in S} \sum_{e \in E(v)}\lambda(x_e)=\sum_{e\in E(G_n)}\lambda(x_e)=\sum_{v\in T} \sum_{e \in E(v)} \lambda(x_e)$.
\end{proof}

\begin{lemma}
\label{lem:tseitinAlmostSolution}
Let $k \geq 3$ and let $G$ be a $G$ be a bipartite connected $k$-regular graph.
Let $\Ee$ be a system of linear equations with variables $X$ over $\bbZ_4$ as in Lemma \ref{lem:TseitinSolutions}. Assume again that there exists an $a \in \bbZ_4$ and a $\tilde{v} \in V(G)$ so that $c_{\tilde{v}} \neq a$ and $c_v = a$ for every $v \in V(G) \setminus \{\tilde{v}\}$. 
Fix an arbitrary $v' \in V(G)$.
There is a $\lambda \colon X \to \bbZ_4$ that satisfies every equation in $\Ee$ except potentially the equation for $v'$, and $\sum_{e \in E(v')} \lambda(x_e) - c_{v'} \in \{ a-c_{\tilde{v}}, c_{\tilde{v}}-a\}$.
\end{lemma}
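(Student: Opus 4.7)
The plan is to build $\lambda$ in two stages: first produce a ``base'' assignment that satisfies every equation of $\Ee$ except the one at $\tilde v$, and then perturb it along a path from $\tilde v$ to $v'$ to transport the defect over to $v'$.

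For the base assignment, apply Lemma~\ref{lem:existenceOfPerfectMatching} to obtain a perfect matching $M \subseteq E(G)$, and define $\lambda_0 \colon X \to \bbZ_4$ by $\lambda_0(x_e) = a$ if $e \in M$ and $\lambda_0(x_e) = 0$ otherwise. Since every vertex is incident to exactly one matching edge, $\sum_{e \in E(v)} \lambda_0(x_e) = a$ for every $v \in V$. Thus $\lambda_0$ satisfies the equation at each $v \neq \tilde v$ (where $c_v = a$), while at $\tilde v$ the defect is $a - c_{\tilde v}$. If the given $v'$ equals $\tilde v$, take $\lambda = \lambda_0$; its defect $a-c_{\tilde v}$ lies in $\{a-c_{\tilde v}, c_{\tilde v}-a\}$ as required.

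For the generic case $v' \neq \tilde v$, pick a simple path $P = (v_0, v_1, \dots, v_m)$ with $v_0 = \tilde v$ and $v_m = v'$; this exists because $G$ is connected. Let $e_i = v_{i-1}v_i$ and set $\delta_i \coloneqq (-1)^{i-1}(c_{\tilde v} - a) \in \bbZ_4$. Define $\lambda(x_{e_i}) \coloneqq \lambda_0(x_{e_i}) + \delta_i$ on path edges and $\lambda(x_e) \coloneqq \lambda_0(x_e)$ elsewhere. One then checks vertex by vertex: vertices not on $P$ still have sum $a = c_v$; each interior vertex $v_j$ ($1 \le j < m$) sees the combined shift $\delta_j + \delta_{j+1} = 0$ by the alternating sign choice, so its sum remains $a = c_{v_j}$; at $\tilde v = v_0$ the shift $\delta_1 = c_{\tilde v} - a$ exactly cancels the previous defect, so the sum becomes $c_{\tilde v}$; finally at $v' = v_m$ the shift $\delta_m = (-1)^{m-1}(c_{\tilde v}-a)$ gives defect $\delta_m \in \{c_{\tilde v}-a,\; a-c_{\tilde v}\}$, depending on the parity of $m$.

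This is essentially the Tseitin trick of routing a unit of ``charge'' along a path in $G$, adapted to transport the value $c_{\tilde v}-a$ rather than $1$. There is no serious obstacle: the perfect matching is exactly what is needed to produce the uniform sum $a$ at every vertex, and the alternating signs on the path arise because each interior path vertex receives two path-edge contributions that must cancel. The fact that the target defect can be either $c_{\tilde v}-a$ or its negative (and we have no control over which) simply reflects the parity of the chosen path, which is why the statement phrases the conclusion as membership in the two-element set $\{a-c_{\tilde v}, c_{\tilde v}-a\}$.
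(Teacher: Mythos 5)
Your proof is correct and follows essentially the same approach as the paper's: construct a base assignment with uniform vertex-sum $a$ (using the perfect matching from Lemma~\ref{lem:existenceOfPerfectMatching}) and then route the defect from $\tilde v$ to $v'$ by alternating-sign shifts along a path. The only cosmetic difference is that the paper first treats $a=0$ with the zero assignment and adds the matching contribution at the end, whereas you fold the matching into the base assignment from the start; the two presentations are equivalent.
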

\begin{proof}
Suppose first that $a = 0$. Then consider the assignment $\lambda_0 \colon X \to \bbZ_4$ that is $0$ at every edge. It satisfies every equation except the one for $\tilde{v}$, where $\sum_{e \in E(v')} \lambda(x_e) - c_{\tilde{v}} = 0 - c_{\tilde{v}} = a -c_{\tilde{v}}$. This shows the lemma for $v' = \tilde{v}$. If $v' \neq \tilde{v}$, then by adding $c_{\tilde{v}}$ to an incident edge of $\tilde{v}$, and propagating this alternatingly with $-c_{\tilde{v}}$ along a path from $\tilde{v}$ to $v'$, we can define a new assignment $\lambda'$ that satisfies every equation except the one for $v'$. Moreover,  
$\sum_{e \in E(v')} \lambda(x_e) - c_{v'} \in \{ -c_{\tilde{v}}, c_{\tilde{v}}\}$ (depending whether the length of the path is odd or even).
In case that $a \neq 0$, we can take an assignment as in the case $a=0$, and add $a$ to $\lambda(x_e)$, for every edge $e$ of a fixed perfect matching of $G$ -- a perfect matching exists by Lemma \ref{lem:existenceOfPerfectMatching}.
\end{proof}

\begin{lemma}
\label{lem:nonIsomorphismMaltsevInstances}
Let $G=(V,E)$ be a connected $k$-regular bipartite graph.
Then for any $k \geq 3$, $\StructA^\Mm(G) \not\cong \tilde{\StructA}^\Mm(G)$.
\end{lemma}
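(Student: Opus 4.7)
Assume for contradiction that an isomorphism $\phi \colon \StructA^\Mm(G) \to \tilde{\StructA}^\Mm(G)$ exists. The first step is to use the preorder $\prec$ to reduce $\phi$ to a family of local permutations. Since the $\prec$\nobreakdash-equivalence classes are exactly the sets $\{e\} \times \bbZ_4$ and the quotient of $\prec$ is a linear order on $E$ agreeing with $\leq^G$, any $\prec$\nobreakdash-preserving bijection must fix every class $\{e\} \times \bbZ_4$ setwise. So $\phi$ decomposes as a collection of permutations $\phi_e \colon \{e\}\times\bbZ_4 \to \{e\}\times\bbZ_4$, for $e \in E$.

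The second step is to classify every $\phi_e$ as either a \emph{rotation} or a \emph{reflection} in the sense of Lemma~\ref{lem:permutationsOnZ4}. Fix a vertex $v$, and let $f_v$ be the combined bijection $(\phi_{e_1},\dots,\phi_{e_k})$ on $A^\Mm_v$. Then $f_v$ maps $R_j^{\StructA^\Mm(v,0)}$ to $R_j^{\StructA^\Mm(v,\epsilon_v)}$ for $j\in\{0,1\}$, where $\epsilon_v = 1$ if $v = \tilde{v}$ and $\epsilon_v = 0$ otherwise. When $\epsilon_v = 0$, Lemma~\ref{lem:notLocalIsomorphisms} directly forces every $\phi_{e_i}$ for $e_i \in E(v)$ to be a rotation, or every $\phi_{e_i}$ to be a reflection. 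When $\epsilon_v = 1$, we reduce to the preservation case by composing $f_{\tilde v}$ with the bijection $g$ that shifts by $2$ on a single fixed incident edge and is the identity on the others; $g$ is a rotation on every edge, and since precomposing a rotation with either a rotation or a reflection leaves the type unchanged, the same dichotomy carries over from $g \circ f_{\tilde v}$ to $f_{\tilde v}$. Because every edge is incident to two vertices, this forces the rotation/reflection type to propagate across shared edges, and connectedness of $G$ then makes the classification globally uniform: either every $\phi_e$ is a rotation, or every $\phi_e$ is a reflection.

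The third step is to derive a Tseitin-like system from the shifts. For each edge $e$, let $c_e\in\bbZ_4$ be the shift of $\phi_e$. Applying Lemma~\ref{lem:localIsomorphisms} at every vertex $v$ yields, in the all\nobreakdash-rotations case, the equations
\[
\sum_{e \in E(v)} c_e \;=\; 2\epsilon_v \pmod 4,
\]
and in the all\nobreakdash-reflections case,
\[
\sum_{e \in E(v)} c_e \;=\; 1 + 2\epsilon_v \pmod 4.
\]
Both are systems in which all right\nobreakdash-hand sides are equal to some constant $a$ ($a=0$ or $a=1$, respectively) except at $\tilde v$, where the value differs from $a$ by $2$. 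By Lemma~\ref{lem:TseitinSolutions}, applied to the connected bipartite $k$\nobreakdash-regular graph $G$, neither system has a solution over $\bbZ_4$, contradicting the existence of $\phi$.

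The main obstacle is verifying that Lemma~\ref{lem:notLocalIsomorphisms}, which as stated concerns only \emph{preservation} of the relations $R_j(v)$, can be invoked at $\tilde v$, where the local bijection must \emph{swap} $R_0$ and $R_1$. The precomposition trick with $g$ handles this cleanly because shifting by $2$ is a rotation, so the rotation/reflection classification of $f_{\tilde v}$ is preserved under composition with $g$; everything else is a routine bookkeeping of shifts mod $4$.
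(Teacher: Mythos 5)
Your proof is correct and uses exactly the same ingredients as the paper's (the preorder decomposition, Lemmas~\ref{lem:localIsomorphisms}, \ref{lem:notLocalIsomorphisms}, and \ref{lem:TseitinSolutions}), but in a reorganized and arguably cleaner logical order. The paper first establishes, by the Tseitin argument, the \emph{negative} claim that $f$ is neither a rotation on all edges nor a reflection on all edges; it then derives a contradiction by walking along reflection-only edges to find a vertex $v\neq\tilde v$ with one incident non-rotation edge and one incident non-reflection edge, which violates Lemma~\ref{lem:notLocalIsomorphisms}. You instead apply the contrapositive of Lemma~\ref{lem:notLocalIsomorphisms} at every vertex to get a local dichotomy, propagate it to a global one by connectedness (since no permutation of $\bbZ_4$ is both a rotation and a reflection, adjacent vertices share the type), and only then invoke the Tseitin argument. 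The one extra step you need, and handle correctly, is the precomposition-by-a-shift-by-$2$ rotation at $\tilde v$: since composing with a rotation preserves each of the classes rotation/reflection/neither, the dichotomy transfers from $g\circ f_{\tilde v}$ back to $f_{\tilde v}$. The paper avoids this wrinkle simply by arranging never to apply Lemma~\ref{lem:notLocalIsomorphisms} at $\tilde v$, at the price of the somewhat fiddly path argument. Both are fine; yours makes the global rotation/reflection dichotomy an explicit intermediate statement, which the paper never states.
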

\begin{proof}
Suppose for a contradiction that $f \colon A_G^\Mm \to A^\Mm_G$ is a bijection that defines an isomorphism from $\StructA^\Mm(G)$ to $\tilde{\StructA}^\Mm(G)$. Summation in the following is always in $\bbZ_4$, i.e.\ mod $4$.
Because of the preorder $\prec^{\StructA^\Mm(G)}$, $f$ must map the set $\{e\} \times \bbZ_4$ to itself, for each $e \in E$.
In the following, we also write $f(i)$ instead of $f(e,i)$ if the edge $e$ is clear from the context.\\
\\
\noindent
\textbf{Claim:} It is neither the case that $f$ is a rotation on every $e \in E$, nor that $f$ is a reflection on every $e \in E$.\\
\textit{Proof of claim.}
Suppose first that $f$ is a rotation on every edge. Let, for every $e \in E$, $a_e \in \bbZ_4$ be the cyclic shift applied to edge $e$, as described before.
By Lemma \ref{lem:localIsomorphisms}, the sum of the shifts must be $0$ at every $v \in V \setminus \{\tilde{v}\}$, and it must be $2$ at $\tilde{v}$.
For each $e \in E$, let $a_e \in \bbZ_4$ denote the value of the cyclic shift that $f$ applies to $\{e\} \times \bbZ_4$. Now the admissible values of the $a_e$ can be seen as a solution to a system of linear equations as in Lemma \ref{lem:TseitinSolutions}. But by that lemma, such a solution does not exist.
This shows that $f$ cannot be a rotation on every edge. 
If $f$ is a reflection on every edge, then according to Lemma \ref{lem:localIsomorphisms}, the sum of the shifts must be $1$ at every $v \in V \setminus \{\tilde{v}\}$, and it must be $3$ at $\tilde{v}$. 
So again, the shifts $a_e$, for each $e \in E(G)$, must be a solution to an equation system as in Lemma \ref{lem:TseitinSolutions}, which cannot exist.\\
\\
From the claim it follows that there is at least one edge $e$ on which $f$ is not a rotation. In fact, this $e$ can be chosen in such a way that one of its endpoints is a vertex $v \neq \tilde{v}$ such that not on all edges in $E \setminus \{e\}$, $f$ is a reflection. To see this, start with an arbitrary $e$ where $f$ is not a rotation. If it does not have an endpoint that satisfies the desired condition, then move from $e$ along a path consisting only of edges where $f$ is a reflection, until an edge $e$ is reached that does satisfy the condition. This is always possible because by the claim, there exist edges on which $f$ is not a reflection. 
Hence, we can find the following set-up: An edge $e \in E$ on which $f$ is not a rotation, and an endpoint $v \neq \tilde{v}$ of $e$ such that there is $e' \in E(v) \setminus \{e\}$ on which $f$ is not a reflection. 
Then by Lemma \ref{lem:notLocalIsomorphisms}, $f$ does not preserve $R_0(v)$ and $R_1(v)$. Since $v \neq \tilde{v}$, any isomorphism has to preserve these relations.
Therefore, $f$ cannot be an isomorphism between $\StructA^\Mm(G)$ and $\tilde{\StructA}^\Mm(G)$.
\end{proof}

Distinguishability of the structures in the logic $\Ll^k(\Qq_k)$ is now an immediate consequence:
\begin{corollary}
\label{cor:structuresDistinguishableInQk}
Let $G=(V,E)$ be connected, $k$-regular and bipartite.
Then $\StructA^\Mm(G)$ and $\tilde{\StructA}^\Mm(G)$ are distinguished by a sentence in $\Ll^k(Q_k)$.
\end{corollary}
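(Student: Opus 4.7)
The plan is to convert the non-isomorphism result of Lemma~\ref{lem:nonIsomorphismMaltsevInstances} into a separating sentence by simply taking the Lindström quantifier associated with the isomorphism class of $\StructA^\Mm(G)$. The vocabulary of both structures consists of the $k$-ary relations $R_0, R_1$ and the binary preorder $\prec$, so the maximum relation arity is $k$, and the resulting quantifier lies in $\Qq_k$. Most of the work has already been done; what remains is a definitional unpacking, together with one bookkeeping check that the sentence can be written with only $k$ variables.

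Concretely, let $\Kk$ be the class of all structures (over the vocabulary $\{R_0, R_1, \prec\}$) that are isomorphic to $\StructA^\Mm(G)$. Since $\Kk$ is closed under isomorphism by definition, the Lindström quantifier $Q_\Kk$ is well defined, and $Q_\Kk \in \Qq_k$. I would then write down the sentence
\[
\phi \;\coloneqq\; Q_\Kk\, \bar{y}_1\, \bar{y}_2\, \bar{y}_3\, \bigl(\, R_0(\bar{y}_1),\; R_1(\bar{y}_2),\; y \prec y' \,\bigr),
\]
taking $\bar{y}_1 = \bar{y}_2 = (x_1,\dots,x_k)$ (allowed since $k \geq 3$ gives enough distinct variables for each $k$-ary slot) and $\bar{y}_3 = (y,y') = (x_1,x_2)$. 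Every variable in $\phi$ is drawn from $\{x_1,\dots,x_k\}$, so $\phi \in \Ll^k(\Qq_k)$.

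Now I would evaluate $\phi$ using the semantics of generalized quantifiers recalled in the preliminaries. For any structure $\StructA$ of the given vocabulary, the interpretation $\Psi(\StructA, \emptyset)$ produced by the three atomic formulas above is literally $\StructA$ itself (the universe is $A$ and each relation is defined to be the extension of the corresponding atomic formula, which is the relation in $\StructA$). Therefore $\StructA \models \phi$ if and only if $\StructA \in \Kk$, i.e.\ $\StructA \cong \StructA^\Mm(G)$. Lemma~\ref{lem:nonIsomorphismMaltsevInstances} gives $\tilde{\StructA}^\Mm(G) \not\cong \StructA^\Mm(G)$, so $\StructA^\Mm(G) \models \phi$ while $\tilde{\StructA}^\Mm(G) \not\models \phi$, proving the corollary.

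There is no real obstacle here; the only thing that might trip one up is making sure the sentence stays within $k$ variables. This is settled by observing that the two $k$-ary slots can reuse the same tuple $(x_1,\dots,x_k)$ of distinct variables, and that the binary slot needs only two variables, which (for $k \geq 3$) are already available. The algebraic content — the fact that one really cannot reconcile a rotation-pattern with a reflection-pattern in the presence of the $\prec$-imposed edge-preservation — was carried by Lemma~\ref{lem:nonIsomorphismMaltsevInstances}, and this corollary is a direct deduction from it.
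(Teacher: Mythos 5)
Your proof is correct and follows essentially the same approach as the paper's: take the quantifier $Q_\Kk$ for the isomorphism class of $\StructA^\Mm(G)$, observe that the vocabulary has maximum arity $k$, and apply it to the identity interpretation. The paper's proof just states this as "trivially definable" without writing out the sentence; your explicit sentence and variable-count check are exactly the unpacking the paper is leaving to the reader.
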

\begin{proof}
Let $Q$ be a quantifier of arity $k$ corresponding to the class of structures isomorphic to $\StructA^\Mm(G)$ for some graph $G$.  Then, trivially, $Q$ is definable in $\Ll^k(\Qq_k)$.
\end{proof}

\paragraph*{Indistinguishability of the instances in the logic with partial Maltsev quantifiers}

To complete the proof of Theorem \ref{thm:MaltsevSeparation}, for every $k \geq 3$, we define for each $k$, a graph $G$ such that $\StructA^\Mm(G)$ and $\tilde{\StructA}^\Mm(G)$ cannot be distinguished in $\Ll^k(\Qq^\Mm_k)$. Here it suffices to use one base graph for each $k \in \bbN$, rather than a family of graphs, because the arity \emph{and} the number of variables are fixed to $k$. 

We call a $k$-regular graph $G$ \emph{near-$k$-connected} if the removal of any $k$ edges does not disconnect the graph, except in the case that the $k$ removed edges are the incident edges of a vertex $v$ (in which case $v$ becomes isolated from the rest).

For any $k \geq 3$ let $G^k$ be the graph obtained by removing from the biclique $K_{k+1,k+1}$ a set of edge $M$ that form a perfect matching.  Thus, $G^k$ is a $k$-regular bipartite graph on two sets (say $A$ and $B$) of $k+1$ vertices each.  

\begin{lemma}
  \label{lem:baseGraphsForMaltsev}
$G^k$ is near-$k$-connected. 
\end{lemma}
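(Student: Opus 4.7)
The approach is to prove the slightly stronger statement that every nontrivial edge cut $[S,\bar S]$ in $G^k$ has more than $k$ edges, unless $|S|=1$ or $|\bar S| = 1$. This implies near-$k$-connectedness because any edge set $F$ with $|F| \le k$ whose removal disconnects $G^k$ must contain an entire edge cut. Write $n = k+1$, label the two sides $A = \{a_0,\ldots,a_k\}$ and $B = \{b_0,\ldots,b_k\}$, and the removed matching $M = \{a_i b_i : 0 \le i \le k\}$. Fix a cut $(S,\bar S)$ and, by swapping $S \leftrightarrow \bar S$ if needed, assume $|S| \le n$. Set $s_A = |S\cap A|$ and $s_B = |S\cap B|$.

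First I dispose of the one-sided case. If $s_A = 0$ (symmetrically $s_B = 0$), then each $b_i \in S$ has exactly $k$ neighbors in $\bar S$ (all of $A$ except its matching partner), so $|[S,\bar S]| = k\cdot s_B$; this is at most $k$ only when $s_B = 1$, which is exactly the singleton case that witnesses the exception in the definition. In the main case $s_A, s_B \ge 1$, counting cross edges of $K_{n,n}$ and subtracting the matching edges that happen to lie in the cut gives
\[
|[S,\bar S]| \;=\; n(s_A + s_B) - 2 s_A s_B - \mu,
\]
where $\mu$ is the number of matching edges with exactly one endpoint in $S$. Since $\mu = s_A + s_B - 2|\{i : \{a_i,b_i\} \subseteq S\}| \le s_A + s_B$, and $s_A s_B \le \lfloor t^2/4\rfloor$ with $t = s_A + s_B \in \{2,\ldots,n\}$, we obtain the uniform lower bound $|[S,\bar S]| \ge g(t) \coloneqq kt - 2\lfloor t^2/4\rfloor$.

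It remains to verify $g(t) > k$ throughout $\{2,\ldots,n\}$. Inspecting consecutive differences, $g(t) - g(t-1)$ equals $k - t$ when $t$ is even and $k - t + 1$ when $t$ is odd, so $g$ is unimodal with peak near $t = k$; hence its minimum on $\{2,\ldots,n\}$ is attained at one of the two endpoints. At $t = 2$ one has $g(2) = 2k - 2 > k$ for $k \ge 3$, and at $t = n = k+1$ a short case split on the parity of $k$ gives $g(n) \ge (k^2 - 1)/2 > k$ for $k \ge 3$. Hence $|[S,\bar S]| > k$ in the main case, and the exception case produces exactly the singleton cuts, completing the proof. There is no genuine obstacle: the only points that merit a moment's care are the correction term $\mu$ for the missing matching edges, and the short endpoint arithmetic that relies crucially on $k \ge 3$.
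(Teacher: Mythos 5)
Your proof is correct, and it takes a genuinely different route from the paper's. The paper argues by reachability: since $|A|=k+1$ exceeds the number of removed edges, some $a\in A$ has none of its edges removed, and one then explicitly chases paths from $a$ to every other vertex, with a small case split around the one vertex $u\in A$ that might have $k-1$ of its edges to $B'$ removed. You instead observe that any disconnecting edge set must contain an entire edge cut $[S,\bar S]$, and then compute a closed-form lower bound on the size of every cut in $G^k$: the $K_{n,n}$ cut-count $n(s_A+s_B)-2s_As_B$, minus the correction $\mu$ for the deleted matching, gives $|[S,\bar S]|\geq kt-2\lfloor t^2/4\rfloor$ where $t=|S|$, and a short unimodality/endpoint check shows this exceeds $k$ for $2\le t\le k+1$ once $k\ge 3$, while the one-sided case isolates precisely the singleton exception. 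Your argument proves the stronger quantitative fact that every non-singleton cut has strictly more than $k$ edges, which cleanly characterizes the minimum cuts and avoids the path-tracing casework; the paper's argument is more local and constructive but requires a careful ad hoc analysis of the worst configuration of removed edges. Both are fine proofs; yours is arguably the more transparent one for this particular graph, though it relies on the explicit structure of $K_{n,n}$ minus a matching and would not generalize as directly to other highly connected base graphs as a reachability argument would. One presentational remark: you could shorten the endpoint arithmetic by noting $g(t)\geq kt - t^2/2$, a concave function of $t$, so the minimum over the interval $[2,k+1]$ is automatically attained at an endpoint without examining the integer differences.
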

\begin{proof}
  Let $S$ be a set of edges of $G^k$, not all incident on a single vertex, with $|S| \leq k$.  We wish to argue that the graph $G'$ obtained by removing $S$ from $G^k$ is  connected.  Since $A$ has $k+1$ vertices, there is at least one $a \in A$ such that none of the edges of $S$ is incident on $A$.  We show that all vertices in $G'$ are reachable from $a$.  Let $b$ be the unique non-neighbour of $a$ in $B$ and let $A' = A\setminus \{a\}$ and $B' = B \setminus \{b\}$.  Then, every vertex in $B'$ is a neighbour of $a$ and hence reachable from $a$.  Moreover, if $u \in A'$ is such that some edge between $u$ and $B'$ is not in $S$, then $u$ is also reachable from $a$.  Hence, if for all $u \in A'$ there is at least one edge between $u$ and $B'$ that is not in $S$ then all of $A'$ is reachable from $a$.  Also, since all $k$ edges incident on $b$ are to $A'$ and they cannot all be in $S$, $b$ is also reachable from $a$ in $G'$.

  Thus, we are only left with the case that there is a vertex $u \in A'$ such that all $k-1$ edge between $u$ and $B'$ are in $S$.  Then, since $S$ cannot contain all edges incident on $u$, $b$ is still a neighbour of $u$ in $G'$.  Moreover, there cannot be more than one vertex in $A'$ satisfying these conditions.  Indeed, if $u,u' \in A'$ are such that each has $k-1$ incident edges in $S$, then $S$ has at least $2k-2$ edges (since the edges incident on $u$ and $u'$ form disjoint sets) and $2k-2 > k$ (since $k \geq 3$).  Now, since $S$ includes $k-1$ edges incident on $u$ and none of these is incident on $b$, $S$ can contain at most one edge incident on $b$.  Thus, $b$ still has at least two neighbours in $A'$, and the one that is not $u$ is reachable from $a$.  We conclude that $b$, and therefore also $u$, is reachable from $a$.
\end{proof}

\begin{lemma}
\label{lem:duplicatorWinningStrategyMaltsev}
For every $k \geq 3$, the Duplicator has a winning strategy in the game
$\Mm_k(\StructA^\Mm(G^k), \tilde{\StructA}^\Mm(G^k),\emptyset,\emptyset)$.
\end{lemma}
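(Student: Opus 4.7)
My approach is to give Duplicator a strategy preserving a \emph{twist-at-safe-vertex} invariant analogous to the one in Lemma~\ref{lem:invariantCanBeMaintained}. At each game position $(\alpha,\beta)$ there should exist a vertex $v^*\in V(G^k)$ and a bijection $f\colon A_{G^k}^\Mm\to A_{G^k}^\Mm$ with $f(\bar\alpha)=\bar\beta$, such that $f$ induces a partial isomorphism $\StructA^\Mm(G^k)\to\tilde{\StructA}^\Mm(G^k)$ on every vertex gadget except possibly $A_{v^*}^\Mm$, and strictly fewer than $k$ edges incident to $v^*$ carry pebbles. The safety condition is what prevents Spoiler from witnessing the twist: since the vertex relations are defined on $k$-tuples drawn from $\bar{e}(v)\times\bbZ_4$, no tuple of variables can probe $v^*$'s gadget in a single round.

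In a generic round Duplicator plays this bijection $f$; if Spoiler's choice of $\vec b$ leaves a pebble-free path from $v^*$ to a fresh safe vertex $u^*$, Duplicator responds with the trivial Maltsev set $P=\{f(\vec b),f(\vec b),f(\vec b)\}$ and then relocates the twist by composing $f$ with a path-shift of $+2$ on each edge of the path. By Lemma~\ref{lem:localIsomorphisms}, a constant shift of $+2$ applied to all edges of the path leaves the residual at interior vertices unchanged mod $4$, but flips the twist status at the two endpoints; and Lemmas~\ref{lem:localIsomorphisms} and~\ref{lem:notLocalIsomorphisms} give the analogous statement in the reflection case. The existence of such a path is guaranteed by the near-$k$-connectedness of $G^k$ (Lemma~\ref{lem:baseGraphsForMaltsev}), combined with the $k$-pebble budget and the fact that $G^k=K_{k+1,k+1}\setminus M$ has $2(k+1)$ vertices, leaving ample room to reroute.

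The critical case is when Spoiler's move leaves all $k$ edges incident to $v^*$ pebbled, so that no path-shift can move the twist without passing through a pebbled edge. Here $\vec b$ lies in a definite relation $R_j^{\tilde{\StructA}^\Mm(G^k)}$, and because of the twist at $v^*$ any honest bijection delivers $f(\vec b)$ into the wrong relation on the other side. The Maltsev closure rescues Duplicator. She chooses three tuples $\vec a_1,\vec a_2,\vec a_3$, each with $\bbZ_4$-coordinate sum in the interval \emph{opposite} to that of $\vec b$, so each is individually in $R_j^{\StructA^\Mm(G^k)}$, the correct relation for a partial-isomorphism image. She builds them by distributing the required net shift of $+2$ between two distinct edges $e,e'\in\bar{e}(v^*)$ as a pair of $+1$ shifts, arranged so that on each of the columns labelled by $e$ and $e'$ two of the three tuples agree. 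A direct computation then shows that $\hat m(\vec a_1,\vec a_2,\vec a_3)$ recovers the tuple $f(\vec b)$ whose shift from $\vec b$ is $(+1,+1)$ on $(e,e')$. This step is precisely where $\bbZ_4$ replaces the $\bbZ_2$ used in CFI constructions: two $+1$ shifts combine to $+2$, flipping $R_0/R_1$ membership, yet each single $+1$ is compatible with the partial Maltsev operation column-wise.

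The principal obstacle is to make this Maltsev trick compatible with the re-establishment of the invariant after Spoiler's subsequent choice from $P$. The two shifted edges $e,e'$ must be chosen so that for every $\vec a_i\in P$ a pebble-free path from $v^*$ to a fresh safe vertex still exists, and so that the surviving bijection in each case still decomposes into either all-rotations or all-reflections at the relevant vertices, as required by Lemmas~\ref{lem:localIsomorphisms} and~\ref{lem:notLocalIsomorphisms}. Exploiting near-$k$-connectedness of $G^k$ one can always pick two edges whose far endpoints lie in still-pebble-free parts of the graph — this uses that $G^k$ has $2(k+1)$ vertices while only $k$ pebbles are placed at any time, and is exactly where the specific choice of $G^k=K_{k+1,k+1}\setminus M$ pays off.
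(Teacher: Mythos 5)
Your proposal follows the same overall blueprint as the paper's proof: maintain a bijection that is a local isomorphism at every vertex gadget except a single "twisted" vertex $v^*$; shift the twist along pebble-free paths with a constant $+2$ (which works because $+2 = -2$ in $\bbZ_4$); and use a partial-Maltsev move with three tuples obtained by $+1$ shifts on two edges of $v^*$ when Spoiler manages to pebble all of $E(v^*)$. The Maltsev set, the role of $\bbZ_4$, and the appeal to near-$k$-connectedness to find escape paths are all in line with the paper's Lemma~\ref{lem:invariantMaintainedMaltsev}.

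There is, however, a genuine gap at the very point you flag as "the principal obstacle." After the critical Maltsev move Spoiler may pick one of the \emph{asymmetric} elements of $P$, the one with a $+1$ shift on only one of the two edges $e,e'$. You assert that the invariant can nevertheless be re-established by "picking two edges whose far endpoints lie in still-pebble-free parts of the graph." But the difficulty is not about escape paths: it is that no choice of $e,e'$ makes the surviving bijection decomposable into all-rotations of the same form as $f$. If $f$ is a rotation on every edge and Duplicator must produce $f'$ agreeing with $f$ outside $v^*$ but shifting exactly one pebbled coordinate by $+1$, then the sum of shifts at $v^*$ changes by an \emph{odd} amount, and by Lemma~\ref{lem:localIsomorphisms} no rotation with odd total shift preserves or swaps $R_0$ and $R_1$. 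Choosing $e,e'$ differently cannot change this parity obstruction. The paper resolves it by \emph{inverting} the bijection at every edge: setting $f'_i(x) := -f_i(x) + 2f_i(a_i)$ (and adding $+1$ for the one shifted edge), which turns rotations into reflections and vice versa, pins $f'$ to agree with Spoiler's choice, makes the $v^*$-gadget a local isomorphism, and is then extended to the rest of the graph using Lemma~\ref{lem:tseitinAlmostSolution} (possible because all $k$ pebbles sit on $E(v^*)$, leaving the rest of the graph free). This rotation-to-reflection switch is the essential new idea of the proof and cannot be obtained from path-shifts alone; without it, your strategy does not actually restore the invariant. A smaller issue: your invariant requires only that fewer than $k$ edges of $v^*$ be pebbled, whereas the paper requires all of $E(v^*)$ to be pebble-free; the stronger form is what guarantees that the critical case really does force all pebbles onto $E(v^*)$ (so that the rest of the graph is completely free for the negation trick), and the argument becomes substantially murkier with the weaker version.
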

The winning strategy again consists in Duplicator maintaining a certain invariant. In the following, fix $k \geq 3$, and let $G = (V,E)$ be the graph $G^k$.

A bijection $f \colon A_G^\Mm \to A_G^\Mm$ is called \emph{edge-preserving} if $f$ maps the set $\{e\} \times \bbZ_4 $ to itself, for every $e \in E$. Duplicator only ever plays edge-preserving bijections. Again we have the notion of a bijection being good bar a vertex but it is slightly different than before:
In this proof, an edge-preserving bijection $f$ is called \emph{good bar $w$}, for a $w \in V$ if all of the following hold:
\begin{enumerate}
\item $f$ is either a rotation on every $\{e\} \times \bbZ_4$ or a reflection on every $\{e\} \times \bbZ_4$, for $e \in E$.
\item For every $v \in V \setminus \{w\}$, $\restrict{f}{A_v^\Mm} \colon \StructA^\Mm(v,s) \to \StructA^\Mm(v,t)$ is an isomorphism, where $s,t \in \bbZ_2$ are the charges of the gadgets in $\StructA^\Mm(G)$ and $\tilde{\StructA}^\Mm(G)$, respectively.
\item At $w$, the sum of the shift values of $f$ on the edges in $E(w)$, $c \coloneqq \sum_{e \in E(w)} c_e$,   
is precisely $2$ off from being a local isomorphism: That is, if $f$ is a rotation on every edge, then $c=0$ if $w= \tilde{v}$, and $c=2$ if $w \neq \tilde{v}$. If $f$ is a reflection on every edge, then $c=1$ if $w= \tilde{v}$, and $c=3$ if $w \neq \tilde{v}$ (cf.\ Lemma \ref{lem:localIsomorphisms}).
\end{enumerate}

After every round, let $\bar{\alpha}, \bar{\beta}$ denote the tuples of pebbles on $\StructA(G)$ and $\tilde{\StructA}(G)$, respectively. Let $F(\bar{\alpha}) \subseteq E$ denote the set of edges $e$ such that a pebble in $\bar{\alpha}$ lies on an element of $\{e\} \times \bbZ_4$.
Duplicator ensures the following \emph{invariant for the Maltsev game}.
There exists a $w \in V$ such that:
\begin{enumerate}
\item Every $e \in E(w)$ is not in $F(\bar{\alpha})$.
\item There is an edge-preserving bijection $f$ that is good bar $w$ and satisfies $f(\alpha) = \beta$.
\end{enumerate}
In the beginning of the game, this invariant is fulfilled for $w = \tilde{v}$ with $f \colon A_G^\Mm \to A_G^\Mm$ being the identity map.

\begin{lemma}
\label{lem:invariantMaintainedMaltsev}
If the invariant is satisfied in a game position $(\alpha, \beta)$ at the beginning of the round, then Duplicator has a strategy to ensure the invariant also at the beginning of the following round.
\end{lemma}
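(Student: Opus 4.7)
My plan is to adapt the strategy of Lemma~\ref{lem:invariantCanBeMaintained} to the Maltsev game and the $\bbZ_4$-based construction. I would have Duplicator play, in each round, the bijection $f$ witnessing the current invariant (or its inverse, depending on whether Spoiler invokes a left or right $\Mm$-quantifier move), so that the ``default'' response to Spoiler's $\vec b$ is $\vec a_0 = f^{\pm 1}(\vec b)$; in most rounds Duplicator would offer the trivial triple $P = \{\vec a_0, \vec a_0, \vec a_0\}$, forcing Spoiler to pick $\vec a_0$.

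The main case split is on how many new pebbles land on $E(w)$. If none, the invariant persists unchanged with the same $(f, w)$. If some land on $E(w)$ but do not exhaust it, I would shift the defect to a new safe vertex $w'$. Such $w'$ always exists because the at most $k$ pebbles touch at most $2k < |V(G^k)| = 2(k+1)$ vertices, and a pebble-free path $\pi$ from $w$ to $w'$ is guaranteed by the near-$k$-connectedness of $G^k$ (Lemma~\ref{lem:baseGraphsForMaltsev}): removing the at most $k$ pebbled edges either leaves $G^k$ connected or isolates a single vertex $v^\ast$, which by the invariant must lie outside $\{w, w'\}$. I would then define the correction $g$ as a rotation with shift value $2 \in \bbZ_4$ on each edge of $\pi$ and $0$ elsewhere; the shift-sum calculus of Lemma~\ref{lem:localIsomorphisms} shows $g \circ f$ is good bar $w'$, and because $g$ acts as the identity on every non-path (hence every pebbled) edge, $f(\vec\alpha') = \vec\beta'$ is preserved.

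The delicate case is when Spoiler pebbles the entirety of $E(w)$ in one round, so that no pebble-free correction path out of $w$ exists. Here I would genuinely exploit the Maltsev option. Writing $s = \sum \vec b \pmod 4$, a short calculation shows that a tuple $\vec a$ on $E(w)$ yields a partial isomorphism with $\vec b$ and extends to a good-bar bijection that is a local isomorphism at $w$ precisely when $\sum \vec a - \sum \vec a_0 \in \{2,\ 3 - 2s \bmod 4\}$, where the first value corresponds to a rotation bijection and the second to a reflection bijection. I would select three values $\Delta_1, \Delta_2, \Delta_3$ from this two-element set satisfying $\Delta_1 + \Delta_3 \equiv \Delta_2 \pmod 4$; the unique triple that works is $(3,2,3)$ when $s$ is even and $(1,2,1)$ when $s$ is odd. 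The corresponding $\vec a_1, \vec a_2, \vec a_3$ are realised column-wise by partitioning the $k$ new pebble columns into two nonempty groups and using the Maltsev-definable patterns $(\delta_j, \delta_j, 0)$ in the first group and $(0, \delta_j, \delta_j)$ in the second, with per-group $\delta_j$'s summing to the target $\Delta_1$ and $\Delta_3$ respectively. For each Spoiler pick $\vec a_i$, the associated bijection $f_i$ (rotation if $\Delta_i = 2$, reflection otherwise) is a local isomorphism at $w$, and is completed outside $E(w)$ via a Tseitin-style shift-path to a safe new bad vertex $w_i$ among the non-neighbors of $w$ in $G^k$---all non-neighbors are safe because their incident edges are disjoint from the fully-pebbled $E(w)$.

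The hard part will be this last case, where the $\bbZ_4$-arithmetic (in which $2 + 2 \equiv 0$ causes rotation-based corrections to cancel out when the source vertex is surrounded by pebbles), the alternation between rotation and reflection $f_i$'s, and the column-wise Maltsev constraints must be coordinated simultaneously. Reflections---absent in the $\bbZ_3$-based proofs of Section~\ref{sec:ArityHierarchy}---are indispensable here, and the three-tuple Maltsev response supplies the missing degree of freedom; verifying that the residual shift-sum system outside $E(w)$ is solvable reduces to a routine bipartite consistency check on $G^k$.
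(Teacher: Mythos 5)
You have the right overall shape of the argument (three cases on how Spoiler's new pebbles interact with $E(w)$; defect-shifting along a pebble-free path in the easy case; a genuine Maltsev response only when $E(w)$ is fully pebbled), and your shortcut of using a constant shift of $2$ on every edge of the correction path is sound, since $2 = -2$ in $\bbZ_4$, so interior path-vertices pick up shift sum $4 \equiv 0$ automatically. However, there is a concrete error in the formula driving your Case~3 that would cause Spoiler to win. The admissible set $\{2,\ 3-2s \bmod 4\}$, with $s = \sum \vec b$ (Spoiler's pick), is correct only when the current invariant bijection $f$ is a \emph{rotation} on every edge: writing $c_e$, respectively $d_e$, for the shift values, one has $\sum\vec a_0 = s - \sum_{e\in E(w)}c_e$ in the rotation case but $\sum\vec a_0 = \sum_{e\in E(w)} d_e - s$ in the reflection case, which flips the parity of $\sum\vec a_0$ relative to $s$. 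A direct recomputation then gives the admissible set $\{2,\ 1-2s\}$ when $f$ is a reflection, and $3-2s \neq 1-2s$ in $\bbZ_4$ for every $s$. Since (as you yourself stress --- ``reflections \ldots are indispensable'') Duplicator's strategy necessarily produces a reflection as soon as Spoiler forces an odd-$\Delta$ pick in Case~3, your static formula and hence your triple choice $(3,2,3)$ vs.\ $(1,2,1)$ are wrong in every subsequent Case~3 round while $f$ is a reflection. This is not a mere invariant-bookkeeping slip: with $f$ a reflection and $s$ even, two of the three tuples $\vec a_i$ lie in the opposite $R_j$-relation to $\vec b$, so Spoiler simply picks one of them and wins immediately by a broken partial isomorphism.

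The fix is small and instructive: parameterise the admissible set by Duplicator's \emph{default} reply $\sigma_0 := \sum \vec a_0 = \sum f^{\pm 1}(\vec b)$ rather than by Spoiler's $\sum\vec b$. A short calculation shows the admissible set is $\{2,\ 3-2\sigma_0\}$ regardless of whether $f$ is a rotation or a reflection, and the triple is then $(3,2,3)$ or $(1,2,1)$ according to the parity of $\sigma_0$. This is effectively what the paper does: its constraint ``to turn $f$ into a local isomorphism, the value of $\theta$ must be increased by $1$ or $2$'' is phrased via relation membership of $\bar a$, $f(\bar a)$ and $\bar b'$ at $w$, with $\theta$ the sum of Duplicator's default reply, and is therefore automatically indifferent to the rotation/reflection dichotomy. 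With that one substitution made, the rest of your argument (the $\Delta_1+\Delta_3 = \Delta_2$ Maltsev compatibility, the Tseitin-style extension outside $E(w)$ to a pebble-free non-neighbour $w'$, and the near-$k$-connectedness of $G^k$) lines up with the paper's.
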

\begin{proof}
Spoiler chooses $r \in [k]$ and announces which $r$-tuple $\bar{y}$ of pebbles is moved this round, and whether he is playing a left or a right CSP quantifier move.
Duplicator plays the bijection $f$ that is good bar $w$ which exists by the invariant (it does not matter whether it maps from the left to the right structure or the other way round). 
Now Spoiler chooses an $r$-tuple in either $\StructA^\Mm(G)$ or $\tilde{\StructA}^\Mm(G)$, depending on which move he has chosen. Suppose w.l.o.g.\ that Spoiler is playing in $\StructA^\Mm(G)$, and let $\bar{a}$ be his chosen tuple of vertices. Let $\bar{\alpha}' \coloneqq \alpha[\bar{a}/\bar{y}]$ be the new tuple of pebbled vertices in $\StructA^\Mm(G)$.\\
\\
\textit{Case 1:} There is an edge $e \in E(w) \setminus F(\bar{\alpha}')$.\\
In this case, Duplicator plays $P \coloneqq \{f(\bar{a})\}$ i.e.\ effectively replies according to the bijection without using the power of the partial Maltsev move. Since $f$ defines a local isomorphism everywhere except at the tuples in $R_j(w)$, for $j \in \{0,1\}$, and because all edges incident to $w$ were pebble-free before Spoiler's move, the pebbles now still define a local isomorphism (since no tuple in $R_j(w)$ is fully pebbled).
Thus, Duplicator has not lost. It remains to define a new bijection $f'$ that is good bar $w'$ for some suitable $w' \in V$.
Since $f$ is good bar $w$, it is either a rotation or a reflection on every edge, and it does not map $R_j^{\StructA^\Mm(G)}(w)$ to $R_j^{\tilde{\StructA}^\Mm(G)}(w)$, for each $j \in \{0,1\}$. 
If $f$ is a rotation on every edge, then in order to change it to a local isomorphism from $R_j^{\StructA^\Mm(G)}(w)$ to $R_j^{\tilde{\StructA}^\Mm(G)}(w)$ for each $j \in \{0,1\}$, we have to make sure that the sum of shifts on the edges in $E(w)$ is $0$ if $w \neq \tilde{v}$, and $2$ if $w = \tilde{v}$.
If $f$ is a reflection on every edge, then the sum of shifts has to be $1$ if $w \neq \tilde{v}$, and $3$ if $w = \tilde{v}$ (both according to Lemma \ref{lem:localIsomorphisms}). In either case, this is achieved by letting $f'(e,i) \coloneqq (e,f(e,a)+b)$ for every $a \in \bbZ_4$ for a suitable choice of $b$ that corrects the sum of the shifts in $E(w)$ as desired. 
Let $v \neq w$ be the other endpoint of $e$. To compensate for the shift by $b$, we let $f'(e',a) \coloneqq (e',f(e',a)-b)$ for some other edge $e' \in E(v)$, and we proceed like this along a pebble-free escape path to some vertex $w'$ that satisfies condition 1 of the invariant. Such a path exists because $G$ is $k$-near-connected, and such a vertex $w'$ exists because there are only $k$ pebbles, but $k+1$ vertices on each side of the bipartition of $G$.

On all edges that are not on this escape path, we define $f'$ like $f$ (see also Lemmas \ref{lem:pathIso} and the proof of Lemma \ref{lem:duplicatorWins}).
By definition, the new bijection $f'$ is good bar $w'$, and $w'$ satisfies all conditions of the invariant for the new game position.\\
\\
\textit{Case 2:} Each $e \in E(w)$ is in $F(\bar{\alpha})$, that is, $E(w) = F(\bar{\alpha})$.\\
In this case, Spoiler can in principle expose the fact that $f$ is not a local isomorphism at $w$. Duplicator can prevent this by playing a Maltsev move. 
By the invariant, the sum of shifts $c$ that $f$ applies to the edges in $E(w)$ is off by $2$ from inducing a partial isomorphism. So by Lemma \ref{lem:localIsomorphisms}, $f$ maps $R^{\StructA(G)}_j(w)$ to $R^{\tilde{\StructA}(G)}_{1-j}(w)$ for each $j \in \{0,1\}$. Thus, $\bar{a} \in R_j^{\StructA(G)}(w)$ and $f(\bar{a}) \in R_{1-j}^{\tilde{\StructA}(G)}(w)$ for some $j \in \{0,1\}$. 
Let $\bar{a} = ((e_1,a_1),...,(e_k,a_k))$.
Assume first that $\theta \coloneqq \sum_{i \in [k]} f(a_i) \in \{1,3\}$. Then $\sum_{i \in [k]} a_i \in \{0,1\}$ in case that $\theta = 3$, and $\sum_{i \in [k]} a_i \in \{2,3\}$ in case that $\theta = 1$, because $f$ does not preserve $R_j^{\StructA(G)}(w)$.
That means, to turn $f$ into a local isomorphism, the value of $\theta$ must be increased by $1$ or $2$.

Let $f(\bar{a})_{+1,-}$ be a tuple like $f(\bar{a})$, but with $(e_1,f(a_1))$ replaced with $(e_1,f(a_1)+1)$.
Let $f(\bar{a})_{+1,+1}$ be like $f(\bar{a})$, but $(e_1,f(a_1))$ replaced with $(e_1,f(a_1)+1)$ and $(e_2,f(a_2))$ replaced with $(e_2,f(a_2)+1)$. Let $f(\bar{a})_{-,+1}$ be like $f(\bar{a})$ but $(e_2,f(a_2))$ replaced with $(e_2,f(a_2)+1)$.

Then Duplicator plays: $P = (f(\bar{a})_{+1,-}, f(\bar{a})_{+1,+1}, f(\bar{a})_{-,+1})$.
One can see that applying the Maltsev operation to these tuples yields $\bar{a}$, so this choice of $P$ is admissible. 
Now Spoiler picks a $\bar{b} = ((e_1,b_1), \dots, (e_k,b_k)) \in P$. In any case, $\sum_{i \in [k]} b_i \in \{1+\theta,2+\theta\}$. This means that $\bar{b} \in R_{j}^{\tilde{\StructA}(G)}(w)$, so the pebbles $\bar{a} \mapsto \bar{b}$ define a local isomorphism, as desired. 
It remains to show that Duplicator can choose a bijection $f'$ that satisfies the invariant in the next round. In case that $\bar{b} = f(\bar{a})_{+1,+1}$, we just define $f'(e_1,a) \coloneqq (e_1,f(e_1,a)+1)$ for all $a \in \bbZ_4$, and likewise on the edge $e_2$. Then we propagate these two shifts along two disjoint and pebble-free paths to an escape vertex $w'$ that satisfies condition 1 of the invariant for next round. This is possible because all the $k$ pebbles lie on $E(w)$ now, so all other edges are pebble-free, and we can find $2$ disjoint paths because $G$ is 2-connected.

Because $G$ is bipartite, the escape paths have the same length modulo 2 and so, indeed, the bijection $f'$ thus defined has shift value $2$ at $w'$ (cf.\ Lemma \ref{lem:pathIso}). At $w$, we have corrected the error of $2$ in the shift value, which existed by definition of $f$ being good bar $w$.
We furthermore define $f'$ like $f$ on all other edges that are not on any of the two escape paths. 

In case that $\bar{b} = f(\bar{a})_{+1,-}$ or $\bar{b} = f(\bar{a})_{-,+1}$, the situation is more difficult. Assume w.l.o.g.\ that $\bar{b} = f(\bar{a})_{+1,-}$, so the $+1$ has been added to the first entry of the tuple.
We have to define $f'$ so that it satisfies $\bar{b} = f'(\bar{a})$ and so that it defines a local isomorphism at $w$. This is not possible by adding cyclic shifts to $f$, but it requires to invert the current bijection $f$.
For each edge $e_i \in E(w)$, let $f_i \colon \bbZ_4 \to \bbZ_4$ denote the projection of $f$ to the $\bbZ_4$-part of $\{e_i\} \times \bbZ_4$.
We now define for each $e_i \in E(w)$ the bijection $f'_i \colon \bbZ_4 \to \bbZ_4$ that is the projection of the new $f'$ to the respective edge gadgets.
Recall that $\bar{a} = ((e_1,a_1),\dots,(e_k,a_k))$.
Let 
\[
f'_1(x) \coloneqq -f_1(x)+2f_1(a_1)+1.
\]
For every $1 < i \leq k$, let
\[
f'_i(x) \coloneqq -f_i(x)+2f_i(a_i).
\]
It is easy to see that, by definition, we have for every $i \geq 2$: $f'_i(a_i) = f(a_i)$, and $f'_1(a_1) = f(a_1)+1$. So if we define $f' \colon A_G^\Mm \to A_G^\Mm$ so that its restriction to the copy of $\bbZ_4$ in $\{e_i\} \times \bbZ_4$ is $f'_i$, for each $i \in [k]$, then, as desired, $f'(\bar{a}) = f(\bar{a})_{+1,-} = \bar{b}$.
Moreover, the $f'$ thus defined yields a local isomorphism that maps $R_j^{\StructA^\Mm(G)}(w)$ to $R_j^{\tilde{\StructA}^\Mm(G)}$, for each $j \in \{0,1\}$. To see this, let $c_e \in \bbZ_4$, for each $e \in E(w)$ denote the additive shift constant in the rotation or reflection induced by $f$ on $\{e\} \times \bbZ_4$. Likewise, let $c'_e \in \bbZ_4$ be that shift for $f'$.
Then:
\begin{align*}
\sum_{e \in E(w)} c'_e &= - \sum_{e \in E(w)} c_e+1+2\sum_{i \in [k]} f_i(a_i)\\
&= - \sum_{e \in E(w)} c_e+1+2\theta\\
&= -\sum_{e \in E(w)} c_e+3.
\end{align*}
The last equality is due to our assumption above that $\theta \in \{1,3\}$.
Now by property 3 of what it means for $f$ to be good bar $w$, the sum $\sum_{e \in E(w)} c_e$ is precisely $2$ off from yielding a local isomorphism. 
We go through one case as an example: Suppose $f$ is a rotation and $c= \sum_{e \in E(w)} c_e = 0$. Then any local isomorphism from $R_{j}^{\StructA}(G)(w)$ to $R_{j}^{\tilde{\StructA}}(G)(w)$ has shift value $2$ (according to Lemma \ref{lem:localIsomorphisms}, this is the case where $w = \tilde{v}$, and so, any local isomorphism must exchange $R_j(w)$ and $R_{1-j}(w)$). Then, $-\sum_{e \in E(w)} c_e+3 = 3$, which means by Lemma \ref{lem:localIsomorphisms} that $f'$ does $R_j(w)$ and $R_{1-j}(w)$, as desired. 
One can check that for all other values of $\sum_{e \in E(w)} c_e$, a similar effect takes place, and $f'$ always induces a local isomorphism. 
It only remains to extend the definition of $f'$ from $E(w)$ to all edges. Since there exist only $k$ pebbles, all of which lie on $E(w)$ now, we are free to choose $f'$ arbitrarily on every other edge. Indeed, we need to invert $f$ everywhere, so that if $f$ was a rotation on every edge, then $f'$ is a reflection on every edge, and vice versa. 
We simply choose $f'$ in such a way that its sum of shifts is correct at every vertex except at one arbitrarily chosen $w'$ that is not a neighbour of $w$. At $w'$, we ensure that the sum of shifts is off by $2$ from being a local isomorphism, as required by the invariant. 
This is possible by Lemma \ref{lem:tseitinAlmostSolution}: If $f'$ is a rotation on every edge, then its shift values of the edges can be written as a solution to a system of equations where every right hand side is $0$ except at $w'$, where it is $2$; if $f'$ is a reflection on every edge, the system of equations has $1$ in every right hand side except for $w'$, where it is $3$. 

This finishes the subcase where $\theta \in \{1,3\}$. The case where $\theta \in \{0,2\}$ is symmetric. Then, Duplicator chooses $P = (f(\bar{a})_{+3,-}, f(\bar{a})_{+3,+3}, f(\bar{a})_{-,+3})$, and the rest of the above proof goes through similarly, where one has to exchange the role of $1$ and $3$ everywhere.
\end{proof}

Now the \textbf{proof of Theorem \ref{thm:MaltsevSeparation}} is a combination of the preceding results: For every $k \in \bbN$, choose $G$ according to Lemma \ref{lem:baseGraphsForMaltsev}. Then $\StructA^\Mm(G)$ and $\tilde{\StructA}^\Mm(G)$ are not isomorphic (Lemma \ref{lem:nonIsomorphismMaltsevInstances}) but indistinguishable by any sentence in $\Ll^k(\Qq^\Mm_k)$ (Lemma \ref{lem:duplicatorWinningStrategyMaltsev}). Finally, by Corollary \ref{cor:structuresDistinguishableInQk}, there is a sentence in the logic $\Ll^k(\Qq_k)$ that does distinguish $\StructA^\Mm(G)$ from $\tilde{\StructA}^\Mm(G)$.

\end{document}